%===============================================================================
%
% October 2, 2019: GL prepared COCOA final version and arXiv submission
% August 22, 2019: GL prepared for COCOA
% April 25(v6), 2019: GL improved it to 24/13-approx
% April 22(v5), 9, 3 (v3), March 28, 20, 2019: GL finally satisfies the 15/8-approx
% March 13, 2019: GL lower standard to 15/8 (v1)
% February 9, 2019: GL picked up in Takasaka
% December 2, 2018: GL cannot continue, incorrect statements on 8/5
% August 31, 2018: Yong prepared the draft before left, 3/4, 3/2, 8/5
%===============================================================================
\documentclass[twoside,11pt]{article}%{proc}
\usepackage{amsmath,amssymb,latexsym,graphicx,hyperref,epstopdf,subcaption,color,multirow}%,lscape,subfigure}
\usepackage{algpseudocode,algorithm,epstopdf}
\usepackage[capitalise]{cleveref}
%\numberwithin{figure}{section}
%\renewcommand\floatpagefraction{.9}
%\renewcommand\topfraction{.9}
%\renewcommand\bottomfraction{.9}
%\renewcommand\textfraction{.1}
%\setcounter{totalnumber}{50}
%\setcounter{topnumber}{50}
%\setcounter{bottomnumber}{50}
%\intextsep 2pt
%\abovecaptionskip 2pt
%\belowcaptionskip -10pt
%\textfloatsep=1pt
%\floatsep=-10pt
%==========================================================================
% setting the page size...
%
 \setlength{\evensidemargin}{+0.00in}   %1.00in: setting for 6.50in
 \setlength{\oddsidemargin} {+0.00in}
 \setlength{\textwidth}     {+6.50in}
 \setlength{\topmargin}     {-0.50in}   %setting for 9.00in
 \setlength{\textheight}    {+9.00in}

\normalsize
\bibliographystyle{abbrv} %plain}
%==============================================================================
%
% changing the eqn number system...
%
%\makeatletter
%\@addtoreset{equation}{section}
%\makeatother
%\renewcommand{\theequation}{\thesection.\arabic{equation}}
%==============================================================================
%
% some commonly used environments...
%
\newtheorem{theorem}{Theorem} %[section]
\newtheorem{lemma}{Lemma} %[section]
\newtheorem{corollary}{Corollary}
\newtheorem{proposition}{Proposition}

\newtheorem{definition}{Definition}
 %[section]
 %[section]
 %[section]
%
\newenvironment{proof}{{\sc Proof. }}{\hfill$\Box$\vspace{0.2in}}
%\numberwithin{figure}{section}
%==============================================================================
% shorthands for writing algorithms...
%
\newcommand{\OPT}{\mbox{\sc OPT}}

 % thin space, limits underneath in displays
%\newcommand{\bFOR}     {{\sf for }}

%==============================================================================
\pagestyle{myheadings}
\thispagestyle{plain}
%
% running title on odd pages...author names on even pages...
%
\markboth
{\sc Y.~Chen {\it et al.}}
{\sc Approximating balanced connected graph partition (v: \today)}
\title{Approximation algorithms for maximally balanced connected graph partition}
\author{Yong~Chen\thanks{Department of Mathematics, Hangzhou Dianzi University.  Hangzhou, China.
	\texttt{\{chenyong,anzhang\}@hdu.edu.cn}}
\and
	Zhi-Zhong~Chen\thanks{Division of Information System Design, Tokyo Denki University.  Saitama, Japan.
	\texttt{zzchen@mail.dendai.ac.jp}}
\and
	Guohui~Lin\thanks{Department of Computing Science, University of Alberta.
	Edmonton, Alberta T6G 2E8, Canada.
	\texttt{guohui@ualberta.ca}}
	\thanks{Correspondence author.}
\and
	Yao~Xu\thanks{Department of Computer Science, Kettering University.  Flint, MI, USA.
	\texttt{yxu@kettering.edu}}
\and
	An~Zhang$^*$}%
\date{\today}
\begin{document}
\maketitle
\begin{abstract}
%==================================================================================================
Given a simple connected graph $G = (V, E)$,
we seek to partition the vertex set $V$ into $k$ non-empty parts such that the subgraph induced by each part is connected,
and the partition is maximally balanced in the way that the maximum cardinality of these $k$ parts is minimized.
We refer this problem to as {\em min-max balanced connected graph partition} into $k$ parts and denote it as {\sc $k$-BGP}.
The general vertex-weighted version of this problem on trees has been studied since about four decades ago, which admits a linear time exact algorithm;
the vertex-weighted {\sc $2$-BGP} and {\sc $3$-BGP} admit a $5/4$-approximation and a $3/2$-approximation, respectively;
but no approximability result exists for {\sc $k$-BGP} when $k \ge 4$, except a trivial $k$-approximation.
In this paper, we present another $3/2$-approximation for our cardinality {\sc $3$-BGP} and then extend it to become a $k/2$-approximation for {\sc $k$-BGP},
for any constant $k \ge 3$.
Furthermore, for {\sc $4$-BGP}, we propose an improved $24/13$-approximation.
To these purposes, we have designed several local improvement operations, which could be useful for related graph partition problems. 

%\ 
%
%ACM Computing Classification System (2012): Graph algorithms analysis; Approximation algorithms analysis
%
%\ 
%
%Mathematics Subject Classification (2010): 05C70, 68W25, 05C85

\paragraph{Keywords:}
Graph partition; induced subgraph; connected component; local improvement; approximation algorithm % mandatory: Please provide 1-5 keywords
%==================================================================================================
\end{abstract}

\section{Introduction}
%==================================================================================================
We study the following graph partition problem:
given a connected graph $G = (V, E)$,
we want to partition the vertex set $V$ into $k$ non-empty parts denoted as $V_1, V_2$, $\ldots$, $V_k$
such that the subgraph $G[V_i]$ induced by each part $V_i$ is connected,
and the cardinalities (or called sizes) of these $k$ parts, $|V_1|, |V_2|, \ldots, |V_k|$,
are maximally balanced in the way that the maximum cardinality is minimized.
We call this problem as {\em min-max Balanced connected Graph $k$-Partition} and denote it as {\sc $k$-BGP} for short.
{\sc $k$-BGP} and several closely related problems with various applications
(in image processing, clustering, computational topology, information and library processing, to name a few)
have been investigated in the literature.

Dyer and Frieze~\cite{DF85} proved the NP-hardness for {\sc $k$-BGP} on bipartite graphs, for any fixed $k \ge 2$.
When the objective is to maximize the minimum cardinality, denoted as {\sc max-min $k$-BGP},
Chleb{\'{i}}kov{\'{a}}~\cite{Chl96} proved its NP-hardness on bipartite graphs (again),
and that for any $\epsilon > 0$ it is NP-hard to approximate the maximum within an absolute error guarantee of $|V|^{1 - \epsilon}$.
Chataigner et al.~\cite{CSW07} proved further the strong NP-hardness for {\sc max-min $k$-BGP} on $k$-connected graphs, for any fixed $k \ge 2$,
and that unless P $=$ NP, there is no $(1 + \epsilon)$-approximation algorithm for {\sc max-min $2$-BGP} problem, where $\epsilon \le 1/|V|^2$;
and they showed that when $k$ is part of the input, the problem, denoted as {\sc max-min BGP}, cannot be approximated within $6/5$ unless P $=$ NP.

When the vertices are non-negatively weighted, the weight of a part is the total weight of the vertices inside,
and the objective of vertex-weighted {\sc $k$-BGP} (vertex weighted {\sc max-min $k$-BGP}, respectively)
becomes to minimize the maximum (maximize the minimum, respectively) weight of the $k$ parts.
The vertex weighted {\sc $k$-BGP} problem is also called the {\em minimum spanning $k$-forest} problem in the literature.
Given a vertex-weighted connected graph $G = (V, E)$, a {\em spanning $k$-forest} is a collection of $k$ trees $T_1, T_2, \ldots, T_k$,
such that each tree is a subgraph of $G$ and every vertex of $V$ appears in exactly one tree.
The weight of the spanning $k$-forest $\{T_1, T_2, \ldots, T_k\}$ is defined as the maximum weight of the $k$ trees,
and the weight of the tree $T_i$ is measured as the total weight of the vertices in $T_i$.
The objective of this problem is to find a {\em minimum} weight spanning $k$-forest of $G$.
The equivalence between these two problems is seen by the fact that a spanning tree is trivial to compute for a connected graph.
The minimum spanning $k$-forest problem is defined on general graphs, but was studied only on trees in the literature~\cite{PS81,BSP82,Fre91,FZ17},
which admits an $O(|V|)$-time exact algorithm.

Not too many positive results from approximation algorithms perspective exist in the literature.
Chleb{\'{i}}kov{\'{a}}~\cite{Chl96} gave a tight $4/3$-approximation algorithm for the vertex-weighted {\sc max-min $2$-BGP} problem;
Chataigner et al.~\cite{CSW07} proposed a $2$-approximation algorithm for vertex-weighted {\sc max-min $3$-BGP} on $3$-connected graphs,
and a $2$-approximation algorithm for vertex-weighted {\sc max-min $4$-BGP} on $4$-connected graphs.
Approximation algorithms for the vertex-weighted {\sc $k$-BGP} problem on some {\em special classes} of graphs can be found in \cite{Wu11,Wu13,WZW13}.
Recently, on general vertex weighted graphs, Chen et al.~\cite{CCC19} showed that
the algorithm by Chleb{\'{i}}kov{\'{a}}~\cite{Chl96} is also a $5/4$-approximation algorithm for the vertex-weighted {\sc $2$-BGP} problem;
and they presented a $3/2$-approximation algorithm for the vertex-weighted {\sc $3$-BGP} problem
and a $5/3$-approximation algorithm for the vertex-weighted {\sc max-min $3$-BGP} problem.

Motivated by an expensive computation performed by the computational topology software RIVET~\cite{LW15}, 
Madkour et al.~\cite{MNW17} introduced the edge-weighted variant of the {\sc $k$-BGP} problem, denoted as {\sc $k$-eBGP}.
Given an edge non-negatively weighted connected graph $G = (V, E)$,
the weight of a tree subgraph $T$ of $G$ is measured as the total weight of the edges in $T$,
and the weight of a spanning $k$-forest $\{T_1, T_2, \ldots, T_k\}$ is defined as the maximum weight among the $k$ trees.
The {\sc $k$-eBGP} problem is to find a {\em minimum} weight spanning $k$-forest of $G$,
and it can be re-stated as asking for a partition of the vertex set $V$ into $k$ non-empty parts $V_1, V_2, \ldots, V_k$
such that for each part $V_i$ the induced subgraph $G[V_i]$ is connected and its weight is measured as the weight of the minimum spanning tree of $G[V_i]$,
with the objective to minimize the maximum weight of the $k$ parts.
Madkour et al.~\cite{MNW17} showed that the {\sc $k$-eBGP} problem is NP-hard on general graphs for any fixed $k \ge 2$,
and proposed two $k$-approximation algorithms.
Vaishali et al.~\cite{VAP18} presented an $O(k |V|^3)$-time exact algorithm when the input graph is a tree, 
and proved that the problem remains NP-hard on edge uniformly weighted (or unweighted) graphs.
It follows that our {\sc $k$-BGP} problem is NP-hard (again), for any fixed $k \ge 2$.
However, the two $k$-approximation algorithms for {\sc $k$-eBGP} do not trivially work for our {\sc $k$-BGP} problem.

There are works more distantly related to ours.
Andersson et al.~\cite{AGL03} considered the special case of the {\sc $k$-eBGP} problem that arises from applications in shipbuilding industry,
where the vertices are points in the two-dimensional plane and the weight of an edge is the Euclidean distance between the two endpoints.
They showed that this special case remains NP-hard for any constant $k \ge 2$ and presented an $O(|V| \log |V|)$-time approximation algorithm,
which has a worst-case performance ratio of $\frac 43 + \epsilon$ when $k = 2$, and a ratio of $2 + \epsilon$ when $k \ge 3$, for any $\epsilon > 0$.

In a slightly more general case where the input graph is an edge-weighted complete graph and the non-negative edge weights satisfy triangle inequalities,
Guttmann-Beck and Hassin~\cite{GH97} considered a constrained version of the {\sc $k$-eBGP} problem
in which the vertex set $V$ must be partitioned into $k$ equal-sized parts.
They proved that this constrained variant
(as well as another objective~\cite{GH98} to minimize the total weight of the $k$ trees inside the spanning $k$-forest)
is NP-hard even for $k = 2$ and presented an $O(|V|^3)$-time $(k + \epsilon)$-approximation algorithm, for any $\epsilon > 0$.
Motivated by applications from wireless sensor networks, cooperative robotics and music information retrieval,
Caraballo et al.~\cite{CDK18} investigated an alternative quality measure of a part of the vertex set partition,
which is the ratio between the minimum edge weight of its outgoing edges and the maximum edge weight of its minimum spanning tree;
they proposed an $O(k^2 |V|^3)$-time exact algorithm for this variant.

An et al.~\cite{AFK17} studied a tree partition problem to remove at most a given $b$ edges from the input tree,
so that the resulting components can be grouped into $k$ groups of desired orders.
They showed that the problem is NP-complete even if these $k$ groups have the same order of $|V|/k$.
Some other graph partition problems that are more distantly related to our {\sc $k$-BGP} problem have been examined by Cordone and Maffioli~\cite{CM04}.
Kanj et al.~\cite{KKS18} studied a class of graph bi-partition problems ({\it i.e.}, $k = 2$) from fixed-parameter algorithms perspective.

This paper focuses on designing approximation algorithms for the vertex uniformly weighted (or unweighted) {\sc $k$-BGP} problem for a fixed $k \ge 4$,
{\it i.e.}, to minimize the maximum cardinality of the $k$ parts in a partition.
One can probably easily see a trivial $k$-approximation algorithm, since the maximum cardinality is always at least one $k$-th of the order of the input graph.
We remark that the $3/2$-approximation algorithm for the vertex-weighted {\sc $3$-BGP} problem by Chen et al.~\cite{CCC19}
could not be extended trivially for {\sc $k$-BGP} for $k \ge 4$.
After some preliminaries introduced in Section 2,
we present in Section 3 another $3/2$-approximation algorithm for {\sc $3$-BGP} based on two intuitive local improvement operations,
and extend it to become a $k/2$-approximation algorithm for {\sc $k$-BGP}, for any fixed $k \ge 4$.
In Section 4, we introduce several complex local improvement operations for {\sc $4$-BGP},
and use them to design a $24/13$-approximation algorithm.
We conclude the paper in Section 5.

\section{Preliminaries}
%==================================================================================================
Recall that the {\sc $k$-BGP} problem seeks for a partition of the vertex set $V$ of the given connected graph $G = (V, E)$
into $k$ non-empty subsets $V_1, V_2, \ldots, V_k$ such that $G[V_i]$ is connected for every $i = 1, 2, \ldots, k$, and $\max_{1 \le i \le k} |V_i|$ is minimized.
For convenience, we call $\max_{1 \le i \le k} |V_i|$ the {\em size} of the partition $\{V_1, V_2, \ldots, V_k\}$.
In the rest of the paper, when we know these cardinalities, we always assume they are sorted into $0 < |V_1| \le |V_2| \le \ldots \le |V_k|$,
and thus the size of the partition is $|V_k|$.

For two partitions $\{V_1, V_2, \ldots, V_k\}$ and $\{V'_1, V'_2, \ldots, V'_k\}$,
if their sizes $|V'_k| < |V_k|$, or if $|V'_k| = |V_k|$ and $|V'_{k-1}| < |V_{k-1}|$,
then we say the partition $\{V'_1, V'_2, \ldots, V'_k\}$ is {\em better} than the partition $\{V_1, V_2, \ldots, V_k\}$.

For any two disjoint subsets $V_1, V_2 \subset V$, $E(V_1, V_2) \subseteq E$ denotes the edge subset between $V_1$ and $V_2$;
if $E(V_1, V_2) \ne \emptyset$, then we say $V_1$ and $V_2$ are {\em adjacent}.
If additionally both $G[V_1]$ and $G[V_2]$ are connected, then we also say $G[V_1]$ and $G[V_2]$ are {\em adjacent}.%
\footnote{Basically, we reserve the word ``connected'' for a graph and the word ``adjacent'' for two objects with at least one edge between them.}

We note that obtaining an initial feasible partition of $V$ is trivial in $O(|V| + |E|)$ time, as follows:
one first constructs a spanning tree $T$ of $G$,
then arbitrarily removes $k-1$ edges from $T$ to produce a forest of $k$ trees $T_1, T_2, \ldots, T_k$,
and lastly sets $V_i$ to be the vertex set of $T_i$.
The following approximation algorithms all start with a feasible partition and iteratively apply some local improvement operations to improve it.
For $k = 3$, there are only two intuitive local improvement operations and the performance analysis is relatively simple;
for $k = 4$, we introduce several more local improvement operations and the performance analysis is more involved,
though the key ideas in the design and analysis remain intuitive.

Given a connected graph $G = (V, E)$, let $n = |V|$ denote its order.
Let $\OPT$ denote the size of an optimal $k$-part partition of the vertex set $V$.
The following lower bound on $\OPT$ is trivial,
and thus the {\sc $k$-BGP} problem admits a trivial $k$-approximation.

\begin{lemma}
\label{lemma01}
Given a connected graph $G = (V, E)$, $\OPT \ge \frac 1k n$.
\end{lemma}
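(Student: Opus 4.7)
The plan is to invoke a direct averaging (pigeonhole) argument, so the proof should be only a couple of lines. Let $\{V_1^*, V_2^*, \ldots, V_k^*\}$ denote an optimal partition achieving size $\OPT = \max_{1 \le i \le k} |V_i^*|$. Since the parts are disjoint and together exhaust $V$, we have $\sum_{i=1}^k |V_i^*| = n$. The maximum of $k$ nonnegative numbers is always at least their average, so
\[
\OPT \;=\; \max_{1 \le i \le k} |V_i^*| \;\ge\; \frac{1}{k}\sum_{i=1}^k |V_i^*| \;=\; \frac{n}{k},
\]
which is exactly the claimed lower bound.

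There is no real obstacle here: the connectivity of $G$ and the requirement that each $G[V_i^*]$ be connected play no role in the argument, which uses only the fact that the $V_i^*$ form a partition of an $n$-element set. The one thing worth being careful about is the indexing convention stated just above the lemma (parts sorted by nondecreasing cardinality, so $\OPT = |V_k^*|$); either way the bound $|V_k^*| \ge n/k$ is immediate from the averaging step. Given how short the argument is, I would simply write it inline rather than breaking it into steps.
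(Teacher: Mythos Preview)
Your proof is correct; the paper itself does not spell out an argument but simply labels the bound as ``trivial,'' and your averaging/pigeonhole line is exactly the trivial argument it has in mind.
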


\section{A $k/2$-approximation for {\sc $k$-BGP}, for a fixed $k \ge 3$}
%==================================================================================================
We consider first $k = 3$, and let $\{V_1, V_2, V_3\}$ denote an initial feasible tripartition (with $|V_1| \le |V_2| \le |V_3|$).
Our goal is to reduce the cardinality of $V_3$ to be no larger than $\frac 12 n$.
It will then follow from Lemma~\ref{lemma01} that the achieved tripartition is within $\frac 32$ of the optimum.

Recently, Chen et al.~\cite{CCC19} presented a $3/2$-approximation algorithm for the vertex-weighted {\sc $3$-BGP} problem,
by noticing that a feasible tripartition ``cuts'' into at most two blocks (that is, maximal $2$-connected components) in the input graph.
It is surely a $3/2$-approximation algorithm for our vertex unweighted {\sc $3$-BGP} problem too,
but no better analysis can be achieved since the algorithm (re-)assigns weights to the cut vertices.
Furthermore, it is noted by the authors that the algorithm cannot be extended trivially for {\sc $k$-BGP} for $k \ge 4$,
for which one has to deal with vertex-weighted graphs having exactly three blocks.

Our new $3/2$-approximation algorithm for {\sc $3$-BGP}, denoted as {\sc Approx-$3$} and detailed in the following, does not deal with blocks,
and it can be extended to become a $k/2$-approximation for {\sc $k$-BGP} for any fixed $k \ge 4$.

Clearly, during the execution of the algorithm {\sc Approx-$3$},
if $|V_3| \le \frac 12 n$, then we may terminate and return the achieved tripartition;
otherwise, we will execute one of the two local improvement operations called {\em Merge} and {\em Pull}, defined in the following, whenever applicable.

Since the input graph $G$ is connected, for any feasible tripartition $\{V_1, V_2, V_3\}$, $V_3$ is adjacent to at least one of $V_1$ and $V_2$.

\begin{definition}
\label{def01}
Operation {\em Merge($V_1, V_2$):}
\begin{itemize}
\parskip=0pt
\item
	{\em precondition:}  $|V_3| > \frac 12 n$;
	$V_1$ and $V_2$ are adjacent;
\item
	{\em effect:} the operation produces a new tripartition $\{V_1 \cup V_2, V_{31}, V_{32}\}$,
	where $\{V_{31}, V_{32}\}$ is an arbitrary feasible bipartition of $V_3$.
\end{itemize}
\end{definition}

\begin{lemma}
\label{lemma02}
Given a connected graph $G = (V, E)$ and a tripartition $\{V_1, V_2, V_3\}$ of the vertex set $V$ with $|V_3| > \frac 12 n$,
the achieved partition by the operation {\em Merge($V_1, V_2$)} is feasible and better.
\end{lemma}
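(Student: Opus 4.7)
The plan is to verify the two assertions separately: that the resulting collection $\{V_1 \cup V_2, V_{31}, V_{32}\}$ is a feasible tripartition, and that its size (and hence its ``better''-ness in the sense defined right before Definition~\ref{def01}) strictly improves upon $|V_3|$.

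For feasibility, the three parts are clearly non-empty and disjoint with union $V$, so only the three induced-subgraph connectivities need to be checked. The part $G[V_1 \cup V_2]$ is connected because $G[V_1]$ and $G[V_2]$ were already connected in the original feasible tripartition and, by the precondition of Merge, $V_1$ and $V_2$ are adjacent, so at least one edge of $E(V_1,V_2)$ bridges the two induced subgraphs. For the pair $\{V_{31}, V_{32}\}$, I would first argue that a feasible bipartition of $V_3$ exists: since $G[V_3]$ is connected (inherited from the initial feasibility) and $|V_3| > \tfrac{n}{2} \ge 2$ (because $n \ge 3$ for any feasible tripartition), we may take any spanning tree of $G[V_3]$, delete one of its edges, and define $V_{31}, V_{32}$ as the vertex sets of the two resulting subtrees; each is non-empty, and each induces a connected subgraph of $G$. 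The definition then allows an arbitrary such choice, and so $\{V_{31}, V_{32}\}$ is legitimate.

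For the improvement, it suffices to show that the maximum cardinality among the three new parts is strictly less than $|V_3|$, because then the new partition is strictly better regardless of the second-largest comparison. I would bound each new part in turn: $|V_1 \cup V_2| = |V_1| + |V_2| = n - |V_3| < n - \tfrac{n}{2} = \tfrac{n}{2} < |V_3|$, using the precondition $|V_3| > \tfrac{n}{2}$; meanwhile $|V_{31}|, |V_{32}| \ge 1$ and $|V_{31}| + |V_{32}| = |V_3|$ force each of them to be at most $|V_3| - 1 < |V_3|$. Combining these, the size of the new tripartition is strictly smaller than $|V_3|$, which is precisely the size of the old tripartition.

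No step here should be a serious obstacle; the only mildly subtle point is confirming the existence of a feasible bipartition of $V_3$, and that is handled by the spanning-tree argument together with the size lower bound $|V_3|\ge 2$ implicit in $|V_3|>n/2$ on a graph with $n\ge 3$ vertices.
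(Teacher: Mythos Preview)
Your proposal is correct and follows essentially the same approach as the paper's proof, which simply notes that $|V_1 \cup V_2| = |V_1| + |V_2| < \tfrac12 n < |V_3|$ and that $|V_{31}|, |V_{32}| < |V_3|$. You supply more detail on feasibility (connectivity of $G[V_1\cup V_2]$ and existence of a bipartition of $V_3$) than the paper does, but the underlying argument is identical.
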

\begin{proof}
Note from the precondition of the operation Merge($V_1, V_2$) that the size of the new part $V_1 \cup V_2$ is $|V_1| + |V_2| < \frac 12 n < |V_3|$;
the sizes of the other two new parts $V_{31}$ and $V_{32}$ partitioned from $V_3$ are clearly strictly less than $|V_3|$.
This proves the lemma.
%\qed
\end{proof}

\begin{definition}
\label{def02}
Operation {\em Pull($U \subset V_3, V_i$)}, where $i \in \{1, 2\}$,
\begin{itemize}
\parskip=0pt
\item
	{\em precondition:}  $|V_3| > \frac 12 n$;
	both $G[U]$ and $G[V_3 \setminus U]$ are connected, $U$ is adjacent to $V_i$, and $|V_i| + |U| < |V_3|$;
\item
	{\em effect:} the operation produces a new tripartition $\{V_3 \setminus U, V_i \cup U, V_{3-i}\}$ (see for an illustration in Figure~\ref{fig01}).
\end{itemize}
\end{definition}

\begin{figure}[htpb]
\begin{center}
  \setlength{\unitlength}{0.68bp}%
  \begin{picture}(271.95, 165.60)(0,0)
  \put(0,0){\includegraphics[scale=0.7]{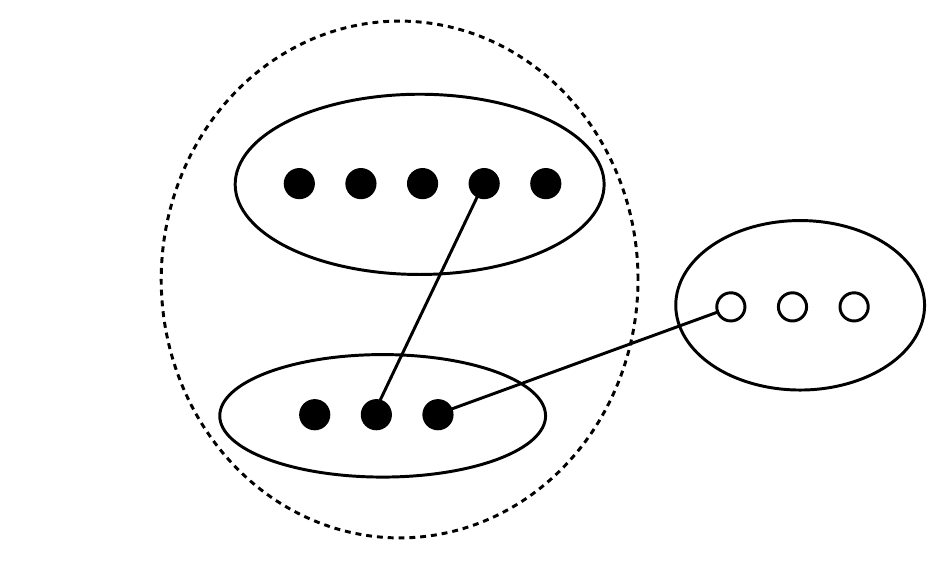}}
  \put(104.02,15.44){\fontsize{11.38}{13.66}\selectfont $U$}
  \put(145.64,8.12){\fontsize{11.38}{13.66}\selectfont $V_3$}
  \put(223.81,42.06){\fontsize{11.38}{13.66}\selectfont $V_i$}
  \end{picture}%
\end{center}
\caption{An illustration of the operation Pull($U \subset V_3, V_i$) that transforms the tripartition $\{V_1, V_2, V_3\}$ to a better tripartition
	$\{V_3 \setminus U, V_i \cup U, V_{3-i}\}$.\label{fig01}}
\end{figure}

\begin{lemma}
\label{lemma03}
Given a connected graph $G = (V, E)$ and a tripartition $\{V_1, V_2, V_3\}$ of the vertex set $V$ with $|V_3| > \frac 12 n$,
the achieved partition by the operation {\em Pull($U \subset V_3, V_i$)} is feasible and better.
\end{lemma}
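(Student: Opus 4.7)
The plan is to verify the two conditions that make a new partition qualify as ``feasible and better'' in the sense of the paper, namely (i) each new part induces a connected subgraph, and (ii) the size, using the lexicographic comparison in the preliminaries, strictly decreases.

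For feasibility, I would check each of the three new parts individually. The part $V_3\setminus U$ is connected directly by the precondition on $U$. The part $V_{3-i}$ is unchanged from the original feasible tripartition, so $G[V_{3-i}]$ remains connected. The only interesting case is $V_i\cup U$: here $G[V_i]$ is connected (from the original partition), $G[U]$ is connected (precondition), and since $U$ is adjacent to $V_i$ there is at least one edge of $E(V_i,U)$ joining the two connected induced subgraphs. A standard one-line argument then yields that $G[V_i\cup U]$ is connected.

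For the ``better'' condition, the strategy is to show that every one of the three new part sizes is strictly smaller than $|V_3|$, which is the current size of the partition; then the new size (the maximum of the three) is strictly less than $|V_3|$, so the new tripartition is better by the first clause of the definition of ``better.'' The three bounds needed are: $|V_3\setminus U|=|V_3|-|U|<|V_3|$ because adjacency of $U$ to $V_i$ forces $|U|\ge 1$; $|V_i\cup U|=|V_i|+|U|<|V_3|$ directly from the precondition; and $|V_{3-i}|<|V_3|$. The last inequality is the one place where a small argument is needed rather than just reading off a precondition: from $|V_1|+|V_2|+|V_3|=n$ and the hypothesis $|V_3|>\tfrac12 n$, one obtains $|V_1|+|V_2|<\tfrac12 n<|V_3|$, and in particular $|V_{3-i}|\le |V_1|+|V_2|<|V_3|$.

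The only ``obstacle,'' if one can call it that, is remembering to use the global hypothesis $|V_3|>\tfrac12 n$ to bound the untouched part $V_{3-i}$; the precondition of Pull itself does not directly constrain that part, so without this observation one would only be able to compare the maximum weakly. Once that step is made, no further case analysis on whether $i=1$ or $i=2$ is required, and we do not even need the secondary tie-breaking clause in the definition of ``better.''
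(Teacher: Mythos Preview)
Your proposal is correct and follows essentially the same approach as the paper's own proof: verify feasibility directly from the preconditions of the Pull operation, then show each of the three new parts has cardinality strictly less than $|V_3|$, using $|V_i|+|U|<|V_3|$ from the precondition and $|V_{3-i}|<\tfrac12 n<|V_3|$ from the global hypothesis. Your write-up is simply more explicit than the paper's two-sentence version.
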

\begin{proof}
From the precondition of the operation Pull($U \subset V_3, V_i$)
we conclude that the achieved new tripartition $\{V_3 \setminus U, V_i \cup U, V_{3-i}\}$ is feasible;
also, since $|V_i| + |U| < |V_3|$ and $|V_{3-i}| < \frac 12 n < |V_3|$, its size is strictly less than $|V_3|$.
This proves the lemma.
%\qed
\end{proof}

\begin{lemma}
\label{lemma04}
Given a connected graph $G = (V, E)$,
when none of the {\em Merge} and {\em Pull} operations is applicable to the tripartition $\{V_1, V_2, V_3\}$ of the vertex set $V$ with $|V_3| > \frac 12 n$,
\begin{itemize}
\parskip=0pt
\item[1)]
	$|V_1| + |V_2| < \frac 12 n$ (and thus $|V_1| < \frac 14 n$);
	$V_1$ and $V_2$ aren't adjacent (and thus both are adjacent to $V_3$);
\item[2)]
	let $(u, v) \in E(V_3, V_1)$;
	then $G[V_3 \setminus \{u\}]$ is disconnected;
	suppose $G[V^u_{31}], G[V^u_{32}]$, $\ldots, G[V^u_{3\ell}]$ are the components in $G[V_3 \setminus \{u\}]$,
	then for every $i$, $|V^u_{3i}| \le |V_1|$, and $V^u_{3i}$ and $V_1$ aren't adjacent;
\item[3)]
	no vertex of $V_1 \cup V_2$ is adjacent to any vertex of $V_3$ other than $u$.
\end{itemize}
\end{lemma}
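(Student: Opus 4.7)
The plan is to derive each item in turn by considering which specific instantiation of \emph{Merge} or \emph{Pull} would be applicable, and then reading off what must fail in its precondition.

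\medskip

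\emph{For item 1).} The size inequality is immediate from $|V_1|+|V_2|+|V_3| = n$ and $|V_3| > \tfrac12 n$; and $|V_1| < \tfrac14 n$ follows from $|V_1| \le |V_2|$. If $V_1$ and $V_2$ were adjacent, then \emph{Merge}$(V_1,V_2)$ would satisfy its precondition (since $|V_3| > \tfrac12 n$), contradicting our assumption; hence they are non-adjacent. Since $G$ is connected, each of $V_1$ and $V_2$ being a non-empty proper vertex subset must have at least one outgoing edge, and since that edge cannot go between $V_1$ and $V_2$, both must be adjacent to $V_3$.

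\medskip

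\emph{For item 2).} Instantiate \emph{Pull}$(\{u\}, V_1)$. Here $G[\{u\}]$ is trivially connected, $\{u\}$ is adjacent to $V_1$ via $v$, and $|V_1|+1 \le |V_1|+|V_2| < \tfrac12 n < |V_3|$ by item 1. The only precondition left to fail is the connectedness of $G[V_3 \setminus \{u\}]$, so it must be disconnected, producing the components $V^u_{31},\dots,V^u_{3\ell}$. A key preliminary observation is that since $G[V_3]$ is connected while $G[V_3 \setminus \{u\}]$ is not, $u$ must be adjacent to at least one vertex in each component $V^u_{3j}$; consequently $G[V_3 \setminus V^u_{3i}] = G[\{u\} \cup \bigcup_{j \ne i} V^u_{3j}]$ is connected for every $i$. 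Then the size bound $|V^u_{3i}| \le |V_1|$ falls out of \emph{Pull}$(V_3 \setminus V^u_{3i}, V_1)$: its precondition is satisfied except possibly for $|V_1| + |V_3 \setminus V^u_{3i}| < |V_3|$, which rearranges to $|V_1| < |V^u_{3i}|$; since \emph{Pull} is not applicable, this must fail. The non-adjacency of $V^u_{3i}$ to $V_1$ then comes from \emph{Pull}$(V^u_{3i}, V_1)$: the size condition $|V_1| + |V^u_{3i}| \le 2|V_1| < \tfrac12 n < |V_3|$ is already ensured by item 1 and the just-established bound $|V^u_{3i}| \le |V_1|$, so the only precondition that can fail is adjacency.

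\medskip

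\emph{For item 3).} Item 2 immediately implies that no vertex of $V_3$ other than $u$ is adjacent to $V_1$, so it remains to treat $V_2$. The subtle part, which I expect to be the main obstacle, is ruling out the a priori possibility that $V_2$ is cut off from $V_3$ through a different vertex $u' \ne u$. I will handle it as follows: pick any edge $(u', v') \in E(V_3, V_2)$, and apply the entire argument of item 2 with the roles of $V_1$ and $V_2$ swapped, using $u'$ in place of $u$; this yields that the components of $G[V_3 \setminus \{u'\}]$ all satisfy the analogous size and non-adjacency conditions with respect to $V_2$, so $u'$ is the unique vertex of $V_3$ adjacent to $V_2$. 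If $u' \ne u$, then $u'$ lies in some component $V^u_{3i}$, making $V^u_{3i}$ adjacent to $V_2$; but then \emph{Pull}$(V^u_{3i}, V_2)$ would apply, since $|V_2| + |V^u_{3i}| \le |V_2| + |V_1| < \tfrac12 n < |V_3|$ by item 1 and item 2, contradicting the hypothesis. Hence $u' = u$, completing the proof.
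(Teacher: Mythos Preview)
Your proof is correct and follows essentially the same strategy as the paper: rule out each \emph{Merge}/\emph{Pull} instantiation in turn to force the structural conclusions. Your item~2 is in fact slightly more careful than the paper's, since you explicitly justify that $G[V_3 \setminus V^u_{3i}]$ is connected (the paper leaves this implicit).

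One small point in item~3: the detour through ``apply the entire argument of item~2 with $V_1$ and $V_2$ swapped'' does not quite go through as stated. The non-adjacency step in your item~2 used $|V_1| + |V^u_{3i}| \le 2|V_1| < \tfrac12 n$, and after swapping this becomes $2|V_2| < \tfrac12 n$, which item~1 does \emph{not} guarantee (only $|V_1|+|V_2| < \tfrac12 n$). Fortunately this intermediate claim is unnecessary: your final sentence already does all the work. You picked an \emph{arbitrary} edge $(u',v') \in E(V_3,V_2)$ and showed directly that $u' \ne u$ would place $u'$ in some $V^u_{3i}$, enabling \emph{Pull}$(V^u_{3i},V_2)$ via $|V^u_{3i}| + |V_2| \le |V_1| + |V_2| < |V_3|$. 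That alone forces $u' = u$ for every such $u'$, which is exactly item~3 --- and is precisely the paper's argument. So you can simply delete the swap step.
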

\begin{proof}
See for an illustration in Figure~\ref{fig02}.

\begin{figure}[htpb]
\begin{center}
  \setlength{\unitlength}{0.7bp}%
  \begin{picture}(284.85, 160.51)(0,0)
  \put(0,0){\includegraphics[scale=0.7]{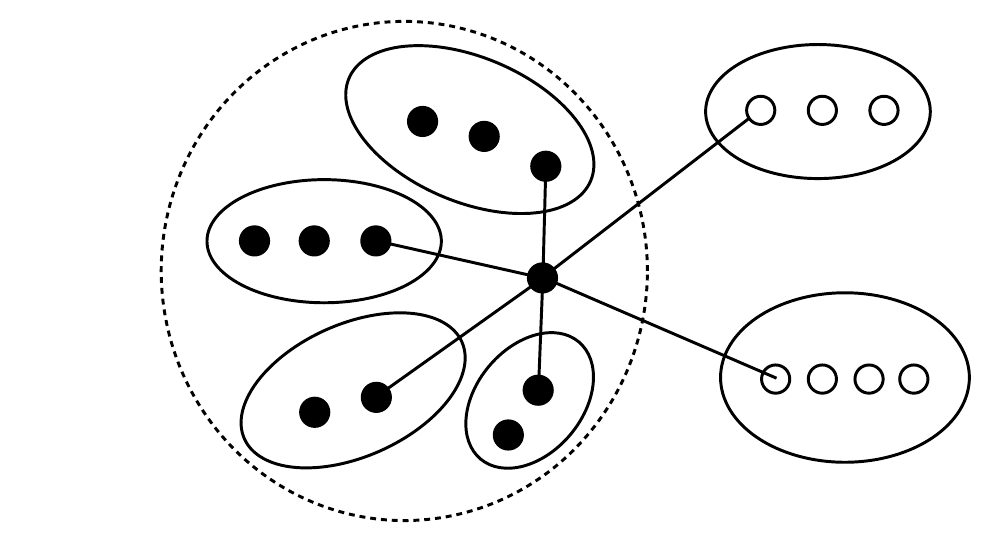}}
  \put(164.21,79.94){\fontsize{11.38}{13.66}\selectfont $u$}
  \put(145.64,8.12){\fontsize{11.38}{13.66}\selectfont $V_3$}
  \put(232.41,93.66){\fontsize{11.38}{13.66}\selectfont $V_1$}
  \put(236.71,16.26){\fontsize{11.38}{13.66}\selectfont $V_2$}
  \end{picture}%
\end{center}
\caption{An illustration of the connectivity configuration of the graph $G = (V, E)$,
	with respect to the tripartition $\{V_1, V_2, V_3\}$ and $|V_3| > \frac 12 n$,
	on which no Merge or Pull operation is applicable.\label{fig02}}
\end{figure}

From $|V_3| > \frac 12 n$, we know $|V_1| + |V_2| < \frac 12 n$ and thus $|V_1| < \frac 14 n$.
Since no Merge operation is possible, $V_1$ and $V_2$ aren't adjacent and consequently they both are adjacent to $V_3$.
This proves Item 1).

Item 2) can be proven similarly as Lemma~\ref{lemma03}.
If $G[V_3 \setminus \{u\}]$ were connected, then it would enable the operation Pull($\{u\} \subset V_3, V_1$), assuming non-trivially $n \ge 5$;
secondly, if $|V^u_{3i}| > |V_1|$ for some $i$, then it would enable the operation Pull($V_3 \setminus V^u_{3i} \subset V_3, V_1$),
since $|V_3 \setminus V^u_{3i}| + |V_1| < |V_3|$;
lastly, if $V^u_{3i}$ and $V_1$ were adjacent for some $i$, then it would enable the operation Pull($V^u_{3i} \subset V_3, V_1$),
since $|V^u_{3i}| + |V_1| \le 2 |V_1| < \frac 12 n < |V_3|$.
This proves the item.

For Item 3), the above item 2) says that $u$ is the only vertex to which a vertex of $V_1$ can possibly be adjacent.
Recall that $V_2$ and $V_3$ are adjacent;
we want to prove that for every $i$, $V^u_{3i}$ and $V_2$ aren't adjacent.
Assume $V_2$ is adjacent to $V^u_{3i}$ for some $i$.
Then, due to $|V^u_{3i}| \le |V_1|$, we have $|V^u_{3i}| + |V_2| \le |V_1| + |V_2| < \frac 12 n < |V_3|$,
suggesting an operation Pull($V^u_{3i} \subset V_3, V_2$) is applicable, a contradiction.
That is, $u$ is the only vertex to which a vertex of $V_2$ can possibly be adjacent.

This finishes the proof.
%\qed
\end{proof}

From Lemmas~\ref{lemma02}--\ref{lemma04},
we can design an algorithm, denoted as {\sc Approx-$3$},
to first compute in $O(|V| + |E|)$ time an initial feasible tripartition of the vertex set $V$ to the {\sc $3$-BGP} problem;
we then apply the operations Merge and Pull to iteratively reduce the size of the tripartition,
until either this size is no larger than $\frac 12 n$ or none of the two operations is applicable.
The final achieved tripartition is returned as the solution.
See Figure~\ref{fig03} for a high-level description of the algorithm {\sc Approx-$3$}.
We thus conclude with Theorem~\ref{thm01}.

\begin{figure}[ht]
\begin{center}
\framebox{
\begin{minipage}{5.2in}
	The algorithm {\sc Approx-$3$} for {\sc $3$-BGP} on graph $G = (V, E)$:
	\begin{description}%[{Step} 1.]
	\parskip=0pt
	\item[Step 1.]
		Construct the initial feasible tripartition $\{V_1, V_2, V_3\}$ of $V$;
	\item[Step 2.]
		while $|V_3| > \frac 12 n$, using Lemma~\ref{lemma04},\\
		\hspace{1in} if a Merge or a Pull operation is applicable, then update the tripartition;
	\item[Step 3.]
		return the final tripartition $\{V_1, V_2, V_3\}$.
	\end{description}
\end{minipage}}
\end{center}
\caption{A high-level description of the algorithm {\sc Approx-$3$} for {\sc $3$-BGP}.\label{fig03}}
\end{figure}

\begin{theorem}
\label{thm01}
The algorithm {\sc Approx-$3$} is an $O(|V||E|)$-time $\frac 32$-approximation for the {\sc $3$-BGP} problem,
and the ratio $\frac 32$ is tight for the algorithm.
\end{theorem}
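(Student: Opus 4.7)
My plan is to decompose the theorem into the approximation ratio, the running time, and a tightness example. Let $\{V_1,V_2,V_3\}$ (with $|V_1|\le|V_2|\le|V_3|$) denote the tripartition returned by \textsc{Approx-$3$}. There are two termination regimes. If $|V_3|\le n/2$, then Lemma~\ref{lemma01} gives $\OPT\ge n/3$, so $|V_3|/\OPT\le (n/2)/(n/3)=3/2$ immediately. Otherwise neither Merge nor Pull was applicable, and Lemma~\ref{lemma04} applies: there is a unique $u\in V_3$ incident to $V_1\cup V_2$, removing $u$ disconnects $G$ into the components $V_1,V_2,V_{31}^u,\ldots,V_{3\ell}^u$, and every $|V_{3i}^u|\le |V_1|\le |V_2|$.

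For this stuck case I would pick an arbitrary optimal tripartition $\{V_1^\ast,V_2^\ast,V_3^\ast\}$ and examine which part contains $u$: the two parts not containing $u$ are connected subgraphs of $G-u$, so each must lie inside a single component of $G-u$. If they lie in two different components, their joint size is bounded by the two largest components, namely $|V_1|+|V_2|$, forcing $\OPT\ge n-|V_1|-|V_2|=|V_3|$. Otherwise they share one component of size at most $|V_2|$, yielding $\OPT\ge n-|V_2|$; the inequality $2|V_3|\le 3(n-|V_2|)$ then reduces, via $|V_3|=n-|V_1|-|V_2|$, to $|V_2|\le n+2|V_1|$, which is trivial. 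In either subcase $|V_3|/\OPT\le 3/2$.

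For the running time I would observe that both operations strictly decrease $|V_3|$: Merge pools $V_1,V_2$ into one part of size $|V_1|+|V_2|<n/2<|V_3|$ and splits $V_3$ into two nonempty parts, while Pull produces sizes $|V_3|-|U|$, $|V_i|+|U|$, and $|V_{3-i}|$, each strictly less than $|V_3|$ by the precondition together with $|V_{3-i}|\le|V_2|<n/2$. Since $|V_3|$ is a positive integer, at most $O(|V|)$ iterations occur; within each, testing adjacency of the current parts and computing the block-cut structure of $G[V_3]$ by a single DFS suffices to test Merge and to locate a valid $U$ for Pull in $O(|V|+|E|)=O(|E|)$ time. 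Multiplying gives the claimed $O(|V||E|)$ bound.

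Finally, tightness can be read off the cycle $C_6$. Starting from the spanning path $v_1 - v_2 - \cdots - v_6$ and deleting its first two edges produces the feasible initial tripartition $\{\{v_1\},\{v_2\},\{v_3,v_4,v_5,v_6\}\}$; Merge then applies, and an ``arbitrary'' feasible bipartition of $\{v_3,v_4,v_5,v_6\}$ may be chosen as $\{v_3\}$ and $\{v_4,v_5,v_6\}$, leaving $|V_3|=3=n/2$ and making the algorithm terminate with a tripartition of size $3$, while $\OPT=2$ is witnessed by $\{v_1,v_2\},\{v_3,v_4\},\{v_5,v_6\}$. This construction generalises to an infinite family of cycles $C_{6t}$, showing that the ratio cannot be improved for this algorithm. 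The main obstacle in the whole proof is the stuck case of the ratio argument: one needs Lemma~\ref{lemma04} to rule out large components of $G-u$ and so be sure that the two non-$u$ optimal parts are forced to share a component of size at most $|V_2|$, without which the $3/2$ constant would not close.
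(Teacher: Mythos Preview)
Your proof is correct and follows essentially the same structure as the paper's: Lemma~\ref{lemma01} handles $|V_3|\le n/2$, Lemma~\ref{lemma04} handles the stuck case, the strict decrease of $|V_3|$ gives the $O(|V|)$ iteration bound, and a small hand-built instance witnesses tightness. The only noteworthy difference is that in the stuck case the paper argues more directly that the part of any optimal tripartition containing $u$ already has size at least $|V_3|$ (so the output is in fact \emph{optimal}, which makes your second subcase vacuous since $\OPT\ge n-|V_2|=|V_1|+|V_3|>|V_3|$ would contradict feasibility of the current partition), and the paper uses a $12$-vertex path rather than $C_6$ for the tightness example.
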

\begin{proof}
Note that in order to apply a Pull operation using Lemma~\ref{lemma04},
one can execute a graph traversal on $G[V_3 \setminus \{u\}]$ to determine whether it is connected,
and if not, to explore all its connected components.
Such a graph traversal can be done in $O(|V| + |E|)$ time.
A merge operation is also done in $O(|V| + |E|)$ time.
The total number of Merge and Pull operations executed in the algorithm is in $O(|V|)$.
Therefore, the total running time of the algorithm {\sc Approx-$3$} is in $O(|V| |E|)$.

At termination, if $|V_3| \le \frac 12 n$, then by Lemma~\ref{lemma01} we have $\frac {|V_3|}{\OPT} \le \frac 32$.

If $|V_3| > \frac 12 n$, then $|V_1| + |V_2| < \frac 12 n$ and thus $|V_1| < \frac 14 n$, suggesting by Lemma~\ref{lemma02} that $V_1$ and $V_2$ aren't adjacent.
Therefore, both $V_1$ and $V_2$ are adjacent to $V_3$.
By Lemma~\ref{lemma04}, let $u$ denote the unique vertex of $V_3$ to which the vertices of $V_1 \cup V_2$ can be adjacent.
We conclude from Lemma~\ref{lemma04} that $G[V_3 \setminus \{u\}]$ is disconnected,
there are at least two components in $G[V_3 \setminus \{u\}]$ denoted as $G[V^u_{31}], G[V^u_{32}], \ldots, G[V^u_{3\ell}]$ ($\ell \ge 2$),
such that for each $i$, $V^u_{3i}$ is not adjacent to $V_1$ or $V_2$ and $|V^u_{3i}| \le |V_1|$.
That is, $G = (V, E)$ has a very special ``star''-like structure,
in that these $\ell+2$ vertex subsets $V_1, V_2, V^u_{31}, V^u_{32}, \ldots, V^u_{3\ell}$ are pairwise non-adjacent to each other,
but they all are adjacent to the vertex $u$.
Clearly, in an optimal tripartition, the part containing the vertex $u$ has its size at least $|V_3|$,
suggesting the optimality of the achieved partition $\{V_1, V_2, V_3\}$.

For the tightness, one can consider a simple path of order $12$: $v_1$-$v_2$-$v_3$-$\cdots$-$v_{11}$-$v_{12}$,
on which the algorithm {\sc Approx-$3$} may terminate at a tripartition of size $6$,
while an optimal tripartition has size $4$.
This proves the theorem.
%\qed
\end{proof}

\begin{theorem}
\label{thm02}
The {\sc $k$-BGP} problem admits an $O(|V| |E|)$-time $\frac k2$-approximation, for any constant $k \ge 3$.
\end{theorem}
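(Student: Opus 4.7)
The plan is to generalize {\sc Approx-$3$} to an algorithm {\sc Approx-$k$} that starts from any initial feasible $k$-partition $\{V_1, \ldots, V_k\}$ (with sizes sorted as $|V_1| \le \cdots \le |V_k|$) and iteratively applies the obvious generalizations of the Merge and Pull operations so long as $|V_k| > \frac{1}{2}n$. The generalized Merge($V_i, V_j$), for $i, j < k$ with $V_i, V_j$ adjacent, replaces $V_i$ and $V_j$ by $V_i \cup V_j$ and arbitrarily bipartitions $V_k$ into two connected subgraphs; the generalized Pull($U, V_i$) (for $i < k$) is the literal restatement of Definition~\ref{def02}. Lemmas~\ref{lemma02} and~\ref{lemma03} extend verbatim: the new merged part has size $\le \sum_{l<k}|V_l| < \frac{1}{2}n < |V_k|$, the two pieces of $V_k$ are each smaller than $|V_k|$, and the Pull preconditions keep every new part strictly below $|V_k|$. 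Hence each operation strictly decreases the maximum part size and the algorithm terminates after $O(|V|)$ iterations.

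Next I would prove the $k$-analogue of Lemma~\ref{lemma04}: at termination with $|V_k| > \frac{1}{2}n$, the graph exhibits a ``star''-like structure centred at a single cut vertex. Failure of Merge forces $V_1, \ldots, V_{k-1}$ to be pairwise non-adjacent, so each of them must be adjacent to $V_k$. Applying the $k=3$ argument to the smallest part $V_1$---which satisfies $|V_1| \le \frac{1}{k-1}\sum_{l<k}|V_l| < \frac{n}{2(k-1)}$ and hence $2|V_1| < |V_k|$---I would show that $V_1$ has a single neighbour $u \in V_k$, that $G[V_k \setminus \{u\}]$ breaks into components $C^u_1, \ldots, C^u_\ell$ with $\ell \ge 2$ and each $|C^u_j| \le |V_1|$, and that no $C^u_j$ is adjacent to $V_1$. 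To upgrade this to the statement ``$u$ is the unique portal of every $V_i$'', the key observation is that $|V_i| + |C^u_j| \le |V_{k-1}| + |V_1| \le \sum_{l<k}|V_l| < |V_k|$ for all $i < k$ and all $j$, so a Pull of $C^u_j$ into $V_i$ is applicable the moment $C^u_j$ and $V_i$ are adjacent; termination thus forbids every such adjacency, and any would-be second portal of any $V_i$ would necessarily lie in some $C^u_j$ and so create one. Consequently the components of $G \setminus \{u\}$ are exactly $V_1, \ldots, V_{k-1}, C^u_1, \ldots, C^u_\ell$, a total of $k-1+\ell \ge k+1$ components.

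From this structure I would deduce that the achieved partition is already optimal. Take any feasible $k$-partition $\{W_1, \ldots, W_k\}$ and let $W_p$ be the part containing $u$; each of the remaining $k-1$ parts is connected and lies in $V \setminus \{u\}$, hence inside a single component of $G \setminus \{u\}$. These $k-1$ parts therefore touch at most $k-1$ distinct components, leaving at least $(k-1+\ell)-(k-1) = \ell$ components untouched and entirely inside $W_p$. Since $|C^u_j| \le |V_1| \le |V_i|$ for every $i < k$ and every $j$, the $\ell$ smallest components (up to ties) are precisely $C^u_1, \ldots, C^u_\ell$, with total size $|V_k|-1$. Hence $|W_p| \ge 1 + (|V_k|-1) = |V_k|$, giving $\OPT \ge |V_k|$. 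Combined with Lemma~\ref{lemma01}, the approximation ratio is $|V_k|/\OPT \le \max\{\frac{n/2}{n/k},\, 1\} = k/2$ in both termination regimes. The time analysis is identical in spirit to that of Theorem~\ref{thm01}: for constant $k$ each check for an applicable operation reduces to $O(1)$ graph traversals costing $O(|V|+|E|)$, yielding an overall $O(|V||E|)$ running time.

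The main obstacle I expect is the uniqueness-of-portal step for the larger parts $V_2, \ldots, V_{k-1}$: the direct adaptation of the Lemma~\ref{lemma04} argument would require $2|V_i| < |V_k|$, a condition that can fail for $V_i$ close in size to $V_{k-1}$. The fix is to exploit \emph{cross-part} Pulls: because $|V_i|+|C^u_j| \le |V_{k-1}|+|V_1| < n/2 < |V_k|$, a small component $C^u_j$ of $V_k \setminus \{u\}$ can always be Pulled into any $V_i$, and this rules out every alternative portal once the single portal of $V_1$ has been pinned down. Once this common-portal structure is in hand, the optimality bound $\OPT \ge |V_k|$ is a clean pigeonhole-style counting argument over the $\ell+k-1 \ge k+1$ components of $G \setminus \{u\}$.
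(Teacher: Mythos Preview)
Your proposal is correct, but it follows a genuinely different route from the paper's proof. The paper does \emph{not} generalize the Merge/Pull operations to $k$-partitions. Instead, it runs {\sc Approx-$3$} as a black box to get a tripartition $\{V_1,V_2,V_3\}$; if $|V_3|\le \frac 12 n$ it simply keeps bipartitioning the currently largest part until $k$ parts are obtained (the maximum never exceeds $\frac 12 n$), and if $|V_3|>\frac 12 n$ it reads off the star structure of Lemma~\ref{lemma04} and assembles the $k$-partition directly from the $\ell+2$ components around $u$ (taking the $k-1$ largest components as separate parts and lumping the rest with $u$, then further splitting if $k>\ell$). Your approach, by contrast, runs a true $k$-partition local search and re-derives the star structure at that level; the key technical step you isolate---using $|V_i|+|C^u_j|\le |V_{k-1}|+|V_1|<\frac 12 n<|V_k|$ to force every $V_i$ through the single portal $u$ without needing $2|V_i|<|V_k|$---is exactly what makes the generalization go through, and your pigeonhole bound $\OPT\ge |V_k|$ via the $k-1$ largest versus $\ell$ smallest components of $G\setminus\{u\}$ is clean and correct. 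The paper's reduction is shorter and reuses Theorem~\ref{thm01} wholesale; your direct argument is more self-contained and shows in one stroke that the terminal $k$-partition is itself optimal whenever $|V_k|>\frac 12 n$, rather than constructing a new partition from the star.
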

\begin{proof}
Notice that we may apply the algorithm {\sc Approx-$3$} on the input graph $G = (V, E)$ to obtain a tripartition $\{V_1, V_2, V_3\}$ of the vertex set $V$,
with $|V_1| \le |V_2| \le |V_3|$.

If $|V_3| \le \frac 12 n$, then we may continue on to further partition the largest existing part into two smaller parts iteratively,
resulting in a $k$-part partition in which the size of the largest part is no larger than $\frac 12 n$ (less than $\frac 12 n$ when $k \ge 4$).

If $|V_3| > \frac 12 n$, then let $u$ be the only vertex of $V_3$ to which the vertices of $V_1 \cup V_2$ can be adjacent;
that is, $G[V \setminus \{u\}]$ is disconnected,
there are $\ell \ge 4$ connected components in $G[V \setminus \{u\}]$ (see Figure~\ref{fig02}),
each is adjacent to $u$ and the largest (which is $G[V_2]$) has size less than $\frac 12 n$ (all the others have sizes less than $\frac 14 n$).
When $k \le \ell$, we can achieve a $k$-part partition by setting the $k-1$ largest components to be the $k-1$ parts,
and all the other components together with $u$ to be the last part.
Such a partition has size no greater than $\max\{|V_2|, \OPT\}$,
since in an optimal $k$-part partition the part containing the vertex $u$ is no smaller than the last constructed part.
When $k > \ell$, we can start with the $\ell$-part partition obtained as above to further partition the largest existing part into two smaller parts iteratively,
resulting in a $k$-part partition in which the size of the largest part is less than $\frac 12 n$.

In summary, we either achieve an optimal $k$-part partition or achieve a $k$-part partition in which the size of the largest part is no greater than $\frac 12 n$.
Using the lower bound in Lemma~\ref{lemma01}, this is a $\frac k2$-approximation.

Running {\sc Approx-$3$} takes $O(|V| |E|)$ time;
the subsequent iterative bipartitioning needs only $O(|E|)$ per iteration.
Therefore, the total running time is still in $O(|V| |E|)$, since $k \le |V|$.

Lastly, we remark that in the above proof, when $k \ge 4$, if the achieved $k$-part partition is not optimal, then its size is less than $\frac 12 n$.
That is, when $k \ge 4$, the ratio $\frac k2$ is not tight for the approximation algorithm.
%\qed
\end{proof}

\section{A $24/13$-approximation for {\sc $4$-BGP}}
%==================================================================================================
Theorem~\ref{thm02} states that the {\sc $4$-BGP} problem admits a $2$-approximation.
In this section, we design a better $\frac {24}{13}$-approximation,
which uses three more local improvement operations besides the similarly defined Merge and Pull operations.
Basically, these three new operations each finds a subset of the largest two parts, respectively, to merge them into a new part.

Let $\{V_1, V_2, V_3, V_4\}$ denote an initial feasible tetrapartition, with $|V_1| \le |V_2| \le |V_3| \le |V_4|$.
Note that these four parts must satisfy some adjacency constraints due to $G$ being connected.
We try to reduce the size of $V_4$ to be no larger than $\frac 25 n$, whenever possible;
or otherwise we will show that the achieved partition is a $\frac {24}{13}$-approximation.
We point out a major difference from {\sc $3$-BGP},
that the two largest parts $V_3$ and $V_4$ in a tetrapartition can both be larger than the desired bound of $\frac 25 n$.
Therefore, we need new local improvement operations.

In the following algorithm denoted as {\sc Approx-$4$},
if $|V_4| \le \frac 25 n$, then we may terminate and return the achieved tetrapartition;
it follows from Lemma~\ref{lemma01} that the achieved tetrapartition is within $\frac 85$ of the optimum.
Otherwise, the algorithm will execute one of the following local improvement operations whenever applicable.

The first local improvement operation is similar to the Merge operation designed for {\sc Approx-$3$},
except that it now deals with more cases.

\begin{definition}
\label{def03}
Operation {\em Merge($V_i, V_j$)}, for some $i, j \in \{1, 2, 3\}$:
\begin{itemize}
\parskip=0pt
\item
	{\em precondition:}  $|V_4| > \frac 25 n$;
	$V_i$ and $V_j$ are adjacent, and $|V_i| + |V_j| < |V_4|$;
\item
	{\em effect:} the operation produces a new tetrapartition $\{V_i \cup V_j, V_{6-i-j}, V_{41}, V_{42}\}$,
	where $\{V_{41}, V_{42}\}$ is an arbitrary feasible bipartition of $V_4$.
\end{itemize}
\end{definition}

\begin{lemma}
\label{lemma05}
Given a connected graph $G = (V, E)$ and a tetrapartition $\{V_1, V_2, V_3, V_4\}$ of the vertex set $V$ with $|V_4| > \frac 25 n$,
the achieved tetrapartition by the operation {\em Merge($V_i, V_j$)}, for some $i, j \in \{1, 2, 3\}$, is feasible and better.
\end{lemma}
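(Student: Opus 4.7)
The plan is to verify the two required properties of the new tetrapartition $\{V_i \cup V_j, V_{6-i-j}, V_{41}, V_{42}\}$: feasibility (every part is non-empty and induces a connected subgraph), and the ``better'' relation as defined in \cref{lemma05}'s context, namely that either the maximum part size strictly decreases, or it stays the same while the second-largest part size strictly decreases.

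For feasibility, the part $V_i \cup V_j$ is non-empty and, since $G[V_i]$ and $G[V_j]$ are each connected and $V_i, V_j$ are adjacent by the precondition, $G[V_i \cup V_j]$ is connected. The part $V_{6-i-j}$ is inherited from the original feasible tetrapartition and is therefore connected and non-empty. Finally, $|V_4| > \tfrac 25 n \ge 2$ (since $n \ge 5$ when a nontrivial $4$-part feasible partition exists, which one can verify explicitly; and for very small $n$ the lemma holds vacuously), so the connected subgraph $G[V_4]$ admits a bipartition $\{V_{41},V_{42}\}$ into two nonempty connected subgraphs, which one obtains by removing any leaf edge of a spanning tree of $G[V_4]$.

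For the ``better'' part, I compare the sizes of the four new parts with the old maximum $|V_4|$ and second maximum $|V_3|$. By the precondition, $|V_i \cup V_j| = |V_i|+|V_j| < |V_4|$. Since $V_{41}, V_{42}$ partition $V_4$ into two nonempty subsets, both have size strictly less than $|V_4|$. And $|V_{6-i-j}|$ is one of $|V_1|,|V_2|,|V_3|$, hence at most $|V_4|$. Therefore the maximum of the new tetrapartition is at most $|V_4|$. If this maximum is strictly less than $|V_4|$, the new tetrapartition is strictly better by the first clause of the definition of ``better.''

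The only remaining case, which I expect to be the slightly subtle step, is when $|V_{6-i-j}| = |V_4|$; this forces $|V_3| = |V_4|$ (since $|V_{6-i-j}| \le |V_3| \le |V_4|$), so $V_{6-i-j}$ realizes the new maximum, which equals the old maximum. Here I must invoke the second clause of ``better'' and show that the new second-largest part size is strictly less than $|V_3| = |V_4|$. But each of the remaining three new parts has size $|V_i|+|V_j| < |V_4|$ or $|V_{4t}| < |V_4|$ for $t\in\{1,2\}$, so the new second maximum is strictly less than $|V_4| = |V_3|$, as required. This completes the verification, and the lemma follows.
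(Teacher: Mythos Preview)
Your proof is correct and follows essentially the same approach as the paper's own proof: bound each of the three new parts $V_i\cup V_j$, $V_{41}$, $V_{42}$ strictly below $|V_4|$, and then split into the two cases $|V_{6-i-j}|<|V_4|$ versus $|V_{6-i-j}|=|V_4|$ (which forces $6-i-j=3$) to conclude that either the maximum drops or the second maximum drops. Your treatment of feasibility is slightly more explicit than the paper's, but the argument is the same.
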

\begin{proof}
The proof is similar to the proof of Lemma~\ref{lemma02}.

Note from the precondition of the operation Merge($V_i, V_j$) that the size of the new part $V_i \cup V_j$ is $|V_i| + |V_j| < |V_4|$;
the sizes of the other two new parts $V_{41}$ and $V_{42}$ partitioned from $V_4$ are clearly strictly less than $|V_4|$.
Let $h = 6 - i - j$;
it follows that if $|V_h| = |V_4|$ (which implies $h = 3$),
then the size of the largest part is unchanged but the size of the second largest part reduces by at least $1$;
if $|V_h| < |V_4|$, then the size of the largest part reduces by at least $1$.
Therefore, the new partition is better.
This proves the lemma.
%\qed
\end{proof}

The next local improvement operation is very similar to the Pull operation designed for {\sc Approx-$3$},
except that it now deals with more cases.
See for an illustration in Figure~\ref{fig01}, with $V_3$ replaced by $V_j$.

\begin{definition}
\label{def04}
Operation {\em Pull($U \subset V_j, V_i$)}, for some pair $(i, j) \in \{(1, 3), (1, 4)$, $(2, 3), (2, 4), (3, 4)\}$,
\begin{itemize}
\parskip=0pt
\item
	{\em precondition:}  $|V_4| > \frac 25 n$ and no Merge operation is applicable;
	both $G[U]$ and $G[V_j \setminus U]$ are connected, $U$ is adjacent to $V_i$, and $|V_i| + |U| < |V_j|$;
\item
	{\em effect:} the operation produces a new tetrapartition $\{V_j \setminus U, V_i \cup U, V_a, V_b\}$, where $a, b \in \{1, 2, 3, 4\} \setminus \{i, j\}$.
\end{itemize}
\end{definition}

\begin{lemma}
\label{lemma06}
Given a connected graph $G = (V, E)$ and a tetrapartition $\{V_1, V_2, V_3, V_4\}$ of the vertex set $V$ with $|V_4| > \frac 25 n$,
the achieved tetrapartition by the operation {\em Pull($U \subset V_j, V_i$)} is feasible and better.
\end{lemma}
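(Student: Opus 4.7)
The plan is to handle feasibility and the ``better'' property separately. Feasibility is the easier half: $G[V_j \setminus U]$ is connected by hypothesis, $G[V_i \cup U]$ is connected because the two connected pieces $G[V_i]$ and $G[U]$ are bridged by the edge witnessing the adjacency of $U$ to $V_i$, and $G[V_a]$, $G[V_b]$ are inherited unchanged from the old tetrapartition.

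For the ``better'' property I would split on $j$. When $j = 4$, both pieces carved out of $V_4$ shrink: $|V_4 \setminus U| < |V_4|$ trivially and $|V_i \cup U| = |V_i| + |U| < |V_4|$ by the precondition. The remaining new parts $V_a, V_b \in \{V_1, V_2, V_3\} \setminus \{V_i\}$ have size at most $|V_3| \le |V_4|$. If $|V_3| < |V_4|$ the new maximum is already strictly below $|V_4|$. Otherwise $|V_3| = |V_4|$ forces $i \in \{1, 2\}$ (since $i = 3$ would require $|V_3| + |U| < |V_4|$, hence $|V_3| < |V_4|$), so that $V_3$ remains and the new maximum stays at $|V_4|$; I would then use the density bound $|V_4| > \frac{2}{5} n$ to get $|V_1| + |V_2| < \frac{1}{5} n$, so $|V_2| < |V_4|$, and combined with $|V_4 \setminus U|, |V_i \cup U| < |V_4|$ the new second largest becomes strictly less than $|V_4| = |V_3|$.

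When $j = 3$, the part $V_4$ is untouched so the new maximum remains $|V_4|$ and I must show the new second largest is strictly less than $|V_3|$. The candidates are $|V_3 \setminus U| < |V_3|$ (trivial), $|V_i \cup U| < |V_3|$ (precondition), and $|V_{3-i}| \le |V_2|$. The crux is ruling out $|V_2| = |V_3|$. When $i = 2$, the precondition $|V_2| + |U| < |V_3|$ together with $|U| \ge 1$ immediately forces $|V_2| < |V_3|$. When $i = 1$, I would invoke ``no Merge applicable'': adjacency of $U \subset V_3$ to $V_1$ makes $V_1$ and $V_3$ adjacent, so the failure of Merge($V_1, V_3$) yields $|V_1| + |V_3| \ge |V_4|$. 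Supposing $|V_2| = |V_3|$ for contradiction and using $n = |V_1| + 2|V_3| + |V_4|$ together with $|V_4| > \frac{2}{5} n$ should force $|V_3| < \frac{n}{5}$ while $|V_1| > \frac{n}{5}$, impossible since $|V_1| \le |V_2| = |V_3|$.

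The main obstacle is this last $(i,j) = (1,3)$ subcase: ruling out $|V_2| = |V_3|$ is the only place where both the density bound $|V_4| > \frac{2}{5} n$ and the non-applicability of Merge must be used simultaneously. Every other case follows directly from the Pull preconditions by simple cardinality comparisons.
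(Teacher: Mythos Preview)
Your proof is correct and follows essentially the same case analysis as the paper's: split on $j$, observe that the two modified parts are strictly smaller than $|V_j|$, and then verify that the untouched parts cannot spoil the ``better'' comparison. The paper organizes the cases slightly differently (first by whether $|V_3|=|V_4|$, then by $j$), but the content is the same.

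In fact, your treatment of the $(i,j)=(1,3)$ subcase is more careful than the paper's: the paper simply asserts ``if $|V_3|<|V_4|$ and $j=3$, then $|V_2|<|V_3|$'' without justification, whereas you correctly derive it from the no-Merge precondition. One streamlining of your argument there: having obtained $|V_1|+|V_3|\ge|V_4|$ from the failure of Merge$(V_1,V_3)$, the assumption $|V_2|=|V_3|$ gives $|V_1|+|V_2|\ge|V_4|$, which directly contradicts $|V_1|+|V_2|<\tfrac{2}{5}n<|V_4|$ (the paper already notes $|V_1|+|V_2|<\tfrac{2}{5}n$ from $|V_1|+|V_2|+|V_3|<\tfrac{3}{5}n$ and $|V_3|\ge|V_2|$). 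This avoids the longer computation bounding $|V_3|$ and $|V_1|$ separately.
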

\begin{proof}
Note from the precondition that $|V_4| > \frac 25 n$;
thus $|V_1| + |V_2| + |V_3| < \frac 35 n$, and further $|V_1| + |V_2| < \frac 25 n$.
Since no Merge operation is applicable at the time Pull($U \subset V_j, V_i$) is performed, $V_1$ and $V_2$ aren't adjacent.
The new partition is feasible since $G[V_j \setminus U]$ and $G[V_i \cup U]$ are connected.

The sizes of the two new parts $V_j \setminus U$ and $V_i \cup U$ are less than $|V_j|$;
the sizes of the other two parts $V_a$ and $V_b$ are unchanged.
If $|V_3| = |V_4|$ (implying $(i, j) \ne (3, 4)$, and thus $i \in \{1, 2\}$),
	then the size of the largest part is unchanged but due to $|V_2| < |V_3|$ the size of the second largest part reduces by at least $1$;
if $|V_3| < |V_4|$ and $j = 4$, then the size of the largest part reduces by at least $1$;
if $|V_3| < |V_4|$ and $j = 3$, then $|V_2| < |V_3|$ and the size of the largest part is unchanged but the size of the second largest part reduces by at least $1$.
Therefore, the new partition is better.
This proves the lemma.
%\qed
\end{proof}

\begin{lemma}
\label{lemma07} {\em (Structure Properties)}
Given a connected graph $G = (V, E)$,
when none of the {\em Merge} and {\em Pull} operations is applicable to
the tetrapartition $\{V_1, V_2, V_3, V_4\}$ of the vertex set $V$ with $|V_4| > \frac 25 n$,
\begin{itemize}
\item[1)]
	$|V_1| < \frac 15 n$, $|V_2| < \frac 3{10} n$, $|V_1| + |V_2| < \frac 25 n$, and $V_1$ and $V_2$ aren't adjacent;
\item[2)]
	if $V_i$ and $V_3$ are adjacent, for some $i \in \{1, 2\}$, then $|V_i| + |V_3| \ge |V_4|$;
\item[3)]
	if $V_i$ and $V_4$ are adjacent for some $i \in \{1, 2\}$, and there is an edge $(u, v) \in E(V_4, V_i)$,
	then $G[V_4 \setminus \{u\}]$ is disconnected, every component $G[V^u_{4\ell}]$ in $G[V_4 \setminus \{u\}]$ has its order $|V^u_{4\ell}| \le |V_i|$,
	and $V^u_{4\ell}$ and $V_i$ aren't adjacent;

	furthermore, if $V^u_{4\ell}$ and $V_3$ are adjacent, then $|V^u_{4\ell} \cup V_3| \ge |V_4|$;
\item[4)]
	if $V_i$ and $V_3$ are adjacent for some $i \in \{1, 2\}$, $|V_i| < \frac 13 |V_4|$, and there is an edge $(v, u) \in E(V_3, V_i)$,
	then $G[V_3 \setminus \{v\}]$ is disconnected, every component $G[V^v_{3\ell}]$ in $G[V_3 \setminus \{v\}]$ has its order $|V^v_{3\ell}| \le |V_i|$,
	and $V^v_{3\ell}$ and $V_i$ aren't adjacent;
\item[5)]
	if $|V_2| \ge \frac 16 |V_4|$, then the partition $\{V_1, V_2, V_3, V_4\}$ is a $\frac {24}{13}$-approximation;

	otherwise, we have
\begin{equation}
\label{eq01}
\left\{
\begin{array}{rcl}
|V_2| + |V_3|   &\ge &|V_4|,\\
|V_4| 		    &<   &\frac 12 n,\\
|V_1| \le |V_2| &<   &\frac 16 |V_4| < \frac 1{12} n,\\
|V_3| 			&>   &\frac 13 n;
\end{array}
\right.
\end{equation}
\item[6)]
	if both $V_1$ and $V_2$ are adjacent to $V_j$ for some $j \in \{3, 4\}$, then the vertices of $V_1 \cup V_2$ can be adjacent to only one vertex of $V_j$.
\end{itemize}
\end{lemma}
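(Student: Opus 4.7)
The plan is to verify items 1--4 by contrapositive (showing that failure of the structural conclusion triggers an applicable Merge or Pull operation, contradicting the hypothesis) and then combine these with numerical case analysis to obtain items 5 and 6. Item 1 is pure arithmetic: the ordering $|V_1| \le |V_2| \le |V_3| \le |V_4|$ and $|V_4|>\frac{2}{5}n$ give $|V_1|+|V_2| \le \frac{2}{3}(n-|V_4|) < \frac{2}{5}n$ and $|V_2| \le \frac{1}{2}(|V_2|+|V_3|) < \frac{3}{10}n$; non-adjacency of $V_1,V_2$ is immediate because $|V_1|+|V_2|<|V_4|$ would otherwise enable Merge$(V_1,V_2)$. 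Item 2 is the direct contrapositive of the Merge precondition.

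Items 3 and 4 follow the template of Lemma~\ref{lemma04}. Parts (a) and (b) are routine Pulls: (a) uses Pull$(\{u\} \subset V_4, V_i)$ with size slack from item 1, and (b) uses Pull$(V_4 \setminus V^u_{4\ell} \subset V_4, V_i)$, whose complement remains connected through $u$. Part (c) invokes Pull$(V^u_{4\ell} \subset V_4, V_i)$ and requires $|V_i|+|V^u_{4\ell}|<|V_4|$. For $i=1$ this is immediate via $2|V_1| \le |V_1|+|V_2| < \frac{2}{5}n < |V_4|$; for $i=2$ I would case-split on whether $V_1$ is adjacent to $V_4$: if so, item 6 forces a common cut-vertex $u$ so that item 3(b) applied to $V_1$ yields $|V^u_{4\ell}| \le |V_1|$, giving $|V_2|+|V^u_{4\ell}| \le |V_1|+|V_2|<|V_4|$; if not, then $V_1$ must be adjacent to $V_3$, and item 2 combined with the total-sum constraint forces $|V_2|<|V_4|/2$, so $2|V_2|<|V_4|$ suffices. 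The ``furthermore'' clause is a contrapositive Pull treating $V^u_{4\ell}$ as a subset pulled toward $V_3$. Item 4 is parallel; the extra hypothesis $|V_i|<\frac{1}{3}|V_4|$ together with $|V_3| \ge |V_4|-|V_i| > \frac{2}{3}|V_4|$ (from item 2) supplies the Pull precondition $|V_i|+|V^v_{3\ell}| \le 2|V_i|<\frac{2}{3}|V_4| \le |V_3|$.

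Item 6 is a uniqueness argument: were distinct $u_1, u_2 \in V_j$ each adjacent to $V_1 \cup V_2$, applying item 3 (for $j=4$) or item 4 (for $j=3$) at $u_1$ would place $u_2$ inside some component $V^{u_1}_{j\ell}$, whose adjacency to the appropriate $V_i$ via $u_2$ would enable Pull$(V^{u_1}_{j\ell}, V_i)$ because $|V_i|+|V^{u_1}_{j\ell}| \le |V_1|+|V_2|<\frac{2}{5}n<|V_4|$, a contradiction.

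Item 5 is the main content. Its ``otherwise'' branch reduces to showing $|V_2|+|V_3| \ge |V_4|$: tracing the adjacencies forced by connectedness of $G$ and item 1, either item 2 (via $V_2$ adjacent to $V_3$, or $V_1$ adjacent to $V_3$ combined with $|V_1| \le |V_2|$) or the furthermore clause of item 3 (combined with $|V^u_{4\ell}| \le |V_1| \le |V_2|$) yields the inequality in every case; the rest of~\eqref{eq01} is arithmetic from $n=|V_1|+|V_2|+|V_3|+|V_4|$ and the hypothesis $|V_2|<\frac{|V_4|}{6}$. The first branch, $|V_2| \ge \frac{|V_4|}{6}$, is the main numerical obstacle: Lemma~\ref{lemma01}'s $\OPT \ge n/4$ combined with $|V_2|+|V_3| \ge |V_4|$ only yields $|V_4|/\OPT \le 2$, so I would strengthen the $\OPT$ lower bound by exploiting the cut-vertex $u$ (any optimal tetrapartition must allocate $u$ together with enough mass from its surrounding components) and extract the ratio $\frac{24}{13}$ by solving the resulting LP in $|V_1|,\ldots,|V_4|$ under the constraints $|V_2| \ge \frac{|V_4|}{6}$, $|V_2|+|V_3| \ge |V_4|$, and the total-sum identity. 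The hard part throughout is this numerical derivation; the remaining items reduce to mechanical applications of the Merge and Pull contrapositives.
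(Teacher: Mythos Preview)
Your skeleton for items 1--4 and 6, and for the ``otherwise'' half of item 5, is sound and matches the paper's argument. One technicality: in item 3(c) for $i=2$ you invoke item 6 to force a common cut-vertex, while your proof of item 6 invokes item 3 --- a circularity. The paper breaks this by arguing item 3(c) for $i=2$ directly: if some $V^u_{4\ell}$ were adjacent to $V_2$, then (Pull not applicable) $|V^u_{4\ell}|+|V_2|\ge|V_4|$, so $2|V_2|\ge|V_4|$; one then checks $V_1$ cannot be adjacent to $V_4$ (using only item 3 for $i=1$, already proved), hence $V_1$ is adjacent to $V_3$, item 2 gives $|V_1|+|V_3|\ge|V_4|$, and summing forces $n\ge\tfrac52|V_4|$, contradicting $|V_4|>\tfrac25 n$. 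Alternatively you can prove item 6 for $j=4$ using item 3 with $i=1$ only (pick any vertex of $V_4$ adjacent to $V_1$ as the cut-vertex; any other vertex of $V_4$ adjacent to $V_1\cup V_2$ lies in a component of size $\le|V_1|$, triggering a Pull into $V_2$).

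The real gap is your plan for the first branch of item 5. You propose to strengthen the $\OPT$ lower bound via the cut-vertex and solve an LP, but that is not how the paper obtains $\tfrac{24}{13}$ in most sub-cases. The paper runs a case analysis on which of $V_1,V_2$ is adjacent to $V_3$ versus $V_4$ and, in every sub-case except one, establishes the \emph{stronger} inequality $|V_1|+|V_3|\ge|V_4|$ (via item 2 when $V_1$ touches $V_3$, or via the ``furthermore'' clause of item 3 using $|V^u_{4\ell}|\le|V_1|$ when $V_3$ touches a component of $G[V_4\setminus\{u\}]$). Combined with the branch hypothesis $|V_2|\ge\tfrac16|V_4|$ this yields $n\ge\tfrac{13}{6}|V_4|$, hence $|V_4|\le\tfrac{6}{13}n$ and ratio $\le\tfrac{24}{13}$ directly from Lemma~\ref{lemma01}. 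The inequality $|V_2|+|V_3|\ge|V_4|$ you work with is too weak (it only gives $n\ge 2|V_4|$, as you observe). The direct argument about $\OPT$ you sketch is needed only in the residual sub-case where $|V_1|<\tfrac16|V_4|$, $V_1$ is adjacent only to $u\in V_4$, and no component of $G[V_4\setminus\{u\}]$ touches $V_2\cup V_3$; there the paper compares against the optimal part containing $u$ and gets either optimality or a $\tfrac{12}{11}$ ratio. Your LP proposal is too vague to see how it would single out this sub-case or recover the constant $\tfrac{24}{13}$.
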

\begin{proof}
See for an illustration in Figure~\ref{fig04}.

\begin{figure}[htpb]
\begin{center}
  \setlength{\unitlength}{0.7bp}%
  \begin{picture}(373.32, 188.75)(0,0)
  \put(0,0){\includegraphics[scale=0.7]{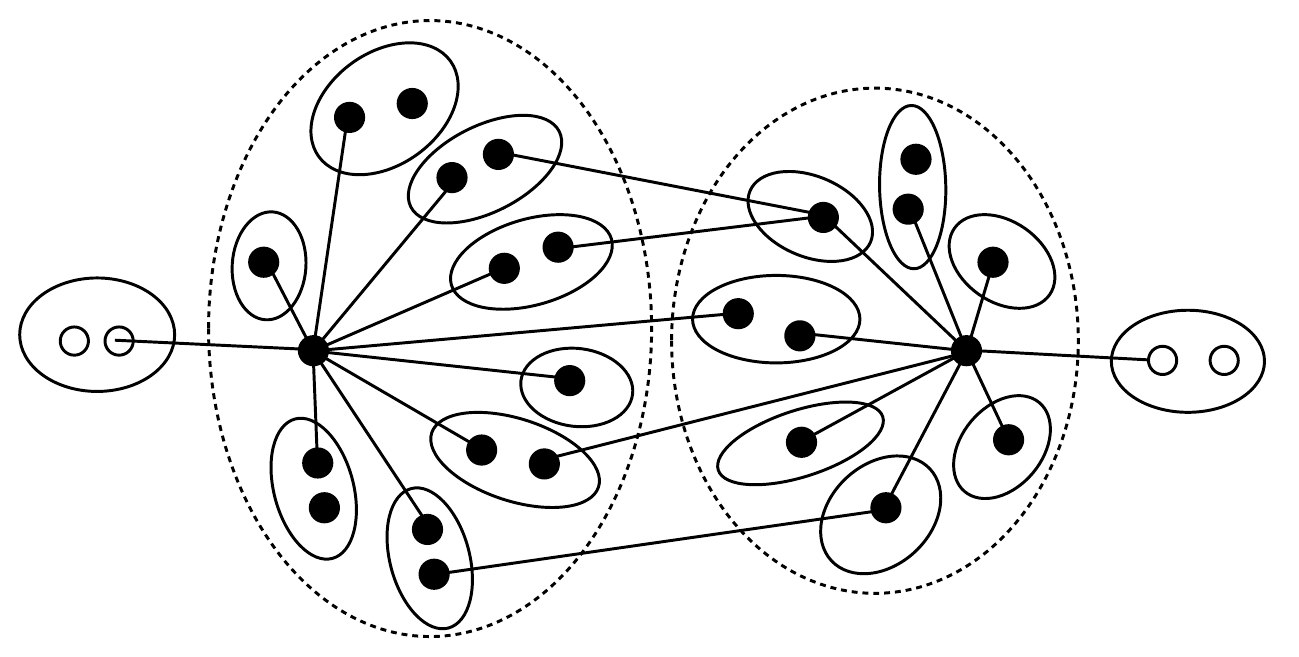}}
  \put(285.33,92.02){\fontsize{11.38}{13.66}\selectfont $v$}
  \put(279.14,15.16){\fontsize{11.38}{13.66}\selectfont $V_3$}
  \put(336.79,58.33){\fontsize{11.38}{13.66}\selectfont $V_1$}
  \put(20.86,62.90){\fontsize{11.38}{13.66}\selectfont $V_2$}
  \put(71.96,9.11){\fontsize{11.38}{13.66}\selectfont $V_4$}
  \put(73.73,78.99){\fontsize{11.38}{13.66}\selectfont $u$}
  \end{picture}%
\end{center}
\caption{An illustration of the connectivity configuration of the graph $G = (V, E)$,
	with respect to the tetrapartition $\{V_1, V_2, V_3, V_4\}$ and $|V_4| > \frac 25 n$,
	on which no Merge or Pull operation is applicable.\label{fig04}}
\end{figure}

Using $|V_1| \le |V_2| \le |V_3| \le |V_4|$ and $|V_4| > \frac 25 n$, we have $|V_1| + |V_2| + |V_3| < \frac 35 n$,
and consequently $|V_1| < \frac 15 n$, $|V_2| < \frac 3{10} n$, $|V_1| + |V_2| < \frac 25 n$.

Items 1) and 2) hold due to no applicable Merge operation (Definition~\ref{def03}).

If $i = 1$, item 3) can be proven similarly as Lemmas~\ref{lemma03} and \ref{lemma04}.
Using $|V_1| < \frac 15 n$ and $|V_4| > \frac 25 n$,
if $G[V_4 \setminus \{u\}]$ is connected, then it would enable the Pull($\{u\} \subset V_4, V_1$) operation (Definition~\ref{def04}, assuming non-trivially $n \ge 5$);
if a component $G[V^u_{4\ell}]$ of $G[V_4 \setminus \{u\}]$ has its order $|V^u_{4\ell}| > |V_1|$,
	then it would enable the Pull($V_4 \setminus V^u_{4\ell} \subset V_4, V_1$) operation;
if $V^u_{4\ell}$ and $V_1$ are adjacent, then it would enable the Pull($V^u_{4\ell} \subset V_4, V_1$) operation.

If $i = 2$ and $G[V_4 \setminus \{u\}]$ is connected,
then it would enable the Pull($\{u\} \subset V_4, V_2$) operation since $|V_2| + 1 < \frac 3{10} n + 1 \le \frac 25 n < |V_4|$ (assuming non-trivially $n \ge 10$).
For a component $G[V^u_{4\ell}]$ of $G[V_4 \setminus \{u\}]$, similarly we have $|V^u_{4\ell}| \le |V_2|$.
If $V^u_{4\ell}$ and $V_2$ are adjacent, then $|V^u_{4\ell}| + |V_2| \ge |V_4|$ since otherwise the Pull($V^u_{4\ell} \subset V_4, V_2$) operation would be applicable.
Also, for the same reason, in this case $V_1$ cannot be adjacent to $V_4$, and thus $V_1$ has to be adjacent to $V_3$.
In summary, we have $2 |V_2| \ge |V^u_{4\ell}| + |V_2| \ge |V_4|$ and $|V_1| + |V_3| \ge |V_4|$, suggesting $|V_4| \le \frac 25 n$, a contradiction.
This contradiction proves that $V^u_{4\ell}$ and $V_2$ aren't adjacent.
A similar contradiction using a Pull operation shows that if $V^u_{4\ell}$ and $V_3$ are adjacent, then $|V^u_{4\ell} \cup V_3| \ge |V_4|$.
The third item is thus proved.

Item 4) can be proven similarly, as follows.
We claim that $|V_3| \ge |V_i| + 2$.
To prove this claim, we see that $|V_3| \le |V_i| + 1$ implies $|V_i| + |V_3| \le 2 |V_i| + 1 < |V_4|$, contradicting item 2).
It follows from the above claim that $G[V_3 \setminus \{v\}]$ is disconnected
since the Pull($\{v\} \subset V_3, V_i$) operation isn't applicable.
For a component $G[V^v_{3\ell}]$ of $G[V_3 \setminus \{v\}]$, similarly we have $|V^v_{3\ell}| \le |V_i|$.
If $|V^v_{3\ell}| + |V_i| \ge |V_3|$, then $2 |V_i| \ge |V_3|$;
from $|V_i| + |V_3| \ge |V_4|$ we have $|V_i| \ge \frac 13 |V_4|$, a contradiction to the presumption that $|V_i| < \frac 13 |V_4|$.
Therefore, $|V^v_{3\ell}| + |V_i| < |V_3|$ holds.
Next, if $V^v_{3\ell}$ and $V_i$ are adjacent, then the Pull($V^v_{3\ell} \subset V_3, V_i$) operation would be applicable;
that is, $V^v_{3\ell}$ and $V_i$ aren't adjacent.
This proves item 4).

To prove item 5), if $V_1$ and $V_2$ are both adjacent to $V_4$, but not to $V_3$,
then by item 2) we know that the vertices of $V_1 \cup V_2$ can be adjacent to only one vertex of $V_4$, say $u$
(otherwise, a Merge operation would be applicable).
Note that $V_3$ must also be adjacent to $V_4$.
If $V_3$ is adjacent to a component $G[V^u_{4\ell}]$ of $G[V_4 \setminus \{u\}]$, then $|V_1| + |V_3| \ge |V^u_{4\ell}| + |V_3| \ge |V_4|$.
Therefore, $|V_2| \ge \frac 16 |V_4|$ implies $|V_4| \le \frac 6{13} n$,
suggesting the partition $\{V_1, V_2, V_3, V_4\}$ is a $\frac {24}{13}$-approximation by Lemma~\ref{lemma01}.
If the vertices of $V_3$ are adjacent to only the vertex $u \in V_4$,
then consider an optimal tetrapartition $\{V^*_1, V^*_2, V^*_3, V^*_4\}$, and assume the vertex $u$ is in $V^*_j$;
clearly, $|V^*_j| \ge |V_4|$, suggesting the partition $\{V_1, V_2, V_3, V_4\}$ is also optimal.

We next discuss the case where at least one of $V_1$ and $V_2$, say $V_i$, is adjacent to $V_3$.
If $i = 1$, or if $|V_1| \ge \frac 16 |V_4|$, then we again have $|V_4| \le \frac 6{13} n$,
suggesting the partition $\{V_1, V_2, V_3, V_4\}$ is a $\frac {24}{13}$-approximation by Lemma~\ref{lemma01}.
In the other case, $|V_1| < \frac 16 |V_4|$ and $V_1$ isn't adjacent to $V_3$ but to $V_4$, and by item 3) suppose $V_1$ is adjacent to the vertex $u \in V_4$.
We further conclude for the same reason that each component $G[V^u_{4\ell}]$ of $G[V_4 \setminus \{u\}]$ cannot be adjacent to either $V_2$ or $V_3$
(otherwise, either a Merge operation would be applicable, or we again have $|V_1| + |V_3| \ge |V^u_{4\ell}| + |V_3| \ge |V_4|$),
and consequently $V_2$ and $V_3$ are adjacent.
Consider next an optimal tetrapartition $\{V^*_1, V^*_2, V^*_3, V^*_4\}$, and assume the vertex $u$ is in $V^*_j$.
If all but one of the components of $G[(V_4 \cup V_1) \setminus \{u\}]$ are in $V^*_j$, then $|V^*_j| \ge |V_4|$,
suggesting the partition $\{V_1, V_2, V_3, V_4\}$ is also optimal;
otherwise, $|V^*_1| \le |V^*_2| \le |V_1|$ and thus from $|V_2| + |V_3| \ge |V_4|$ we have $|V^*_4| \ge \frac 12 (2 - \frac 16) |V_4| = \frac {11}{12} |V_4|$,
suggesting the partition $\{V_1, V_2, V_3, V_4\}$ is a $\frac {12}{11}$-approximation.

In summary, if $|V_2| \ge \frac 16 |V_4|$, then the partition $\{V_1, V_2, V_3, V_4\}$ is a $\frac {24}{13}$-approximation.
Otherwise, we have $|V_2| < \frac 16 |V_4|$.
Furthermore, if one of $V_1$ and $V_2$, say $V_i$, is adjacent to $V_3$, then $|V_2| + |V_3| \ge |V_i| + |V_3| \ge |V_4|$;
if none of $V_1$ and $V_2$ is adjacent to $V_3$, then the same proof earlier shows that $V_3$ is adjacent to a component $G[V^u_{4\ell}]$ of $G[V_4 \setminus \{u\}]$,
suggesting $|V_1| + |V_3| \ge |V^u_{4\ell}| + |V_3| \ge |V_4|$.
It follows that, either way we have $|V_2| + |V_3| \ge |V_4|$ and thus $n = |V_1| + |V_2| + |V_3| + |V_4| > 2 |V_4|$.
This completes the proof of item 5).

The last item 6) can be proven by a simple contradiction by setting a proper $U \subset V_j$ to enable the Pull($U \subset V_j, V_2$) operation.
%\qed
\end{proof}

\begin{proposition}
\label{prop01}
In the following, we distinguish three cases for the tetrapartition $\{V_1, V_2, V_3, V_4\}$ of the vertex set $V$ with $|V_4| > \frac 25 n$,
to which none of the {\em Merge} and {\em Pull} operations is applicable, and Eq.~(\ref{eq01}) holds:
\begin{description}
\parskip=0pt
\item[{\rm Case 1:}]
	none of $V_1$ and $V_2$ is adjacent to $V_3$
	({\it i.e.}, both $V_1$ and $V_2$ are adjacent to $V_4$ only and at the vertex $u \in V_4$ only;
	see for an illustration in Figs.~\ref{fig05} and \ref{fig06}, to be handled in Theorems~\ref{thm03} and \ref{thm04});
\item[{\rm Case 2:}]
	none of $V_1$ and $V_2$ is adjacent to $V_4$
	({\it i.e.}, both $V_1$ and $V_2$ are adjacent to $V_3$ only and at the vertex $v \in V_3$ only;
	see for an illustration in Figure~\ref{fig07}, to be handled in Theorems~\ref{thm05} and \ref{thm06});
\item[{\rm Case 3:}]
	one of $V_1$ and $V_2$ is adjacent to $V_3$ and the other is adjacent to $V_4$
	(see for an illustration in Figure~\ref{fig08}, to be handled in Theorems~\ref{thm07} and \ref{thm08}).
\end{description}
The final conclusion is presented as Theorem~\ref{thm09}.
\end{proposition}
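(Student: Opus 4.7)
The plan is to derive the proposition by a direct adjacency case analysis, leveraging the Structure Properties already established in Lemma~\ref{lemma07}. First I would invoke Lemma~\ref{lemma07} item 1: $V_1$ and $V_2$ are not adjacent to one another. Since $G$ is connected and $V_1, V_2$ are non-empty, this forces each of $V_1$ and $V_2$ to be adjacent to at least one of $V_3$ and $V_4$.

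Next I would enumerate the possibilities for each of $V_1$ and $V_2$ (adjacent to $V_3$ only, to $V_4$ only, or to both), and argue that every compatible configuration falls into at least one of the three declared cases. If there is no edge from $V_1 \cup V_2$ to $V_3$, we are in Case~1 and both $V_1$ and $V_2$ must therefore be adjacent only to $V_4$; symmetrically, the absence of edges from $V_1 \cup V_2$ to $V_4$ places us in Case~2 with both adjacent only to $V_3$. Otherwise, at least one edge leaves $V_1 \cup V_2$ for $V_3$ and at least one for $V_4$; a brief sub-enumeration, using that each of $V_1, V_2$ is itself adjacent to $V_3 \cup V_4$, shows that these two edges may always be chosen to emerge from different parts among $V_1, V_2$, placing us in Case~3.

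To finish the parenthetical claims in Cases~1 and 2, I would invoke Lemma~\ref{lemma07} item 6: since in each of these cases both $V_1$ and $V_2$ are adjacent to a common part $V_j$ (with $j = 4$ in Case~1 and $j = 3$ in Case~2), all edges from $V_1 \cup V_2$ to $V_j$ must share a single endpoint, yielding the unique vertices $u \in V_4$ and $v \in V_3$ respectively.

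There is no serious obstacle in the proposition itself; it serves as an organizational scaffold that splits the remaining analysis into tractable sub-cases. The real work is deferred to Theorems~\ref{thm03}--\ref{thm08}, which must either construct a strictly better partition (via some new local improvement operation) or certify the current one as a $\frac{24}{13}$-approximation inside each case. Case~3 is likely to be the most delicate, since no single vertex of $V_3 \cup V_4$ is shared by $V_1$ and $V_2$ as a common attachment point, so the structural bounds available from Lemma~\ref{lemma07} items 3, 4, and 6 apply asymmetrically to the two small parts.
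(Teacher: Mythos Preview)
Your proposal is correct and matches the paper's intent. In fact, the paper does not provide a separate proof of this proposition at all: it is stated purely as an organizational roadmap, with the exhaustiveness of the three cases and the parenthetical claims left implicit in the Structure Properties of Lemma~\ref{lemma07}. Your argument makes this explicit---using item~1) to force each of $V_1,V_2$ to touch $V_3\cup V_4$, a short enumeration to cover all adjacency patterns by Cases~1--3, and item~6) to pin down the unique attachment vertices $u$ and $v$---and is exactly the justification the paper relies on without writing out.
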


Lemma~\ref{lemma07} states several structural properties of the graph $G = (V, E)$ with respect to the tetrapartition,
which is yet unknown to be a $\frac {24}{13}$-approximation or not.
For each of the three cases listed in Proposition~\ref{prop01}, Lemma~\ref{lemma07} leads to a further conclusion,
stated separately in Theorems~\ref{thm03}, \ref{thm05}, and \ref{thm07}.

\begin{theorem}
\label{thm03}
In Case 1, let $V'_4$ denote the union of the vertex sets of all the components of $G[V_4 \setminus \{u\}]$ that are adjacent to $V_3$;
if $|V'_4| \le |V_1| + |V_2| + \frac {11}{24} |V_4|$, then the partition $\{V_1, V_2, V_3, V_4\}$ is a $\frac {24}{13}$-approximation.
\end{theorem}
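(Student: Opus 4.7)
The plan is to lower-bound $\OPT$ by $\tfrac{13}{24}|V_4|$, from which the desired approximation ratio $|V_4|/\OPT \le \tfrac{24}{13}$ immediately follows. Fix any optimal tetrapartition and let $V^*_u$ denote its part containing the distinguished vertex $u$; the argument will show that in every sub-case either $|V^*_u|$ itself, or the maximum size of the other three optimal parts, is at least $\tfrac{13}{24}|V_4|$.

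The key structural fact I will exploit is that, under the Case~1 hypothesis together with item~3 of Lemma~\ref{lemma07}, the connected components of $G[V \setminus \{u\}]$ are precisely $V_1$, $V_2$, $V_3 \cup V'_4$ (connected since $V_3$ is and every component of $V'_4$ is adjacent to $V_3$), and the components $C_1, \ldots, C_p$ of $V''_4 := V_4 \setminus (\{u\} \cup V'_4)$, with each $|C_i| \le |V_1| \le |V_2| < \tfrac{1}{6}|V_4|$ by Eq.~(\ref{eq01}). Because the three non-$V^*_u$ optimal parts are connected subgraphs of $G[V \setminus V^*_u]$, this latter graph has at most three connected components; and each component of $G[V \setminus \{u\}]$ not entirely absorbed by $V^*_u$ contributes at least one. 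Hence, apart from $V_3 \cup V'_4$, at most two of the small components can remain unabsorbed.

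I then split into three cases. First, if $V^*_u \supseteq V_3 \cup V'_4$, then $|V^*_u| > \tfrac{1}{3}n > \tfrac{13}{24}|V_4|$, using $|V_4| < \tfrac{1}{2}n$. Second, if $V^*_u$ absorbs every small component but not $V_3 \cup V'_4$, then $V^*_u \supseteq \{u\} \cup V_1 \cup V_2 \cup V''_4$, so $|V^*_u| \ge |V_4| - |V'_4| + |V_1| + |V_2| \ge \tfrac{13}{24}|V_4|$, where the last inequality is exactly the theorem's hypothesis on $|V'_4|$. Third, if one or two small components are left unabsorbed, write $R = (V_3 \cup V'_4) \setminus V^*_u$ and let $T$ denote the set of unabsorbed small components. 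The conservation identity $|V^*_u| + |R| + \sum_{K \in T}|K| = n$, combined with the fact that the three other optimal parts are distributed over the $|T|$ components from $T$ and the at most $3 - |T|$ components of $G[R]$, yields by a weighted-averaging argument $\max(|V^*_u|, |R|/(3 - |T|)) \ge (n - \sum_{K \in T}|K|)/(4 - |T|)$. Using $n > \tfrac{11}{6}|V_4|$ and each $|K| < \tfrac{1}{6}|V_4|$ from Eq.~(\ref{eq01}), this lower bound is checked to exceed $\tfrac{13}{24}|V_4|$ in both $|T| = 1$ and $|T| = 2$.

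The main technical obstacle is to verify this conservation argument still yields the required bound when $V^*_u$ additionally disconnects $V_3 \cup V'_4$ into multiple pieces, because then each extra piece of $R$ must host its own optimal part, further tightening the combinatorial constraints. In such sub-cases the same averaging applies with $|R|/(3 - |T|)$ reinterpreted as a lower bound on the largest piece of $R$, and the theorem's hypothesis again controls $|V^*_u|$ whenever all small components are absorbed. Assembling all sub-cases gives $\OPT \ge \tfrac{13}{24}|V_4|$, completing the proof.
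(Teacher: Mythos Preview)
Your approach is correct and follows the same high-level strategy as the paper: analyze the optimal part $V^*_u$ containing $u$, and split according to which components of $G[V\setminus\{u\}]$ it absorbs. Two remarks, one cosmetic and one comparative.

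First, your ``conservation identity'' $|V^*_u| + |R| + \sum_{K\in T}|K| = n$ is really only the inequality $\ge n$, since $V^*_u$ may partially intersect an unabsorbed small component. Fortunately the inequality is in the direction you need, and your averaging bound $\OPT \ge (n - \sum_{K\in T}|K|)/(4-|T|)$ follows from $|V^*_u|\le\OPT$ together with $|R| \le (3-|T|)\cdot\OPT$ (the parts lying in $R$ partition $R$, and there are at most $3-|T|$ of them). This last inequality is insensitive to whether $G[R]$ is connected, so your ``main technical obstacle'' is not actually an obstacle.

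Second, the paper's proof is shorter because it exploits one extra structural fact you did not use: since some component of $G[V_4\setminus\{u\}]$ is adjacent to $V_3$ (else the partition is already optimal), item~3) of Lemma~\ref{lemma07} gives $|V_1|+|V_3|\ge |V_4|$. With this in hand the paper collapses everything into two cases. If $V^*_u$ absorbs all of $V_1,V_2$ and all components of $V''_4$, the hypothesis gives $|V^*_u|\ge \tfrac{13}{24}|V_4|$ (your Case~2). Otherwise one optimal part lies inside a small component, hence has size $\le |V_2|$, so the remaining three parts share at least $|V_1|+|V_3|+|V_4|\ge 2|V_4|$, giving $\OPT\ge \tfrac{2}{3}|V_4|$. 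This single estimate replaces your separate Case~1 and your $|T|=1,2$ sub-cases, and yields a sharper constant than your $\tfrac{5}{9}|V_4|$.
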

\begin{proof}
From Lemma~\ref{lemma07}, we assume without loss of generality that some component of $G[V_4 \setminus \{u\}]$ is adjacent to $V_3$,
as otherwise the partition $\{V_1, V_2, V_3, V_4\}$ is already optimal.
It follows that $|V_1| + |V_3| \ge |V_4|$.

If $|V'_4| \le |V_1| + |V_2| + \frac {11}{24} |V_4|$ ($< \frac {19}{24} |V_4|$ by Eq.~(\ref{eq01})),
then denote the components of $G[V_4 \setminus \{u\}]$ not adjacent to $V_3$ as $G[V^u_{4i}]$, $i = 1, 2, \ldots \ell$, with $\ell \ge 2$.
In an optimal $4$-partition denoted as $\{V^*_1, V^*_2, V^*_3, V^*_4\}$, assume $u \in V^*_j$.
If $V^*_j$ contains all these $\ell + 2$ subsets, $V_1, V_2, V^u_{4i}, i = 1, 2, \ldots \ell$,
then $|V^*_j| = |V_1| + |V_2| + |V_4| - |V'_4| \ge \frac {13}{24} |V_4|$.
In the other case, at least one of these $\ell + 2$ subsets becomes a separate part in $\{V^*_1, V^*_2, V^*_3, V^*_4\}$,
of which the size is at most $|V_2|$,
and thus we have $|V^*_4| \ge \frac 13 (|V_1| + |V_3| + |V_4|) \ge \frac 23 |V_4|$.
Therefore, we always have $|V^*_4| \ge \frac {13}{24} |V_4|$, and thus the partition $\{V_1, V_2, V_3, V_4\}$ is a $\frac {24}{13}$-approximation.
%\qed
\end{proof}

\begin{figure}[htpb]
\begin{center}
  \setlength{\unitlength}{0.7bp}%
  \begin{picture}(316.23, 188.75)(0,0)
  \put(0,0){\includegraphics[scale=0.7]{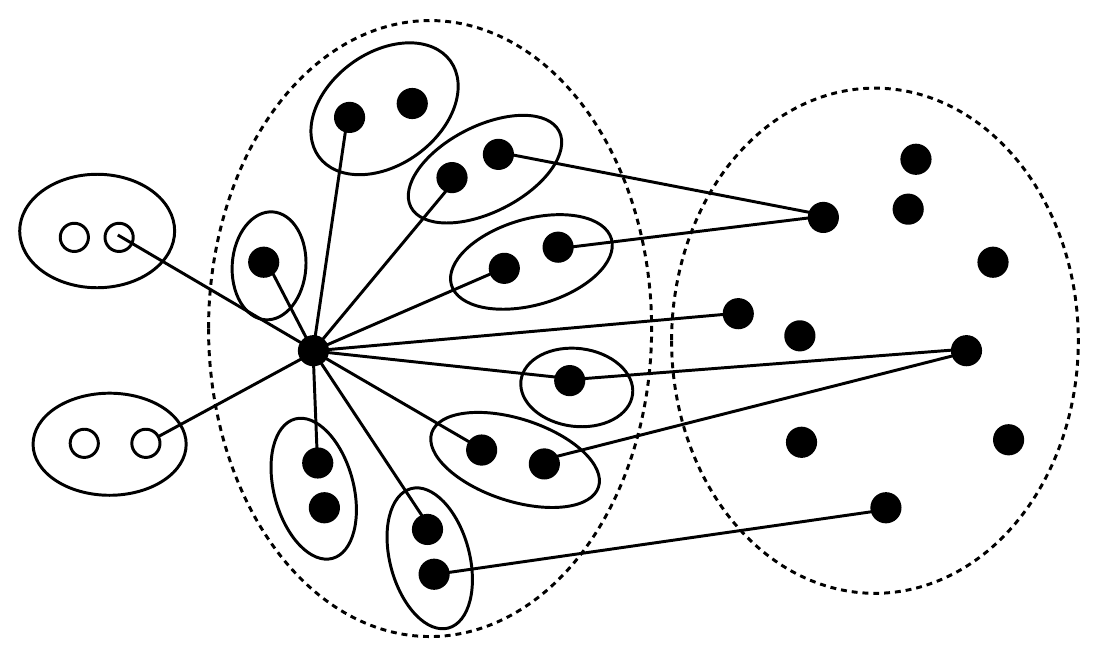}}
  \put(279.14,15.16){\fontsize{11.38}{13.66}\selectfont $V_3$}
  \put(26.23,34.44){\fontsize{11.38}{13.66}\selectfont $V_1$}
  \put(20.86,92.76){\fontsize{11.38}{13.66}\selectfont $V_2$}
  \put(71.96,9.11){\fontsize{11.38}{13.66}\selectfont $V_4$}
  \put(70.44,83.47){\fontsize{11.38}{13.66}\selectfont $u$}
  \end{picture}%
\end{center}
\caption{An illustration of the connectivity configuration of the graph $G = (V, E)$,
	with respect to the tetrapartition $\{V_1, V_2, V_3, V_4\}$ in Case 1.\label{fig05}}
\end{figure}

We have seen that $G[V_4]$ exhibits a nice star-like configuration (Figure~\ref{fig05}), due to $V_4$ being adjacent to $V_1$ and $V_2$.
Since none of $V_1$ and $V_2$ is adjacent to $V_3$ in Case 1, the connectivity configuration of $G[V_3]$ is unclear.
We next bipartition $V_3$ as evenly as possible, and let $\{V_{31}, V_{32}\}$ denote the achieved bipartition with $|V_{31}| \le |V_{32}|$.
If $|V_{32}| \le \frac 23 |V_3|$,
and assuming there are multiple components of $G[V_4 \setminus \{u\}]$ adjacent to $V_{3i}$ (for some $i \in \{1, 2\}$) with their total size greater than $|V_1|$,
then we find a minimal sub-collection of these components of $G[V_4 \setminus \{u\}]$ adjacent to $V_{3i}$ with their total size exceeding $|V_1|$,
denote by $V'_4$ the union of their vertex sets,
and subsequently create three new parts $V_4 \cup V_1 \setminus V'_4$, $V'_4 \cup V_{3i}$, and $V_{3,3-i}$,
while keeping $V_2$ unchanged.
One sees that this new tetrapartition is feasible and better, since $|V'_4| + |V_{3i}| \le 2 |V_1| + |V_{3i}| < \frac 13 |V_4| + \frac 23 |V_3| \le |V_4|$.

In the other case, by Lemma~\ref{lemma03}, $G[V_3]$ also exhibits a nice star-like configuration centering at some vertex $v$,
such that $G[V_3 \setminus \{v\}]$ is disconnected and each component of $G[V_3 \setminus \{v\}]$ has size less than $\frac 13 |V_3|$.
See for an illustration in Figure~\ref{fig06}.

\begin{figure}[htpb]
\begin{center}
  \setlength{\unitlength}{0.7bp}%
  \begin{picture}(316.23, 188.75)(0,0)
  \put(0,0){\includegraphics[scale=0.7]{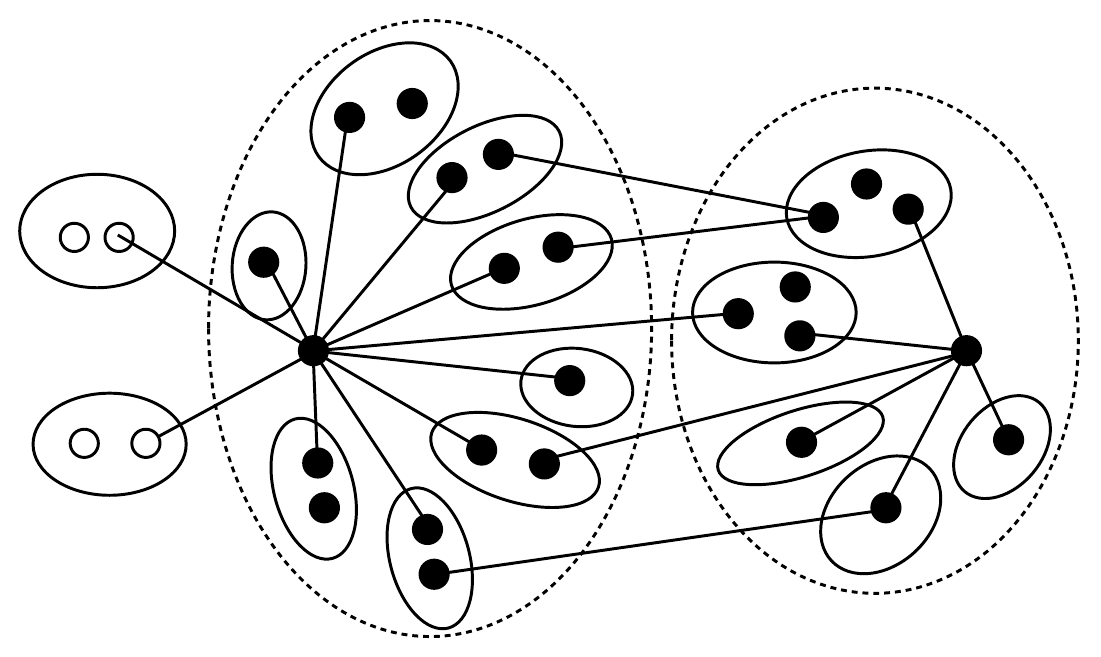}}
  \put(286.70,86.97){\fontsize{11.38}{13.66}\selectfont $v$}
  \put(279.14,15.16){\fontsize{11.38}{13.66}\selectfont $V_3$}
  \put(26.23,34.44){\fontsize{11.38}{13.66}\selectfont $V_1$}
  \put(20.86,92.76){\fontsize{11.38}{13.66}\selectfont $V_2$}
  \put(71.96,9.11){\fontsize{11.38}{13.66}\selectfont $V_4$}
  \put(70.44,83.47){\fontsize{11.38}{13.66}\selectfont $u$}
  \end{picture}%
\end{center}
\caption{An illustration of the ``bi-star''-like configuration of the graph $G = (V, E)$,
	with respect to the tetrapartition $\{V_1, V_2, V_3, V_4\}$ in Case 1.\label{fig06}}
\end{figure}

The following {\em Bridge-1} operation aims to find a subset $V'_3 \subset V_3$ and a subset $V'_4 \subset V_4$ to form a new part larger than $V_1$,
possibly cutting off another subset $V''_4$ from $V_4$ and merging it into $V_3$, and merging the old part $V_1$ into $V_4$.
This way, a better tetrapartition is achieved.
We will prove later that when such a bridging operation isn't applicable,
each component in the residual graph by deleting the two star centers has size at most $2|V_1| + |V_2|$,
and subsequently the tetrapartition can be shown to be a $\frac {12}7$-approximation.

\begin{definition}
\label{def05}
Operation {\em Bridge-1($V_3, V_4$):} 
\begin{itemize}
\parskip=0pt
\item
	{\em precondition:} In Case 1,
	there are multiple components of $G[V_4 \setminus \{u\}]$ adjacent to $V_3$ with their total size greater than $|V_1| + |V_2| + \frac {11}{24} |V_4|$,
	and there is a vertex $v \in V_3$ such that $G[V_3 \setminus \{v\}]$ is disconnected and each component has size less than $\frac 13 |V_3|$.
\item
	{\em effect:} Find a component $G[V^u_{4x}]$ of $G[V_4 \setminus \{u\}]$, if exists,
	that is adjacent to a component $G[V^v_{3y}]$ of $G[V_3 \setminus \{v\}]$;
	initialize $V'_4$ to be $V^u_{4x}$ and $V'_3$ to be $V^v_{3y}$;
	iteratively,
	\begin{itemize}
	\parskip=0pt
	\item
		let ${\cal C}_3$ denote the collection of the components of $G[V_3 \setminus \{v\}]$ that are adjacent to $V'_4$, excluding $V'_3$;
		\begin{itemize}
		\parskip=0pt
		\item
		if the total size of components in ${\cal C}_3$ exceeds $2 |V_1| - |V'_3|$,
		then the operation greedily finds a minimal sub-collection of these components of ${\cal C}_3$ with their total size exceeding $2 |V_1| - |V'_3|$,
		adds their vertex sets to $V'_3$,
		and proceeds to termination;
		\item
		if the total size of components in ${\cal C}_3$ is less than $2 |V_1| - |V'_3|$,
		then the operation adds the vertex sets of all these components to $V'_3$;
	\end{itemize}
	\item
		let ${\cal C}_4$ denote the collection of the components of $G[V_4 \setminus \{u\}]$ that are adjacent to $V'_3$, excluding $V'_4$;
		\begin{itemize}
		\parskip=0pt
		\item
		if the total size of components in ${\cal C}_4$ exceeds $|V_1| - |V'_4|$,
		then the operation greedily finds a minimal sub-collection of these components of ${\cal C}_4$ with their total size exceeding $|V_1| - |V'_4|$,
		adds their vertex sets to $V'_4$,
		and proceeds to termination;
		\item
		if the total size of components in ${\cal C}_4$ is less than $|V_1| - |V'_4|$,
		then the operation adds the vertex sets of all these components to $V'_4$;
		\end{itemize}
	\item
		if both ${\cal C}_3$ and ${\cal C}_4$ are empty, then the operation terminates without updating the partition.
	\end{itemize}
	At termination, exactly one of $|V'_3| > 2 |V_1|$ and $|V'_4| > |V_1|$ holds. 
	\begin{itemize}
	\parskip=0pt
	\item
		When $|V'_3| > 2 |V_1|$, we have $|V'_4| \le |V_1|$ and $|V'_3| < 2 |V_1| + \frac 13 |V_3|$;
		\begin{itemize}
		\parskip=0pt
		\item
			if the collection of the components of $G[V_4 \setminus \{u\}]$ that are adjacent to $V'_3$, excluding $V'_4$, exceeds $|V_1| - |V'_4|$,
			then the operation greedily finds a minimal sub-collection of these components with their total size exceeding $|V_1| - |V'_4|$,
			and denotes by $V''_4$ the union of their vertex sets;
			subsequently, the operation creates three new parts $V_4 \cup V_1 \setminus (V'_4 \cup V''_4)$, $(V'_4 \cup V''_4) \cup V'_3$, and $V_3 \setminus V'_3$;
		\item
			otherwise, the operation greedily finds a minimal sub-collection of the components of $G[V_4 \setminus \{u\}]$ that aren't adjacent to $V'_3$
			with their total size exceeding $|V_1| - |V'_4|$,
			and denotes by $V''_4$ the union of their vertex sets;
			subsequently, the operation creates three new parts $V_4 \cup V_1 \setminus (V'_4 \cup V''_4)$, $V'_4 \cup V'_3$, and $(V_3 \setminus V'_3) \cup V''_4$.
		\end{itemize}
	\item
		When $|V'_4| > |V_1|$, we have $|V'_4| \le 2 |V_1|$ and $|V'_3| \le 2 |V_1|$;
		the operation creates three new parts $V_4 \cup V_1 \setminus V'_4$, $V'_4 \cup V'_3$, and $V_3 \setminus V'_3$.
	\item
		In all the above three cases of updating, the part $V_2$ is kept unchanged.
	\end{itemize}
\end{itemize}
\end{definition}

\begin{lemma}
\label{lemma08}
When there are multiple components of $G[V_4 \setminus \{u\}]$ adjacent to $V_3$ with their total size greater than $|V_1| + |V_2| + \frac {11}{24} |V_4|$ in Case 1,
and an operation {\em \mbox{Bridge-1}($V_3, V_4$)} updates the tetrapartition,
then the updated tetrapartition is feasible and better.
\end{lemma}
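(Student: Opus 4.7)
The plan is to verify the two claims of Lemma~\ref{lemma08}: that each of the three new parts induces a connected subgraph of $G$, and that the resulting tetrapartition is strictly better than $\{V_1,V_2,V_3,V_4\}$ in the lexicographic sense from Section 2.

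For feasibility, the key observation is that $V'_3$ and $V'_4$ are grown iteratively starting from a seed pair $V^v_{3y}$, $V^u_{4x}$ of components of $G[V_3\setminus\{v\}]$ and $G[V_4\setminus\{u\}]$ joined by an edge, and at each iteration one only adds a whole component that is adjacent to the already-grown $V'_3$ or $V'_4$; hence $G[V'_3\cup V'_4]$ is connected by a straightforward induction, and the same inductive step shows $G[(V'_4\cup V''_4)\cup V'_3]$ is connected in the first sub-case of $|V'_3|>2|V_1|$ where $V''_4$ is chosen among components adjacent to $V'_3$. The residual $V_3\setminus V'_3$ always contains the star centre $v$ (because $V'_3$ is, by construction, a union of components of $G[V_3\setminus\{v\}]$), and every remaining such component is adjacent to $v$, so $G[V_3\setminus V'_3]$ is connected; symmetrically $u$ remains in $V_4\setminus V'_4$ and binds together the remaining $V_4$-components together with $V_1$, which in Case~1 attaches to $V_4$ only at $u$. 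For the second sub-case of $|V'_3|>2|V_1|$, where $V''_4$ is built from components \emph{not} adjacent to $V'_3$, one has to verify that each such component is adjacent to $V_3\setminus V'_3$, which will follow from the Case~1 bi-star structure combined with the termination condition of the iterative growth.

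For the "better" claim, I will bound the size of each new part and show that the maximum cardinality is at most $|V_4|$, with strict inequality unless $|V_3|=|V_4|$, in which case the second-largest cardinality strictly decreases. The essential size facts are: every component of $G[V_4\setminus\{u\}]$ has size at most $|V_1|$ by Lemma~\ref{lemma07}(3); every component of $G[V_3\setminus\{v\}]$ has size less than $\tfrac{1}{3}|V_3|$ by the precondition of Bridge-1; minimality of the greedy sub-collections yields $|V''_4|<2|V_1|$ and $|V'_3|<2|V_1|+\tfrac13|V_3|$; and from Eq.~(\ref{eq01}) we have $|V_1|\le|V_2|<\tfrac16|V_4|$ and $|V_3|\le|V_4|$. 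Routine arithmetic then shows (i) the bridging part, whether $V'_4\cup V'_3$ or $(V'_4\cup V''_4)\cup V'_3$, has size bounded by $4|V_1|+\tfrac13|V_3|<|V_4|$; (ii) the residual $V_3\setminus V'_3$ or $(V_3\setminus V'_3)\cup V''_4$ has size strictly less than $|V_3|$; and (iii) the new big part $V_4\cup V_1\setminus(V'_4\cup V''_4)$ (resp.\ $V_4\cup V_1\setminus V'_4$) has size strictly less than $|V_4|$ precisely because the greedy construction guarantees $|V'_4\cup V''_4|>|V_1|$ (resp.\ $|V'_4|>|V_1|$). Since $V_2$ is untouched and $|V_2|<\tfrac16|V_4|$ is the smallest part throughout, combining these bounds yields max $<|V_4|$ whenever $|V_3|<|V_4|$, and max $=|V_4|$ with the second-largest dropping below $|V_3|$ in the tie case.

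The main obstacle I expect is the connectivity check in the second sub-case of $|V'_3|>2|V_1|$: justifying that each $V_4$-component placed in $V''_4$ is adjacent to $V_3\setminus V'_3$. This should be handled by observing that when the iterative phase terminates without having enlarged $V'_4$ via components adjacent to $V'_3$, any $V_4$-component that is adjacent to $V_3$ at all must touch $V_3\setminus V'_3$, and by the branching hypothesis of this sub-case such components exist in sufficient total size, so the greedy selection can (and must) restrict to them. Once this selection rule is made explicit, the connectivity of $(V_3\setminus V'_3)\cup V''_4$ is immediate through the adjacency of each chosen component to $V_3\setminus V'_3$, and the remaining size bookkeeping is the routine arithmetic outlined above.
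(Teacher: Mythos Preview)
Your approach is correct and essentially matches the paper's proof: bound each of the three new parts strictly below $|V_4|$ using the sub-collection minimality, the component-size caps from Lemma~\ref{lemma07}(3) and the Bridge-1 precondition, and the inequalities in Eq.~(\ref{eq01}). Your feasibility analysis---especially the connectivity check in the second sub-case, where you correctly argue that $V''_4$ can be (and must be) chosen among components adjacent to $V_3\setminus V'_3$---is more explicit than the paper's, which omits feasibility entirely; note also that your tie-case distinction is unnecessary, since in every case all three new parts (and $V_2$) are strictly smaller than $|V_4|$, so the maximum always drops.
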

\begin{proof}
Recall that there are three cases of updating the tetrapartition.

In the first two cases, the operation achieves a subset $V'_3$ of size $|V'_3| > 2 |V_1|$,
and by the sub-collection minimality and each component of $G[V_3 \setminus \{v\}]$ being smaller than $\frac 13 |V_3|$, $|V'_3| < 2 |V_1| + \frac 13 |V_3|$.
In the first case, $V''_4$ can be located and again by sub-collection minimality we have $|V_1| < |V'_4| + |V''_4| \le 2 |V_1|$.
Therefore, for the three new parts,
$|V_4 \cup V_1 \setminus (V'_4 \cup V''_4)| < |V_4|$,
$|(V'_4 \cup V''_4) \cup V'_3| < 4 |V_1| + \frac 13 |V_3| < |V_4|$, and
$|V_3 \setminus V'_3| < |V_3|$, suggesting a better partition.

In the second case, the components of $G[V_4 \setminus \{u\}]$ that aren't adjacent to $V'_3$ have their total size exceeding $|V_2| + \frac {11}{24} |V_4|$,
and thus $V''_4$ can be located and again by sub-collection minimality we have $|V_1| < |V'_4| + |V''_4| \le 2 |V_1|$.
Therefore, for the three new parts,
$|V_4 \cup V_1 \setminus (V'_4 \cup V''_4)| < |V_4|$,
$|V'_4 \cup V'_3| < 3|V_1| + \frac 13 |V_3| < |V_4|$, and
$|(V_3 \setminus V'_3) \cup V''_4| < |V_3|$, suggesting a better partition.

The third case is similar to the first case.
In this case, the operation achieves a subset $V'_4$ of size $|V'_4| > |V_1|$,
and by the sub-collection minimality and each component of $G[V_4 \setminus \{u\}]$ being no larger than $|V_1|$, $|V'_4| \le 2 |V_1|$.
Therefore, for the three new parts,
$|V_4 \cup V_1 \setminus V'_4| < |V_4|$,
$|V'_4 \cup V'_3| \le 4 |V_1| < |V_4|$, and
$|V_3 \setminus V'_3| < |V_3|$, suggesting a better partition.

This proves the lemma.
%\qed
\end{proof}

\begin{lemma}
\label{lemma09}
When there are multiple components of $G[V_4 \setminus \{u\}]$ adjacent to $V_3$ with their total size greater than $|V_1| + |V_2| + \frac {11}{24} |V_4|$ in Case 1,
no Bridge-1 operation is applicable,
every connected component of $G[V \setminus \{u, v\}]$ has size at most $\max\{3 |V_1|, |V_2|\} \le 2 |V_1| + |V_2|$.
\end{lemma}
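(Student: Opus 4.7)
The plan is to enumerate the connected components of $G[V \setminus \{u, v\}]$ and verify each has size at most $\max\{3|V_1|, |V_2|\}$. Since $u \in V_4$ and $v \in V_3$, removing them leaves $G[V_1]$ and $G[V_2]$ intact and connected, so these form two components of sizes $|V_1|$ and $|V_2|$, both within the target. In Case~1, Lemma~\ref{lemma07} ensures $V_1$ and $V_2$ are adjacent to the rest only via $u$ and not at all to $V_3$, so after deleting $\{u,v\}$ no edge links $V_1 \cup V_2$ to any vertex outside itself. Every remaining connected component of $G[V \setminus \{u,v\}]$ is therefore a union of pieces $V^v_{3y}$ of $G[V_3 \setminus \{v\}]$ and pieces $V^u_{4x}$ of $G[V_4 \setminus \{u\}]$, glued by cross-edges in $E(V_3 \setminus \{v\}, V_4 \setminus \{u\})$.

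I split these remaining components into three shapes. Shape~(a) is a lone $V^u_{4x}$ with no cross-edge to any $V^v_{3y}$, bounded by $|V_1|$ via Lemma~\ref{lemma07} item~3. Shape~(b) is a lone $V^v_{3y}$ with no cross-edge to any $V^u_{4x}$; the star-like structure of $G[V_3]$ derived before Definition~\ref{def05} bounds this by $\frac{1}{3}|V_3|$. Shape~(c) is a mixed component $C$ containing at least one adjacent pair $(V^u_{4x}, V^v_{3y})$. The central step handles shape~(c) by mentally simulating the Bridge-1 alternating growth inside $C$, seeded at this adjacent pair: non-applicability of Bridge-1 forbids termination via either of the two exceeding-threshold updates (each producing a strictly better tetrapartition by Lemma~\ref{lemma08}), so the growth must exit through the branch in which both ${\cal C}_3$ and ${\cal C}_4$ are empty. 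At that exit $V'_3 \cup V'_4$ has absorbed every piece in $C$, so $V'_3 \cup V'_4 = C$, and the invariants $|V'_3| \le 2|V_1|$ and $|V'_4| \le |V_1|$ force $|C| \le 3|V_1| \le \max\{3|V_1|, |V_2|\}$.

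The main obstacle is shape~(b), where the generic bound $\frac{1}{3}|V_3|$ is not automatically below $\max\{3|V_1|, |V_2|\}$. I would close this gap using the hypothesis that multiple $V^u_{4x}$'s adjacent to $V_3$ have total size exceeding $|V_1| + |V_2| + \frac{11}{24}|V_4|$: in this subcase every such $V^u_{4x}$ must be adjacent to $V_3$ only at $v$, otherwise a Bridge-1 seed would exist; combined with the non-applicability of Pull on these pieces, which forces $|V^u_{4x}| \ge |V_4| - |V_3|$, the structure becomes rigid enough that a lone $V^v_{3y}$ of size exceeding $\max\{3|V_1|, |V_2|\}$ would permit a swap — between this $V^v_{3y}$ and a suitably chosen $V^u_{4x}$, routed through the centers $u$ and $v$ — producing a strictly better tetrapartition and contradicting termination of the algorithm. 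The delicate piece is verifying connectivity of both new parts simultaneously under the swap; once that is in hand, the size decrease follows from a direct comparison using the no-Pull lower bound on $|V^u_{4x}|$.
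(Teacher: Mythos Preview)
Your treatment of $V_1$, $V_2$, shape (a), and shape (c) is correct and matches the paper's argument. The gap is entirely in your handling of shape (b), the lone $V^v_{3y}$.

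First, the setup reasoning you give there is confused. The sentence ``in this subcase every such $V^u_{4x}$ must be adjacent to $V_3$ only at $v$, otherwise a Bridge-1 seed would exist'' conflates the existence of a Bridge-1 seed with the applicability of Bridge-1. A seed can exist (so some shape-(c) components are present) while other $V^v_{3y}$'s are still standalone; these cases are not mutually exclusive, and the presence of one does not constrain the other in the way you suggest. The Pull lower bound $|V^u_{4x}| \ge |V_4| - |V_3|$ is correct but plays no role. So the chain of deductions leading to your swap is not sound as written, and the swap itself is left unfinished.

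Second, and more to the point, the paper does not treat shape (b) as a separate obstacle at all: it handles it uniformly with shape (c) by allowing Bridge-1 to be seeded from a lone component $V'_3 = V^v_{3y}$ with $V'_4 = \emptyset$ (this is the parenthetical ``or a component $V'_3$ of $G[V_3 \setminus \{v\}]$, respectively'' in the paper's proof). If such a $V^v_{3y}$ had $|V^v_{3y}| > 2|V_1|$, the termination clause ``$|V'_3| > 2|V_1|$'' of Definition~\ref{def05} fires immediately; the first sub-case there is vacuous (nothing is adjacent to $V'_4 = \emptyset$), so the second sub-case selects $V''_4$ from the components of $G[V_4 \setminus \{u\}]$ adjacent to $V_3$, whose total size exceeds $|V_1| + |V_2| + \frac{11}{24}|V_4| > |V_1|$ by hypothesis. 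The resulting update $\{V_4 \cup V_1 \setminus V''_4,\ V^v_{3y},\ (V_3 \setminus V^v_{3y}) \cup V''_4,\ V_2\}$ is feasible (connectivity of the third part goes through $v$, since $V''_4$ is adjacent to $V_3$ but not to $V^v_{3y}$) and better by Lemma~\ref{lemma08}. This contradicts ``no Bridge-1 applicable'', so $|V^v_{3y}| \le 2|V_1|$. Note that this update is exactly the ``swap'' you were reaching for, so you were close --- you just did not recognise that it is already built into Bridge-1, which is why the non-applicability hypothesis suffices without any additional ad hoc construction.
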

\begin{proof}
From the definition of the Bridge-1 operation,
when it starts with a component $V'_4$ of $G[V_4 \setminus \{u\}]$ (or a component $V'_3$ of $G[V_3 \setminus \{v\}]$, respectively),
at the end it achieves $|V'_4| \le |V_1|$ and $|V'_3| \le 2 |V_1|$ without updating the partition.
Clearly, $G[V'_4 \cup V'_3]$ is a connected component of $G[V \setminus \{u, v\}]$.
One also sees that $V_2$ is also a connected component of $G[V \setminus \{u, v\}]$.
Therefore, every connected component of $G[V \setminus \{u, v\}]$ has size at most $\max\{3 |V_1|, |V_2|\} \le 2 |V_1| + |V_2|$.
%\qed
\end{proof}

In the remaining case of Case 1 where the partition $\{V_1, V_2, V_3, V_4\}$ is yet unknown to be a $\frac {24}{13}$-approximation,
by Lemma~\ref{lemma09} we know that the graph $G = (V, E)$ exhibits a ``bi-star''-like configuration, with respect to the tetrapartition,
in that there is a vertex $u \in V_4$ ($v \in V_3$, respectively) such that $G[V_4 \setminus \{u\}]$ ($G[V_3 \setminus \{v\}]$, respectively) is disconnected,
and every connected component of $G[V \setminus \{u, v\}]$ has size at most $2 |V_1| + |V_2|$.
In an optimal tetrapartition denoted as $\{V^*_1, V^*_2, V^*_3, V^*_4\}$,
at least two parts contain none of the two center vertices $u$ and $v$, and thus their sizes are at most $2 |V_1| + |V_2|$.
Consequently $|V^*_4| \ge \frac 12 (|V| - 4 |V_1| - 2 |V_2|) \ge \frac 7{12} |V_4|$.
That is, the current tetrapartition $\{V_1, V_2, V_3, V_4\}$ is a $\frac {12}7$-approximation.
We conclude the following theorem:

\begin{theorem}
\label{thm04}
In Case 1, if there are multiple components of $G[V_4 \setminus \{u\}]$ adjacent to $V_3$ with their total size greater than $|V_1| + |V_2| + \frac {11}{24} |V_4|$
and the {\em Bridge-1} operation isn't applicable,
then the partition $\{V_1, V_2, V_3, V_4\}$ is a $\frac {12}7$-approximation.
\end{theorem}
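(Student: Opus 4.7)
}
The plan is to leverage the structural consequence of non-applicability of the Bridge-1 operation (Lemma~\ref{lemma09}), which tells us that every connected component of $G[V \setminus \{u, v\}]$ has size at most $2|V_1| + |V_2|$. This ``bi-star''-like configuration with centres $u \in V_4$ and $v \in V_3$ is the only structural fact I will need; everything else is a direct counting argument against an optimal tetrapartition.

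First, I would fix an arbitrary optimal tetrapartition $\{V^*_1, V^*_2, V^*_3, V^*_4\}$ of $V$. Since only two vertices $u$ and $v$ act as ``cut'' centres, these two vertices lie in at most two of the four optimal parts, so at least two of the optimal parts are entirely contained in $V \setminus \{u, v\}$. Each such part, being an induced connected subgraph, must lie inside a single connected component of $G[V \setminus \{u, v\}]$; by Lemma~\ref{lemma09}, each of them therefore has size at most $2|V_1| + |V_2|$.

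Consequently, the remaining two optimal parts together cover at least $n - 2(2|V_1| + |V_2|) = n - 4|V_1| - 2|V_2|$ vertices, so by averaging the larger of them satisfies $|V^*_4| \ge \tfrac{1}{2}(n - 4|V_1| - 2|V_2|)$. The last step is to bound this quantity from below using Eq.~(\ref{eq01}). From $|V_2| + |V_3| \ge |V_4|$ one gets $n = |V_1| + |V_2| + |V_3| + |V_4| \ge |V_1| + 2|V_4|$, and combining with $|V_1|, |V_2| < \tfrac{1}{6}|V_4|$ yields
\[
n - 4|V_1| - 2|V_2| \ge 2|V_4| - 3|V_1| - 2|V_2| > 2|V_4| - \tfrac{1}{2}|V_4| - \tfrac{1}{3}|V_4| = \tfrac{7}{6}|V_4|,
\]
so $|V^*_4| \ge \tfrac{7}{12}|V_4|$, i.e.\ the current partition is a $\tfrac{12}{7}$-approximation.

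The proof itself is short once Lemma~\ref{lemma09} is in hand; the only delicate point is verifying that the numerical slack in Eq.~(\ref{eq01}) is exactly enough to drive $n - 4|V_1| - 2|V_2|$ above $\tfrac{7}{6}|V_4|$. I expect this arithmetic, together with the clean observation that any two ``$u,v$-free'' optimal parts are individually captured by a single component of $G[V \setminus \{u,v\}]$, to be the substantive content; no further local-improvement machinery should be needed.
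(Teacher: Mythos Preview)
Your proposal is correct and follows essentially the same argument as the paper: invoke Lemma~\ref{lemma09} to bound every component of $G[V\setminus\{u,v\}]$ by $2|V_1|+|V_2|$, observe that at least two optimal parts avoid $\{u,v\}$ and hence are bounded by that quantity, and average the remainder to get $|V^*_4|\ge\frac12(n-4|V_1|-2|V_2|)\ge\frac7{12}|V_4|$. Your explicit arithmetic via $n\ge|V_1|+2|V_4|$ and $|V_1|,|V_2|<\frac16|V_4|$ simply fills in the step the paper leaves implicit.
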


%==================================================================================================
In Case 2, none of $V_1$ and $V_2$ is adjacent to $V_4$ ({\it i.e.}, both $V_1$ and $V_2$ are adjacent to $V_3$ only and at the vertex $v \in V_3$ only).
One sees that Case 2 is almost symmetric to Case 1, by switching $V_3$ with $V_4$;
nevertheless, since $V_3$ might be strictly smaller than $V_4$, the argument differs slightly.

We have seen that $G[V_3]$ exhibits a nice star-like configuration (Lemma~\ref{lemma07}), but the connectivity configuration of $G[V_4]$ is unclear.
We next bipartition $V_4$ into $\{V_{41}, V_{42}\}$ as evenly as possible with $|V_{41}| \le |V_{42}|$.
If $|V_{42}| \le \frac 23 |V_4|$,
and assuming there are multiple components of $G[V_3 \setminus \{v\}]$ adjacent to $V_{4i}$ (for some $i \in \{1, 2\}$) with their total size greater than $|V_1|$,
then we find a minimal sub-collection of these components of $G[V_3 \setminus \{v\}]$ adjacent to $V_{4i}$ with their total size exceeding $|V_1|$,
denote by $V'_3$ the union of their vertex sets,
and subsequently create three new parts $V_3 \cup V_1 \setminus V'_3$, $V'_3 \cup V_{4i}$, and $V_{4,3-i}$,
while keeping $V_2$ unchanged.
One sees that this new tetrapartition is feasible and better, since $|V'_3| \le 2 |V_1| < \frac 13 |V_4|$.

In the other case, by Lemma~\ref{lemma03}, $G[V_4]$ also exhibits a nice star-like configuration centering at some vertex $u$,
such that $G[V_4 \setminus \{u\}]$ is disconnected and each component of $G[V_4 \setminus \{u\}]$ has size less than $\frac 13 |V_4|$.
See for an illustration in Figure~\ref{fig07}.
Furthermore, if the vertices of $V_3$ aren't adjacent to any vertex of $V_4$ other than $u$,
then very the same as in Case 1 the graph $G = (V, E)$ exhibits a ``bi-star''-like configuration, with respect to the partition,
in that there is a vertex $u \in V_4$ ($v \in V_3$, respectively) such that $G[V_4 \setminus \{u\}]$ ($G[V_3 \setminus \{v\}]$, respectively) is disconnected,
and every connected component of $G[V \setminus \{u, v\}]$ has size less than $\frac 13 |V_4|$.
The same succeeding argument states that the tetrapartition $\{V_1, V_2, V_3, V_4\}$ is a $\frac 32$-approximation.

\begin{figure}[htpb]
\begin{center}
  \setlength{\unitlength}{0.7bp}%
  \begin{picture}(315.22, 188.75)(0,0)
  \put(0,0){\includegraphics[scale=0.7]{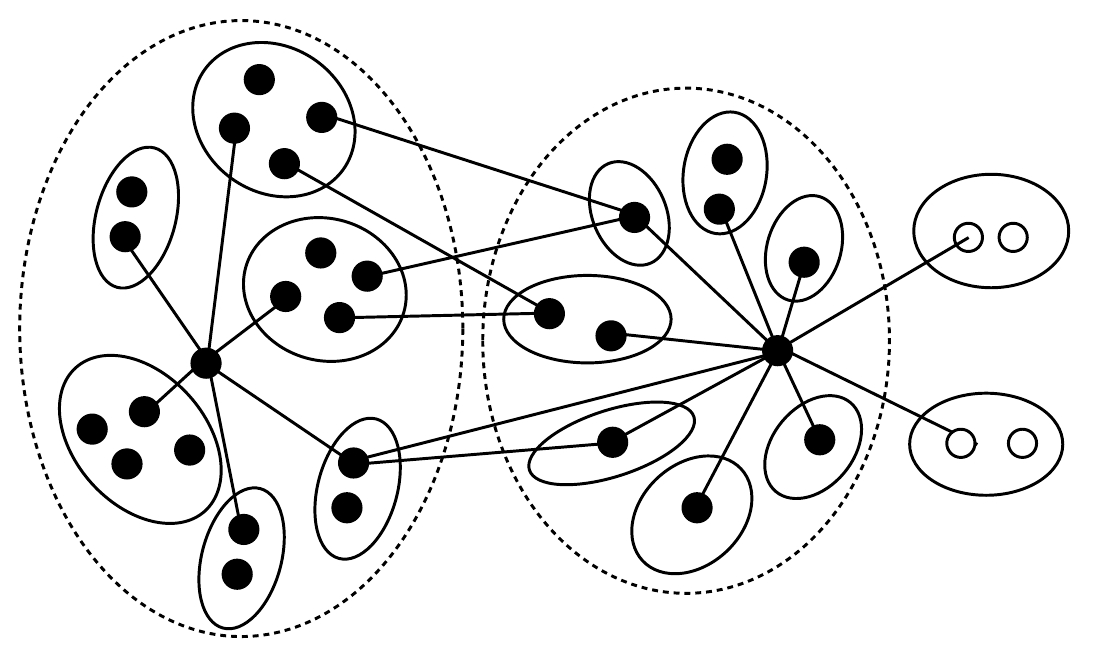}}
  \put(233.62,86.51){\fontsize{11.38}{13.66}\selectfont $v$}
  \put(224.76,15.16){\fontsize{11.38}{13.66}\selectfont $V_3$}
  \put(278.69,34.44){\fontsize{11.38}{13.66}\selectfont $V_1$}
  \put(278.35,92.76){\fontsize{11.38}{13.66}\selectfont $V_2$}
  \put(103.08,9.11){\fontsize{11.38}{13.66}\selectfont $V_4$}
  \put(41.06,87.52){\fontsize{11.38}{13.66}\selectfont $u$}
  \end{picture}%
\end{center}
\caption{An illustration of the ``bi-star''-like configuration of the graph $G = (V, E)$,
	with respect to the tetrapartition $\{V_1, V_2, V_3, V_4\}$ in Case 2.\label{fig07}}
\end{figure}

In the remaining case of Case 2, there are vertices of $V_3$ adjacent to some vertices of $V_4 \setminus \{u\}$,
and we design the following {\em Bridge-2} operation almost identical to Bridge-1, but the roles of $V_3$ and $V_4$ are swapped.

\begin{theorem}
\label{thm05}
In Case 2, let $V'_3$ denote the union of the vertex sets of all the components of $G[V_3 \setminus \{v\}]$ that are adjacent to $V_4$;
if $|V'_3| \le |V_2| + \frac {11}{24} |V_4|$, then the partition $\{V_1, V_2, V_3, V_4\}$ is a $\frac {24}{13}$-approximation.
\end{theorem}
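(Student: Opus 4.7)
The plan is to mirror the proof of Theorem~\ref{thm03}, with the roles of $V_3$ and $V_4$ (and of the cut vertices $u$ and $v$) interchanged. The whole argument will be driven by the inequality $|V_1|+|V_3|\ge |V_4|$, which in Case~2 is supplied by Lemma~\ref{lemma07}~item~2) applied to $V_1$ (since $V_1$ is adjacent to $V_3$). I first dispose of the degenerate subcase $V'_3=\emptyset$: then the only edges from $V_4$ to the rest of $G$ pass through $v$, which puts $G$ into the bi-star configuration already analyzed just before the theorem and yields a $\tfrac{3}{2}$-approximation, which is a fortiori a $\tfrac{24}{13}$-approximation. Henceforth I assume $V'_3\neq\emptyset$.

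Let $W_1,\ldots,W_\ell$ denote the components of $G[V_3\setminus\{v\}]$ that are not adjacent to $V_4$, so that $V_3=\{v\}\cup V'_3\cup W_1\cup\cdots\cup W_\ell$ and $\sum_k|W_k|=|V_3|-1-|V'_3|$. Applying Lemma~\ref{lemma07}~item~4) to both $V_1$ and $V_2$ (each is eligible by Case~2 and by $|V_1|\le|V_2|<\tfrac16|V_4|<\tfrac13|V_4|$ from Eq.~(\ref{eq01})), every $|W_k|\le|V_1|\le|V_2|$ and no $W_k$ is adjacent to $V_1$ or $V_2$. Combined with the Case~2 assumption that neither $V_1$ nor $V_2$ touches $V_4$, and with $V_1, V_2$ being non-adjacent by item~1), I get that $V_1, V_2, W_1,\ldots,W_\ell$ are pairwise non-adjacent connected subgraphs of size at most $|V_2|$, and that, together with the one large set $V_4\cup V'_3$, they constitute the connected components of $G\setminus\{v\}$.

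Next fix an optimal tetrapartition $\{V^*_1,V^*_2,V^*_3,V^*_4\}$ and let $j$ be the index with $v\in V^*_j$. If $V_1\cup V_2\cup W_1\cup\cdots\cup W_\ell\subseteq V^*_j$, then
\[
|V^*_j|\ \ge\ 1+|V_1|+|V_2|+\sum_{k}|W_k|\ =\ |V_1|+|V_2|+|V_3|-|V'_3|,
\]
and substituting the hypothesis $|V'_3|\le|V_2|+\tfrac{11}{24}|V_4|$ together with $|V_1|+|V_3|\ge|V_4|$ gives $|V^*_j|\ge\tfrac{13}{24}|V_4|$. Otherwise, some vertex of one of $V_1, V_2, W_1,\ldots,W_\ell$ lies in a part $V^*_i$ with $i\neq j$; since $V^*_i$ is connected and avoids $v$, it must be confined to that single isolated component, so $|V^*_i|\le|V_2|$. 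The remaining three parts then cover at least $n-|V_2|=|V_1|+|V_3|+|V_4|\ge 2|V_4|$ vertices, so the largest of them has size at least $\tfrac{2}{3}|V_4|>\tfrac{13}{24}|V_4|$. Either way $\OPT\ge\tfrac{13}{24}|V_4|$, and therefore $|V_4|/\OPT\le\tfrac{24}{13}$.

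The one step that requires genuine care is the structural paragraph: one has to assemble items~1) and~4) of Lemma~\ref{lemma07} together with the defining Case~2 non-adjacency assumptions to certify that $V_1, V_2, W_1,\ldots,W_\ell$ are truly cut off from one another and from $V_4\cup V'_3$ inside $G\setminus\{v\}$. That is exactly what permits the confinement of $V^*_i$ to a single small component in the second subcase; once this is verified, the remaining size accounting is the direct analogue of that in the proof of Theorem~\ref{thm03}.
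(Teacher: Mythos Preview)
Your proof is correct and follows essentially the same approach as the paper's: both identify the $\ell+2$ small components $V_1,V_2,W_1,\ldots,W_\ell$ of $G\setminus\{v\}$, then split on whether the optimal part $V^*_j\ni v$ contains all of them (giving $|V^*_j|\ge|V_1|+|V_3|-\tfrac{11}{24}|V_4|\ge\tfrac{13}{24}|V_4|$) or not (giving some $|V^*_i|\le|V_2|$ and hence $|V^*_4|\ge\tfrac{2}{3}|V_4|$). Your separate treatment of $V'_3=\emptyset$ is unnecessary since the main argument already covers it, and your appeal to the pre-theorem ``bi-star'' paragraph there doesn't quite match (that paragraph assumes the center $u\in V_4$ has already been located), but this is harmless.
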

\begin{proof}
See for an illustration of Case 2 in Figure~\ref{fig07}.
We note that the theorem statement is almost the same as Theorem~\ref{thm03},
but the quantity $|V_1|$ disappears since it is combined with $|V_3|$ to replace $|V_4|$.
The proof thus is almost the same as the proof for Theorem~\ref{thm03}.

If $|V'_3| \le |V_2| + \frac {11}{24} |V_4| < \frac 9{24} |V_4|$ (using Eq.~(\ref{eq01}) we have $|V_3| > \frac {20}{24} |V_4|$),
then denote the components of $G[V_3 \setminus \{v\}]$ not adjacent to $V_4$ as $G[V^v_{3i}]$, $i = 1, 2, \ldots \ell$, with $\ell \ge 3$.
Recall that $|V_1| + |V_3| \ge |V_4|$ and $|V^v_{3i}| \le |V_1|$, for each $i = 1, 2, \ldots \ell$.
In an optimal $4$-partition denoted as $\{V^*_1, V^*_2, V^*_3, V^*_4\}$, assume $v \in V^*_j$.
If $V^*_j$ contains all these $\ell + 2$ subsets, $V_1, V_2, V^v_{3i}, i = 1, 2, \ldots \ell$,
then $|V^*_j| = |V_1| + |V_2| + |V_3| - |V'_3| \ge \frac {13}{24} |V_4|$.
In the other case, at least one of these $\ell + 2$ subsets becomes a separate part in $\{V^*_1, V^*_2, V^*_3, V^*_4\}$,
of which the size is at most $|V_2|$,
and thus by item 3) of Lemma~\ref{lemma07} we have $|V^*_4| \ge \frac 13 (|V_1| + |V_3| + |V_4|) \ge \frac 23 |V_4|$.
Therefore, we always have $|V^*_4| \ge \frac {13}{24} |V_4|$, and thus the partition $\{V_1, V_2, V_3, V_4\}$ is a $\frac {24}{13}$-approximation.
%\qed
\end{proof}

\begin{definition}
\label{def06}
Operation {\em Bridge-2($V_3, V_4$):} 
\begin{itemize}
\parskip=0pt
\item
	{\em precondition:} In Case 2,
	there are multiple components of $G[V_3 \setminus \{v\}]$ adjacent to $V_4$ with their total size greater than $|V_2| + \frac {11}{24} |V_4|$,
	and there is a vertex $u \in V_4$ such that $G[V_4 \setminus \{u\}]$ is disconnected and each component has size less than $\frac 13 |V_4|$.
\item
	{\em effect:} Find a component $G[V^u_{4x}]$ of $G[V_4 \setminus \{u\}]$ (could be empty)
	that is adjacent to a component $G[V^v_{3y}]$ of $G[V_3 \setminus \{v\}]$;
	initialize $V'_4$ to be $V^u_{4x}$ and $V'_3$ to be $V^v_{3y}$;
	iteratively,
	\begin{itemize}
	\parskip=0pt
	\item
		let ${\cal C}_3$ denote the collection of the components of $G[V_3 \setminus \{v\}]$ that are adjacent to $V'_4$, excluding $V'_3$;
		\begin{itemize}
		\parskip=0pt
		\item
		if the total size of components in ${\cal C}_3$ exceeds $|V_1| - |V'_3|$,
		then the operation greedily finds a minimal sub-collection of these components of ${\cal C}_3$ with their total size exceeding $|V_1| - |V'_3|$,
		adds their vertex sets to $V'_3$,
		and proceeds to termination;
		\item
		if the total size of components in ${\cal C}_3$ is less than $|V_1| - |V'_3|$,
		then the operation adds the vertex sets of all these components to $V'_3$;
	\end{itemize}
	\item
		let ${\cal C}_4$ denote the collection of the components of $G[V_4 \setminus \{u\}]$ that are adjacent to $V'_3$, excluding $V'_4$;
		\begin{itemize}
		\parskip=0pt
		\item
		if the total size of components in ${\cal C}_4$ exceeds $2 |V_1| - |V'_4|$,
		then the operation greedily finds a minimal sub-collection of these components of ${\cal C}_4$ with their total size exceeding $2 |V_1| - |V'_4|$,
		adds their vertex sets to $V'_4$,
		and proceeds to termination;
		\item
		if the total size of components in ${\cal C}_4$ is less than $2 |V_1| - |V'_4|$,
		then the operation adds the vertex sets of all these components to $V'_4$;
		\end{itemize}
	\item
		if both ${\cal C}_3$ and ${\cal C}_4$ are empty, then the operation terminates without updating the partition.
	\end{itemize}
	At termination, exactly one of $|V'_4| > 2 |V_1|$ and $|V'_3| > |V_1|$ holds. 
	\begin{itemize}
	\parskip=0pt
	\item
		When $|V'_4| > 2 |V_1|$, we have $|V'_3| \le |V_1|$ and $|V'_4| < 2 |V_1| + \frac 13 |V_4|$;
		\begin{itemize}
		\parskip=0pt
		\item
			if the collection of the components of $G[V_3 \setminus \{v\}]$ that are adjacent to $V'_4$, excluding $V'_3$, exceeds $|V_1| - |V'_3|$,
			then the operation greedily finds a minimal sub-collection of these components with their total size exceeding $|V_1| - |V'_3|$,
			and denotes by $V''_3$ the union of their vertex sets;
			subsequently, the operation creates three new parts $V_3 \cup V_1 \setminus (V'_3 \cup V''_3)$, $(V'_3 \cup V''_3) \cup V'_4$, and $V_4 \setminus V'_4$;
		\item
			otherwise, the operation greedily finds a minimal sub-collection of the components of $G[V_3 \setminus \{v\}]$ that aren't adjacent to $V'_4$
			with their total size exceeding $|V_1| - |V'_3|$,
			and denotes by $V''_3$ the union of their vertex sets;
			subsequently, the operation creates three new parts $V_3 \cup V_1 \setminus (V'_3 \cup V''_3)$, $V'_3 \cup V'_4$, and $(V_4 \setminus V'_4) \cup V''_3$.
		\end{itemize}
	\item
		When $|V'_3| > |V_1|$, we have $|V'_3| \le 2 |V_1|$ and $|V'_4| \le 2 |V_1|$;
		the operation creates three new parts $V_3 \cup V_1 \setminus V'_3$, $V'_3 \cup V'_4$, and $V_4 \setminus V'_4$.
	\item
		In all the above three cases of updating, the part $V_2$ is kept unchanged.
	\end{itemize}
\end{itemize}
\end{definition}

\begin{lemma}
\label{lemma10}
When there are multiple components of $G[V_3 \setminus \{v\}]$ adjacent to $V_4$ with their total size greater than $|V_2| + \frac {11}{24} |V_4|$ in Case 2,
and an operation {\em \mbox{Bridge-2}($V_3, V_4$)} updates the tetrapartition,
then the updated partition is feasible and better.
\end{lemma}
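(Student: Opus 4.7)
The plan is to mirror the proof of Lemma~\ref{lemma08} (the Bridge-1 analog) and verify feasibility and strict improvement for each of the three branches in which Bridge-2 actually updates the tetrapartition, exploiting the near-symmetry between Case 1 and Case 2 with the roles of $V_3$ and $V_4$ swapped.

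First, I would record the numerical facts produced by the greedy iterative construction. By sub-collection minimality together with $|V^v_{3i}| \le |V_1|$ (an item~3/4 type bound from Lemma~\ref{lemma07} applied to components of $G[V_3 \setminus \{v\}]$ adjacent to $V_1$) and $|V^u_{4j}| < \frac{1}{3}|V_4|$ (from the Bridge-2 precondition), at termination in the branch $|V'_4| > 2|V_1|$ we have $2|V_1| < |V'_4| < 2|V_1| + \frac{1}{3}|V_4|$ with $|V'_3| \le |V_1|$, and in the branch $|V'_3| > |V_1|$ we have $|V_1| < |V'_3| \le 2|V_1|$ with $|V'_4| \le 2|V_1|$; moreover, wherever $V''_3$ is introduced, $|V_1| < |V'_3|+|V''_3| \le 2|V_1|$ again by minimality and $|V^v_{3i}| \le |V_1|$.

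Next, for feasibility I would argue that $V_3 \cup V_1 \setminus (V'_3 \cup V''_3)$ (or $V_3 \cup V_1 \setminus V'_3$) is connected because $v \in V_3$ is retained, every remaining component of $G[V_3 \setminus \{v\}]$ is adjacent to $v$, and $V_1$ is attached through the edge in $E(V_3, V_1)$ incident to $v$; that $V_4 \setminus V'_4$ is connected because $u$ remains and every retained component of $G[V_4 \setminus \{u\}]$ is adjacent to $u$; that the bridged part $(V'_3 \cup V''_3) \cup V'_4$ or $V'_3 \cup V'_4$ is connected because the iterative construction adds only components adjacent to the growing core, and in the first sub-branch $V''_3$ is itself explicitly chosen from components adjacent to $V'_4$; and that in the second sub-branch the components forming $V''_3$ are adjacent to $V_4 \setminus V'_4$ precisely because they are adjacent to $V_4$ but not to $V'_4$, so $(V_4 \setminus V'_4) \cup V''_3$ is connected through $u$.

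Finally, for strict improvement I would check that each of the three new parts is strictly smaller than $|V_4|$. Using $|V_1| \le |V_2| < \frac{1}{6}|V_4|$ from Eq.~(\ref{eq01}), the bridged part is bounded by $|V'_3 \cup V'_4| \le 4|V_1| < \frac{2}{3}|V_4|$ or $|(V'_3 \cup V''_3) \cup V'_4| \le 2|V_1| + \bigl(2|V_1| + \frac{1}{3}|V_4|\bigr) < |V_4|$; the part containing $V_4 \setminus V'_4$, with or without the small addendum $V''_3$ of size at most $2|V_1|$, is bounded by $|V_4| - |V'_4| + 2|V_1| < |V_4|$ because $|V'_4| > 2|V_1|$ in the first branch and trivially $|V_4 \setminus V'_4| < |V_4|$ in the third branch; and $V_3 \cup V_1 \setminus (V'_3 \cup V''_3)$ has size at most $|V_3| + |V_1| - |V'_3 \cup V''_3| < |V_3| \le |V_4|$ since $|V'_3 \cup V''_3| > |V_1|$ in each updating branch. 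The one subtlety I expect to be the main obstacle is the bookkeeping when $|V_3| = |V_4|$: here the size of the largest part may not drop, so to declare the new partition \emph{better} in the sense of the paper I would follow the template of Lemma~\ref{lemma05} and Lemma~\ref{lemma06} and verify branch-by-branch that the size of the second largest part strictly decreases, which follows from the strict inequalities $|V'_3 \cup V''_3| \ge |V_1|+1$ and $|V'_4| \ge 2|V_1|+1$ at termination.
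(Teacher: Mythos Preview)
Your proposal is correct and follows the same approach as the paper, which simply states that the proof is almost identical to that of Lemma~\ref{lemma08} using $4|V_1| < \tfrac{2}{3}|V_4|$. You have in fact written out the details the paper omits; the only superfluous part is your final paragraph on the $|V_3|=|V_4|$ subtlety, since in every updating branch all four new parts are strictly smaller than $|V_4|$ (the old $V_3$ is never kept intact), so the maximum size always drops and the second-largest comparison is never needed.
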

\begin{proof}
The proof is almost identical to the proof of Lemma~\ref{lemma08}, by using $4 |V_1| < \frac 23 |V_4|$.
%\qed
\end{proof}

The following lemma states the same property as in Lemma~\ref{lemma09}, and we conclude the same Theorem~\ref{thm06}.

\begin{lemma}
\label{lemma11}
When there are multiple components of $G[V_3 \setminus \{v\}]$ adjacent to $V_4$ with their total size greater than $|V_2| + \frac {11}{24} |V_4|$ in Case 2,
no Bridge-2 operation is applicable,
every connected component of $G[V_4 \cup V_3 \setminus \{u, v\}]$ has size at most $\max\{3 |V_1|, |V_2|\}$.
\end{lemma}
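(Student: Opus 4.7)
The plan is to mirror the proof of Lemma~\ref{lemma09} essentially line by line, exploiting that Bridge-2 is the mirror image of Bridge-1 with the roles of $V_3$ and $V_4$ swapped; the numerical asymmetry (components of $G[V_3\setminus\{v\}]$ are bounded by $|V_1|$ while components of $G[V_4\setminus\{u\}]$ are bounded by $\frac{1}{3}|V_4|$) shifts which side gets the $|V_1|$ vs $2|V_1|$ bound, but the total remains $3|V_1|$.

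First, I would pick an arbitrary connected component $C$ of $G[V_4\cup V_3\setminus\{u,v\}]$ and consider three cases according to how $C$ intersects the star-like decompositions. If $C$ is contained entirely in $V_3\setminus\{v\}$, then $C$ is a single component $V^v_{3\ell}$ of $G[V_3\setminus\{v\}]$; since in Case~2 the part $V_1$ is adjacent to $V_3$ and, by Eq.~(\ref{eq01}), $|V_1|<\frac{1}{6}|V_4|<\frac{1}{3}|V_4|$, item~4) of Lemma~\ref{lemma07} applies and gives $|C|\le|V_1|\le 3|V_1|$. If $C$ lies entirely in $V_4\setminus\{u\}$, then $C$ is a single component $V^u_{4\ell}$ of $G[V_4\setminus\{u\}]$, and by the star-like hypothesis on $V_4$ together with item~3)-style reasoning (using the absence of a Pull into $V_3$), its size is at most $\max\{3|V_1|,|V_2|\}$, precisely as in the corresponding step of Lemma~\ref{lemma09}.

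Second, for a mixed $C$ that meets both $V_3\setminus\{v\}$ and $V_4\setminus\{u\}$, I would initialize Bridge-2 at any adjacent pair $(V^u_{4x},V^v_{3y})$ inside $C$. Because no Bridge-2 is applicable, the iterative growth must terminate in the branch where both ${\cal C}_3$ and ${\cal C}_4$ become empty (the two ``exceeds'' branches would produce a better partition, contradicting the lemma's hypothesis). By the accounting in Definition~\ref{def06}, this branch forces $|V'_3|\le|V_1|$ and $|V'_4|\le 2|V_1|$, and at that point $V'_3\cup V'_4$ equals the full component $C$ of $G[V_4\cup V_3\setminus\{u,v\}]$, since the iteration always absorbs every component of $V_3\setminus\{v\}$ adjacent to $V'_4$ and every component of $V_4\setminus\{u\}$ adjacent to $V'_3$ so long as ${\cal C}_3$ and ${\cal C}_4$ remain within the thresholds. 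Hence $|C|\le 3|V_1|$.

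Taking the maximum over the three cases yields the claimed bound $\max\{3|V_1|,|V_2|\}$. The main obstacle I foresee is being careful with the ``non-updating'' termination analysis of Bridge-2: one must check that whenever the operation terminates without updating, the growing set $V'_3\cup V'_4$ has swallowed the entire connected component it started in, rather than stopping prematurely inside a larger component. This is where the definitions of ${\cal C}_3,{\cal C}_4$ and the sub-collection minimality clauses need to be unwound, exactly as in Lemma~\ref{lemma09}; apart from the numerical swap $|V_1|\leftrightarrow 2|V_1|$, no new idea is required.
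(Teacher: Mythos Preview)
Your plan is exactly what the paper does: Lemma~\ref{lemma11} is stated there without proof, with only the remark that it ``states the same property as in Lemma~\ref{lemma09}.'' Your treatment of the mixed components---initialize Bridge-2 at an adjacent pair, observe that the non-updating termination forces $|V'_3|\le|V_1|$ and $|V'_4|\le 2|V_1|$, and that the final $V'_3\cup V'_4$ has absorbed the whole component---is precisely the role-swapped version of the paper's proof of Lemma~\ref{lemma09}, and your handling of components entirely in $V_3\setminus\{v\}$ via item~4) of Lemma~\ref{lemma07} is correct.

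One step deserves more care. For a component $C=V^u_{4\ell}$ lying entirely in $V_4\setminus\{u\}$ you appeal to ``item~3)-style reasoning (using the absence of a Pull into $V_3$).'' But in Case~2 neither $V_1$ nor $V_2$ is adjacent to $V_4$, so item~3) of Lemma~\ref{lemma07} is simply unavailable; the only a~priori bound is $|V^u_{4\ell}|<\frac13|V_4|$ from the star hypothesis on $V_4$, and the failure of Pull$(V^u_{4\ell}\subset V_4,V_3)$ gives $|V^u_{4\ell}|\ge|V_4|-|V_3|$, a lower bound rather than an upper bound. To actually conclude $|V^u_{4\ell}|\le 2|V_1|$ you need to run Bridge-2 initialized one-sidedly with $V'_4=V^u_{4\ell}$ and $V'_3=\emptyset$, and argue that if $|V^u_{4\ell}|>2|V_1|$ the $|V'_4|>2|V_1|$ termination branch fires and (using the precondition on the total size of components of $G[V_3\setminus\{v\}]$ adjacent to $V_4$) produces an update, contradicting the hypothesis. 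The paper's own proof of Lemma~\ref{lemma09} glosses over exactly the analogous point (it writes ``when it starts with \dots a component $V'_3$ of $G[V_3\setminus\{v\}]$'' without justifying the bound for such a one-sided start), so this is not a divergence from the paper but a shared spot where the argument should be made explicit.
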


\begin{theorem}
\label{thm06}
In Case 2, if there are multiple components of $G[V_3 \setminus \{v\}]$ adjacent to a vertex $u \in V_4$ with their total size greater than
$|V_2| + \frac {11}{24} |V_4|$ and no {\em Bridge-2} operation is applicable,
then the tetrapartition $\{V_1, V_2, V_3, V_4\}$ is a $\frac {12}7$-approximation.
\end{theorem}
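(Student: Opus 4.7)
My plan is to mirror the three-step argument that established Theorem~\ref{thm04} in Case~1, replacing every appeal to Lemma~\ref{lemma09} with Lemma~\ref{lemma11}.

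First, I would upgrade Lemma~\ref{lemma11} from a statement about components of $G[V_4\cup V_3\setminus\{u,v\}]$ to a statement about components of the whole $G[V\setminus\{u,v\}]$. In Case~2 both $V_1$ and $V_2$ are adjacent only to $V_3$, and by item~6 of Lemma~\ref{lemma07} they meet $V_3$ only at the single vertex $v$; so deleting $\{u,v\}$ isolates each of $V_1$ and $V_2$ from the rest of the graph. Thus $V_1$ and $V_2$ appear as their own components of $G[V\setminus\{u,v\}]$, of sizes $|V_1|$ and $|V_2|$, both trivially at most $\max\{3|V_1|,|V_2|\}\le 2|V_1|+|V_2|$. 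Combined with Lemma~\ref{lemma11}, every component of $G[V\setminus\{u,v\}]$ has size at most $2|V_1|+|V_2|$.

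Second, I would analyze an optimal tetrapartition $\{V^*_1,V^*_2,V^*_3,V^*_4\}$. Because an optimal part containing neither $u$ nor $v$ lies entirely inside $V\setminus\{u,v\}$ and is connected, it must be contained in a single connected component of $G[V\setminus\{u,v\}]$, so its size is at most $2|V_1|+|V_2|$. Since $\{u,v\}$ consists of only two vertices, at least two of the four optimal parts satisfy this bound, so together they cover at most $4|V_1|+2|V_2|$ vertices, and the largest optimal part satisfies
$$|V^*_4|\;\ge\;\tfrac{1}{2}\bigl(n-4|V_1|-2|V_2|\bigr).$$
Third, I would plug in the quantitative bounds from Eq.~(\ref{eq01}). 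From $|V_2|+|V_3|\ge|V_4|$ I get $n\ge|V_1|+2|V_4|$, and from $|V_1|\le|V_2|<\tfrac{1}{6}|V_4|$ I get $3|V_1|+2|V_2|<\tfrac{5}{6}|V_4|$. Combining these yields $n-4|V_1|-2|V_2|>\tfrac{7}{6}|V_4|$, hence $|V^*_4|>\tfrac{7}{12}|V_4|$, so $|V_4|/\OPT\le\tfrac{12}{7}$ by Lemma~\ref{lemma01}.

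The step most likely to demand care is the first one: Lemma~\ref{lemma11} as written only covers components that sit inside $V_3\cup V_4$, so I must justify, via item~6 of Lemma~\ref{lemma07} together with the Case~2 hypothesis that neither $V_1$ nor $V_2$ is adjacent to $V_4$, that $V_1$ and $V_2$ are isolated components in $G[V\setminus\{u,v\}]$. Once this extension is in place, the remainder is exactly the same ``two small optimal parts, two large ones'' arithmetic that produced the $\tfrac{7}{12}$ coefficient in Theorem~\ref{thm04}, and no new idea is required.
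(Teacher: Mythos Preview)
Your proposal is correct and matches the paper's intended argument: the paper does not write out a separate proof of Theorem~\ref{thm06} but simply says it follows ``the same'' as Theorem~\ref{thm04} once Lemma~\ref{lemma11} is in hand, and your three steps reproduce exactly that argument. Your explicit observation that Lemma~\ref{lemma11} must be upgraded from $G[V_3\cup V_4\setminus\{u,v\}]$ to all of $G[V\setminus\{u,v\}]$ by treating $V_1$ and $V_2$ as isolated components (via the Case~2 hypothesis and item~6 of Lemma~\ref{lemma07}) is the right way to bridge the gap between Lemma~\ref{lemma11} as stated and the use the paper implicitly makes of it.
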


%==================================================================================================
Case 3 is different from the above two cases, as one of $V_1$ and $V_2$ is adjacent to $V_3$ while the other adjacent to $V_4$.
In fact, by Lemma~\ref{lemma07}, the graph $G = (V, E)$ already exhibits a ``bi-star''-like configuration, with respect to the partition,
in that there is a vertex $u \in V_4$ ($v \in V_3$, respectively) such that $G[V_4 \setminus \{u\}]$ ($G[V_3 \setminus \{v\}]$, respectively) is disconnected,
and every connected component of $G[V \setminus \{u, v\}]$ has size at most $|V_2|$.
The argument thus can be made slightly simpler.

\begin{theorem}
\label{thm07}
In Case 3, assume $V_i \in \{V_1, V_2\}$ is adjacent to $V_3$ at the vertex $v \in V_3$ while $V_j$ ($j = 3-i$) is adjacent to the vertex $u \in V_4$,
\begin{itemize}
\item
	let $V'_3$ denote the union of the vertex sets of all the components of $G[(V_3 \cup V_i) \setminus \{v\}]$ that are adjacent to $V_4$;
	if $|V'_3| \le |V_i| + \frac 7{24} |V_4|$, then the partition $\{V_1, V_2, V_3, V_4\}$ is a $\frac {24}{13}$-approximation;
\item
	let $V'_4$ denote the union of the vertex sets of all the components of $G[(V_4 \cup V_j) \setminus \{u\}]$ that are adjacent to $V_3$;
	if $|V'_4| \le |V_j| + \frac {11}{24} |V_4|$, then the partition $\{V_1, V_2, V_3, V_4\}$ is a $\frac {24}{13}$-approximation.
\end{itemize}
\end{theorem}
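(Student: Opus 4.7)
The plan is to adapt the arguments of Theorems~\ref{thm03} and \ref{thm05}, using the bi-star-like structure afforded in Case 3 by Lemma~\ref{lemma07}. For each bullet I fix an optimal tetrapartition $\{V^*_1, V^*_2, V^*_3, V^*_4\}$, locate the cut vertex ($v$ for the first bullet, $u$ for the second) in some optimal part $V^*_m$, and split the analysis according to whether $V^*_m$ fully absorbs all of the ``small'' connected components that detach from $G$ once the cut vertex is deleted.

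For the first bullet, I first observe that the connected components of $G[(V_3 \cup V_i) \setminus \{v\}]$ are $V_i$ itself (since $V_i$ attaches to $V_3$ only at $v$ and, by item 1) of Lemma~\ref{lemma07}, is not adjacent to $V_j$) together with the components of $G[V_3 \setminus \{v\}]$, each of size at most $|V_i|$ by item 4) of Lemma~\ref{lemma07}. Writing the non-bridging ones as $V^v_{3,1},\dots,V^v_{3,\ell}$ and using $|V_3| = 1 + |V'_3| + \sum_r |V^v_{3,r}|$, the first subcase is that $V^*_m$ contains $V_i$ and every $V^v_{3,r}$ entirely; then
\[
|V^*_m| \ge |V_i| + |V_3| - |V'_3| \ge |V_3| - \tfrac{7}{24}|V_4|,
\]
and Eq.~(\ref{eq01}) gives $|V_3| \ge |V_4| - |V_2| > \tfrac{5}{6}|V_4|$, so $|V^*_m| > \tfrac{13}{24}|V_4|$. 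Otherwise some vertex of $V_i$ or of some $V^v_{3,r}$ lies outside $V^*_m$; since any such vertex belongs to a connected component of $G\setminus\{v\}$ of size at most $|V_i| \le |V_2|$, the optimal part containing it is fully inside that component and thus has size at most $|V_i|$, and averaging over the other three parts yields a part of size at least $(n-|V_i|)/3$, which after a short split on $i=1$ versus $i=2$ (using $|V_2|+|V_3|\ge|V_4|$ and Eq.~(\ref{eq01})) exceeds $\tfrac{13}{24}|V_4|$.

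For the second bullet the argument is symmetric, now centered at $u \in V_4$. The components of $G[(V_4 \cup V_j) \setminus \{u\}]$ are $V_j$ (isolated because $V_j$ attaches to $V_4$ only at $u$ and, by Case 3, is not adjacent to $V_3$) together with the components of $G[V_4 \setminus \{u\}]$, each of size at most $|V_j|$ by item 3) of Lemma~\ref{lemma07}. In the first subcase, $V^*_m$ contains $V_j$ and every non-bridging component entirely, and $|V_4| = 1 + |V'_4| + \sum_r |V^u_{4,r}|$ gives
\[
|V^*_m| \ge |V_j| + |V_4| - |V'_4| \ge |V_4| - \tfrac{11}{24}|V_4| = \tfrac{13}{24}|V_4|.
\]
Otherwise some optimal part lies entirely inside a small component of $G\setminus\{u\}$ and has size at most $|V_j|$; averaging the remaining three parts and exploiting $|V_i|+|V_3|\ge|V_4|$ from item 2) of Lemma~\ref{lemma07} (which applies because in Case 3 $V_i$ is adjacent to $V_3$) produces a part of size at least $\tfrac{2}{3}|V_4| > \tfrac{13}{24}|V_4|$.

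The main obstacle I anticipate is the second subcase of the first bullet when $i=2$: then the separated small piece can be as large as $|V_2|$, and one has only $|V_2|+|V_3|\ge|V_4|$ rather than the cleaner $|V_1|+|V_3|\ge|V_4|$ that drives the pigeonhole in the second bullet. Pushing $(n-|V_2|)/3$ past $\tfrac{13}{24}|V_4|$ therefore requires the sharper bound $|V_3| > \tfrac{5}{6}|V_4|$ from Eq.~(\ref{eq01}) together with $|V_1|\ge 1$, giving a lower bound of roughly $\tfrac{11}{18}|V_4|$, which fortunately still beats $\tfrac{13}{24}|V_4|$.
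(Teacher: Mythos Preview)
Your proposal is correct and follows essentially the same two-subcase argument as the paper's proof: locate the relevant cut vertex ($v$ or $u$) in an optimal part $V^*_m$, and split on whether $V^*_m$ absorbs all non-bridging components (giving $|V^*_m|\ge |V_i|+|V_3|-|V'_3|$ or $|V_j|+|V_4|-|V'_4|$) or some optimal part is trapped in a small component (giving the $\tfrac{1}{3}(|V_j|+|V_3|+|V_4|)\ge \tfrac{11}{18}|V_4|$ or $\tfrac{1}{3}(|V_i|+|V_3|+|V_4|)\ge \tfrac{2}{3}|V_4|$ pigeonhole). One small caveat: Case~3 as stated in Proposition~\ref{prop01} does not preclude $V_i$ (resp.\ $V_j$) from also being adjacent to $V_4$ (resp.\ $V_3$) --- the paper explicitly allows this via item~6) of Lemma~\ref{lemma07} --- but your bounds go through unchanged in that situation since $V_i$ (resp.\ $V_j$) is then a bridging component absorbed into $V'_3$ (resp.\ $V'_4$) and drops out of the ``non-bridging'' list.
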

\begin{proof}
From Lemma~\ref{lemma07} and Eq.~(\ref{eq01}) we know that $|V_i| < \frac 16 |V_4|$ and thus $|V_3| > \frac 56 |V_4|$.
For ease of presentation, using Lemma~\ref{lemma07} we regard $G[V_i]$ as a component of $G[(V_3 \cup V_i) \setminus \{v\}]$,
treated the same as the components of $G[V_3 \setminus \{v\}]$.

If $|V'_3| \le |V_i| + \frac 7{24} |V_4| < \frac {11}{24} |V_4|$,
then denote the components of $G[(V_3 \cup V_i) \setminus \{v\}]$ that are not adjacent to $V_4$ as $G[V^v_{3x}]$, $x = 1, 2, \ldots \ell$, with $\ell \ge 3$.
Recall that $|V^v_{3x}| \le |V_i|$, for each $x = 1, 2, \ldots \ell$.
In an optimal $4$-partition denoted as $\{V^*_1, V^*_2, V^*_3, V^*_4\}$, assume $v \in V^*_y$.
If $V^*_y$ contains all these $\ell$ subsets, then $|V^*_y| = |V_i| + |V_3| - |V'_3| > \frac {13}{24} |V_4|$.
In the other case, at least one of these $\ell$ subsets becomes a separate part in $\{V^*_1, V^*_2, V^*_3, V^*_4\}$,
of which the size is at most $|V_i|$,
and thus we have $|V^*_4| \ge \frac 13 (|V_j| + |V_3| + |V_4|) \ge \frac {11}{18} |V_4|$.
Therefore, we always have $|V^*_4| \ge \frac {13}{24} |V_4|$, and thus the partition $\{V_1, V_2, V_3, V_4\}$ is a $\frac {24}{13}$-approximation.

Also from Eq.~(\ref{eq01}) we have $|V_j| < \frac 16 |V_4|$.
Similarly, for ease of presentation, using Lemma~\ref{lemma07} we regard $G[V_j]$ as a component of $G[(V_4 \cup V_j) \setminus \{u\}]$,
treated the same as the components of $G[V_4 \setminus \{u\}]$.

If $|V'_4| \le |V_j| + \frac {11}{24} |V_4| < \frac {15}{24} |V_4|$, then there are components of $G[(V_4 \cup V_j) \setminus \{u\}]$ not adjacent to $V_3$,
denoted as $G[V^u_{4x}]$, $x = 1, 2, \ldots \ell$, with $\ell \ge 3$.
Recall that $|V^u_{4x}| \le |V_j|$, for each $x = 1, 2, \ldots \ell$.
In an optimal $4$-partition denoted as $\{V^*_1, V^*_2, V^*_3, V^*_4\}$, assume $u \in V^*_y$.
If $V^*_y$ contains all these $\ell$ subsets, then $|V^*_y| = |V_j| + |V_4| - |V'_4| \ge \frac {13}{24} |V_4|$.
In the other case, at least one of these $\ell$ subsets becomes a separate part in $\{V^*_1, V^*_2, V^*_3, V^*_4\}$,
of which the size is at most $|V_j|$,
and thus by item 3) of Lemma~\ref{lemma07} we have $|V^*_4| \ge \frac 13 (|V_i| + |V_3| + |V_4|) \ge \frac 23 |V_4|$.
Therefore, we always have $|V^*_4| \ge \frac {13}{24} |V_4|$, and thus the partition $\{V_1, V_2, V_3, V_4\}$ is a $\frac {24}{13}$-approximation.
%\qed
\end{proof}

\begin{corollary}
\label{coro01}
In Case 3, assume $V_i \in \{V_1, V_2\}$ is adjacent to $V_3$ at the vertex $v \in V_3$ while $V_j$ ($j = 3-i$) is adjacent to the vertex $u \in V_4$,
if the tetrapartition $\{V_1, V_2, V_3, V_4\}$ is not yet a $\frac {24}{13}$-approximation,
then $|V_3 \cup V_1| \ge |V_4|$ (which is slightly stronger than $|V_3 \cup V_i| \ge |V_4|$ stated in Lemma~\ref{lemma07}).
\end{corollary}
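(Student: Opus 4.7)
The plan is to split on $i$. When $i = 1$, we have $V_1 = V_i$ adjacent to $V_3$, and item~2) of Lemma~\ref{lemma07} gives $|V_1| + |V_3| \ge |V_4|$ directly, so no further work is needed. When $i = 2$, we have $V_2$ adjacent to $V_3$ at $v \in V_3$ and $V_1 = V_j$ adjacent to $V_4$ at $u \in V_4$, and we must prove $|V_1| + |V_3| \ge |V_4|$ from the hypothesis that the current tetrapartition is not a $\frac{24}{13}$-approximation. The whole argument will hinge on Lemma~\ref{lemma07} item~3) applied to $V_1$ and $V_4$, which tells us two things about every component $G[V^u_{4,k}]$ of $G[V_4 \setminus \{u\}]$: $|V^u_{4,k}| \le |V_1|$, and whenever $V^u_{4,k}$ is adjacent to $V_3$ we also have $|V^u_{4,k}| + |V_3| \ge |V_4|$.

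First I would dispose of the easy sub-case in which some component $V^u_{4,k}$ is adjacent to $V_3$: chaining the two bounds gives $|V_1| + |V_3| \ge |V^u_{4,k}| + |V_3| \ge |V_4|$, as required. The main obstacle is then to rule out the opposite sub-case, in which no component of $G[V_4 \setminus \{u\}]$ is adjacent to $V_3$. Under that assumption, combined with Lemma~\ref{lemma07} items~1) and~3) together with the Case~3 adjacency pattern (so $V_1$ is not adjacent to $V_2$ or $V_3$ and touches $V_4$ only at $u$, $V_2$ is not adjacent to $V_1$ or $V_4$, and $V_3$ reaches $V_4$ only via $u$), removing $u$ from $G$ breaks it into the pieces $V_1$, the connected set $V_2 \cup V_3$ joined through $v$, and the $\ell \ge 2$ components $V^u_{4,k}$ of $V_4 \setminus \{u\}$, each of size at most $|V_1|$.

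Given that structure, in any optimal tetrapartition $\{V^*_1, V^*_2, V^*_3, V^*_4\}$ the part $V^*_y$ containing $u$ must absorb all but at most three of these pieces, since each of the other three parts is connected and therefore lies entirely inside one piece. The three largest pieces have sizes at most $|V_2| + |V_3|$, $|V_1|$, and $|V_1|$, so $|V^*_y| \ge n - (|V_2| + |V_3| + 2|V_1|) = |V_4| - |V_1|$. Invoking $|V_1| \le |V_2| < \frac{1}{6}|V_4|$ from Eq.~(\ref{eq01}) then yields $|V^*_y| > \frac{5}{6}|V_4|$, which would make the current tetrapartition a $\frac{6}{5}$-approximation and contradict the hypothesis that it is not yet a $\frac{24}{13}$-approximation. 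Hence this sub-case cannot occur and the corollary follows. The real work is the structural collapse argument that produces the piece decomposition; once that is in hand, the bound on $|V^*_y|$ and the contradiction with the $\frac{24}{13}$ hypothesis are immediate.
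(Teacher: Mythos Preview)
Your argument is correct and follows essentially the same skeleton as the paper's proof: split on $i$, and in the case $i=2$ use item~3) of Lemma~\ref{lemma07} to bound each component of $G[V_4\setminus\{u\}]$ by $|V_1|$ and then chain $|V_1|+|V_3|\ge |V^u_{4x}|+|V_3|\ge |V_4|$ once some component $V^u_{4x}$ is known to be adjacent to $V_3$.

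The one genuine difference is how you rule out the sub-case ``no component of $G[V_4\setminus\{u\}]$ is adjacent to $V_3$.'' The paper simply appeals to Theorem~\ref{thm07}: if no such component exists then $V'_4$ (the union of components of $G[(V_4\cup V_j)\setminus\{u\}]$ adjacent to $V_3$) has size at most $|V_j|$, which is $\le |V_j|+\tfrac{11}{24}|V_4|$, so the partition is already a $\tfrac{24}{13}$-approximation. You instead give a self-contained cut-vertex argument showing the partition would be a $\tfrac{6}{5}$-approximation. Both work; yours avoids the forward reference to Theorem~\ref{thm07} at the cost of a few more lines.

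Two small cleanups: first, you should explicitly note that if $V_1$ happened also to be adjacent to $V_3$, item~2) of Lemma~\ref{lemma07} finishes at once, so you may assume it is not. Second, your claim ``$V_2$ is not adjacent to $V_4$'' is not guaranteed in Case~3; what is true (by item~6 of Lemma~\ref{lemma07}) is that any such adjacency would be at $u$ only, and item~3) then still ensures no component of $G[V_4\setminus\{u\}]$ touches $V_2$. Your piece decomposition after deleting $u$ is therefore correct as stated, but the justification should be phrased via item~6 rather than by asserting non-adjacency outright.
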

\begin{proof}
When $i = 1$, from Lemma~\ref{lemma07} we have $|V_3 \cup V_1| \ge |V_4|$.
When $i = 2$, then $j = 1$ and thus from Lemma~\ref{lemma07} every component of $G[V_4 \setminus \{u\}]$ has size at most $|V_1|$.
Theorem~\ref{thm07} says that some component of $G[V_4 \setminus \{u\}]$, say $G[V^u_{4x}]$, is adjacent to $V_3$.
Since Pull($V^u_{4x} \subset V_4, V_3$) is not applicable, we conclude that $|V_3 \cup V_1| \ge |V_3 \cup V^u_{4x}| \ge |V_4|$.
Therefore, we always have $|V_3 \cup V_1| \ge |V_4|$.
This proves the corollary.
%\qed
\end{proof}

\begin{figure}[htpb]
\begin{center}
  \setlength{\unitlength}{0.7bp}%
  \begin{picture}(362.06, 188.75)(0,0)
  \put(0,0){\includegraphics[scale=0.7]{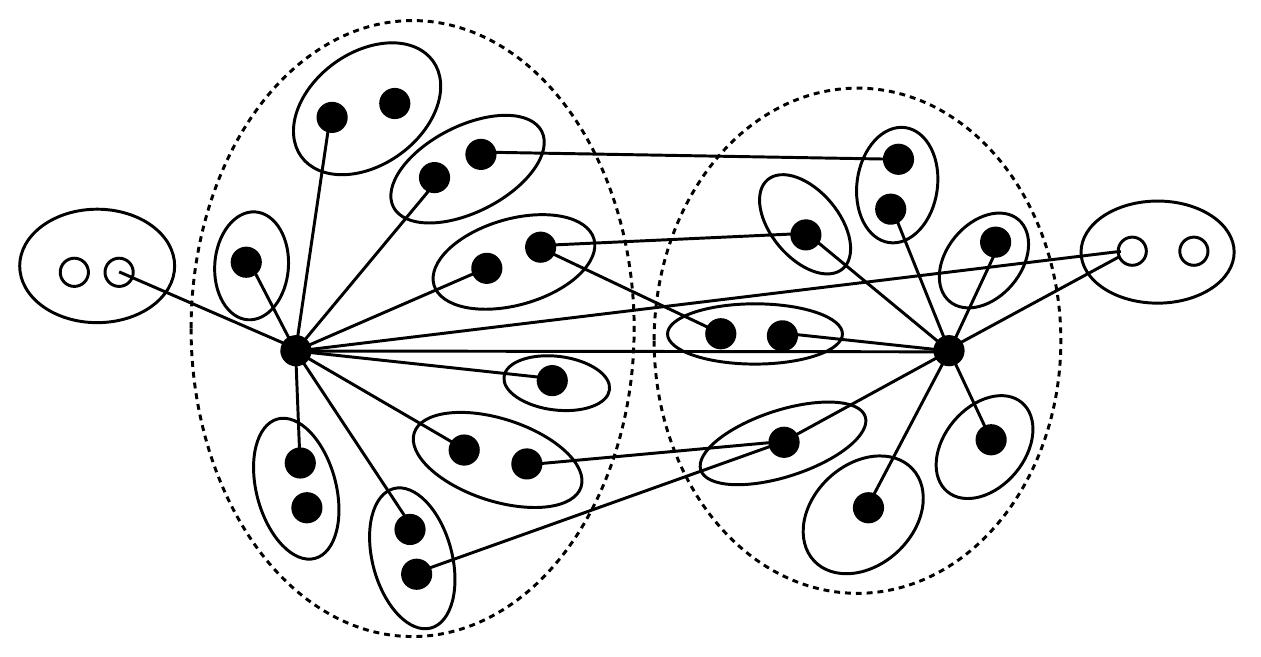}}
  \put(277.95,81.48){\fontsize{11.38}{13.66}\selectfont $v$}
  \put(274.11,15.16){\fontsize{11.38}{13.66}\selectfont $V_3$}
  \put(328.05,89.78){\fontsize{11.38}{13.66}\selectfont $V_i$}
  \put(20.86,82.70){\fontsize{11.38}{13.66}\selectfont $V_j$}
  \put(67.36,11.01){\fontsize{11.38}{13.66}\selectfont $V_4$}
  \put(69.75,78.18){\fontsize{11.38}{13.66}\selectfont $u$}
  \end{picture}%
\end{center}
\caption{An illustration of the ``bi-star''-like configuration of the graph $G = (V, E)$,
	with respect to the tetrapartition $\{V_1, V_2, V_3, V_4\}$ in Case 3,
	where $V_i \in \{V_1, V_2\}$ is adjacent to $V_3$ at the vertex $v \in V_3$ and $V_j$ ($j = 3-i$) is adjacent to the vertex $u \in V_4$.\label{fig08}}
\end{figure}

We assume that $V_i \in \{V_1, V_2\}$ is adjacent to $V_3$ at the vertex $v \in V_3$ while $V_j$ ($j = 3-i$) is adjacent to the vertex $u \in V_4$ in Case 3.
Recall from item 6) of Lemma~\ref{lemma07} that if $V_i$ ($V_j$, respectively) is also adjacent to $V_4$ ($V_3$, respectively),
then it is adjacent to the vertex $u$ ($v$, respectively).
See for an illustration of Case 3 in Figure~\ref{fig08}.
The following operation Bridge-3($V_3, V_4$) is again almost identical to Bridge-1 and Bridge-2 operations,
but slightly simpler.

\begin{definition}
\label{def07}
Operation {\em Bridge-3($V_3, V_4$):} 
\begin{itemize}
\parskip=0pt
\item
	{\em precondition:}  In Case 3,
	assume $V_i \in \{V_1, V_2\}$ is adjacent to $V_3$ at the vertex $v \in V_3$ while $V_j$ ($j = 3-i$) is adjacent to the vertex $u \in V_4$,
	there are multiple components of $G[(V_3 \cup V_i) \setminus \{v\}]$ adjacent to $V_4$ with their total size greater than $|V_i| + \frac 7{24} |V_4|$,
	and there are multiple components of $G[(V_4 \cup V_j) \setminus \{u\}]$ adjacent to $V_3$ with their total size greater than $|V_j| + \frac {11}{24} |V_4|$;
\item
	{\em effect:} Find a component $G[V^u_{4x}]$ of $G[V_4 \setminus \{u\}]$ (could be empty)
	that is adjacent to a component $G[V^v_{3y}]$ of $G[V_3 \setminus \{v\}]$;
	initialize $V'_4$ to be $V^u_{4x}$ and $V'_3$ to be $V^v_{3y}$;
	iteratively,
	\begin{itemize}
	\parskip=0pt
	\item
		let ${\cal C}_3$ denote the collection of the components of $G[V_3 \setminus \{v\}]$ that are adjacent to $V'_4$, excluding $V'_3$;
		\begin{itemize}
		\parskip=0pt
		\item
		if the total size of components in ${\cal C}_3$ exceeds $|V_i| - |V'_3|$,
		then the operation greedily finds a minimal sub-collection of these components of ${\cal C}_3$ with their total size exceeding $|V_i| - |V'_3|$,
		adds their vertex sets to $V'_3$,
		and proceeds to termination;
		\item
		if the total size of components in ${\cal C}_3$ is less than $|V_i| - |V'_3|$,
		then the operation adds the vertex sets of all these components to $V'_3$;
	\end{itemize}
	\item
		let ${\cal C}_4$ denote the collection of the components of $G[V_4 \setminus \{u\}]$ that are adjacent to $V'_3$, excluding $V'_4$;
		\begin{itemize}
		\parskip=0pt
		\item
		if the total size of components in ${\cal C}_4$ exceeds $|V_j| - |V'_4|$,
		then the operation greedily finds a minimal sub-collection of these components of ${\cal C}_4$ with their total size exceeding $|V_j| - |V'_4|$,
		adds their vertex sets to $V'_4$,
		and proceeds to termination;
		\item
		if the total size of components in ${\cal C}_4$ is less than $|V_j| - |V'_4|$,
		then the operation adds the vertex sets of all these components to $V'_4$;
		\end{itemize}
	\item
		if both ${\cal C}_3$ and ${\cal C}_4$ are empty, then the operation proceeds to termination.
	\end{itemize}
	At termination, exactly one of $|V'_3| > |V_i|$ and $|V'_4| > |V_j|$ holds. 
	\begin{itemize}
	\parskip=0pt
	\item
		When $|V'_3| > |V_i|$, we have $|V'_4| \le |V_j|$ and $|V'_3| \le 2 |V_i|$;
		the operation creates three new parts $V'_4 \cup V'_3$, $V_4 \setminus V'_4$, and $(V_3 \setminus V'_3) \cup V_i$,
		while keeping the part $V_j$ unchanged.
	\item
		When $|V'_4| > |V_j|$, we have $|V'_3| \le |V_i|$ and $|V'_4| \le 2 |V_j|$;
		the operation creates three new parts $V'_4 \cup V'_3$, $(V_4 \setminus V'_4) \cup V_j$, and $V_3 \setminus V'_3$,
		while keeping the part $V_i$ unchanged.
	\end{itemize}
\end{itemize}
\end{definition}

\begin{lemma}
\label{lemma12}
In Case 3, assume $V_i \in \{V_1, V_2\}$ is adjacent to $V_3$ at the vertex $v \in V_3$ while $V_j$ ($j = 3-i$) is adjacent to the vertex $u \in V_4$,
there are multiple components of $G[(V_3 \cup V_i) \setminus \{v\}]$ adjacent to $V_4$ with their total size greater than $|V_i| + \frac 7{24} |V_4|$,
there are multiple components of $G[(V_4 \cup V_j) \setminus \{u\}]$ adjacent to $V_3$ with their total size greater than $|V_j| + \frac {11}{24} |V_4|$,
and a {\em Bridge-3($V_3, V_4$)} operation updates the tetrapartition,
then the updated partition is feasible and better.
\end{lemma}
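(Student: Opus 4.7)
The plan is to mirror the proof of Lemma~\ref{lemma08}, since the Bridge-3 operation is structurally analogous to Bridge-1 and Bridge-2, with the sole difference that the roles of $V_1, V_2$ in those cases are split between the $V_3$-side attachment $V_i$ and the $V_4$-side attachment $V_j$. The lemma asks for two things: feasibility of the new tetrapartition, and that it is strictly better than the current one in the lexicographic order of Section~2.

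I would argue feasibility by verifying connectivity of the three new parts one at a time. The bridge $V'_3 \cup V'_4$ is connected by construction: it is initialized with a pair of adjacent star-components $V^v_{3y}$ and $V^u_{4x}$, and each iteration only absorbs components of $G[V_3 \setminus \{v\}]$ or $G[V_4 \setminus \{u\}]$ adjacent to the current bridge, maintaining connectedness. The complement parts $V_4 \setminus V'_4$ and $V_3 \setminus V'_3$ are connected because $V'_4$ (respectively $V'_3$) is a union of some components of $G[V_4 \setminus \{u\}]$ (respectively $G[V_3 \setminus \{v\}]$), so the center $u$ (respectively $v$) remains and is adjacent to every surviving component by Lemma~\ref{lemma07}. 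When the second branch augments $V_4 \setminus V'_4$ with $V_j$, connectivity is preserved because $V_j$ attaches at $u \in V_4 \setminus V'_4$ by the Case~3 setup; symmetrically $V_i$ attaches at $v \in V_3 \setminus V'_3$ in the first branch.

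For the ``better'' claim, I would combine Lemma~\ref{lemma07} items~3) and 4) with Eq.~(\ref{eq01}). Those items give every component of $G[V_3 \setminus \{v\}]$ size at most $|V_i|$ and every component of $G[V_4 \setminus \{u\}]$ size at most $|V_j|$. Together with the sub-collection minimality at termination, this yields $|V'_3| < 2|V_i|$ when the first branch fires and $|V'_4| < 2|V_j|$ when the second fires. Using $|V_1| \le |V_2| < \tfrac{1}{6}|V_4|$ from Eq.~(\ref{eq01}), the bridge part is bounded by $|V'_4 \cup V'_3| \le 2|V_i| + |V_j| \le 3|V_2| < \tfrac{1}{2}|V_4|$ in the first branch, and analogously in the second. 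The remaining new parts $V_4 \setminus V'_4$ and $(V_3 \setminus V'_3) \cup V_i$ are strictly smaller than $|V_4|$: the former because $V'_4 \ne \emptyset$, the latter because $|V'_3| > |V_i|$ implies $|(V_3 \setminus V'_3) \cup V_i| < |V_3| \le |V_4|$; the second branch is symmetric. The unchanged part $V_j$ (or $V_i$) has size at most $|V_2| < |V_4|$. Hence the new partition's maximum cardinality is strictly less than $|V_4|$, giving a better partition.

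The main obstacle I anticipate is not the size arithmetic but justifying the termination invariants stated in Definition~\ref{def07}, namely that the iterative construction ends with \emph{exactly one} of $|V'_3| > |V_i|$ and $|V'_4| > |V_j|$ holding, rather than stalling in the branch where both ${\cal C}_3$ and ${\cal C}_4$ become empty. One would have to trace the iteration carefully and use the preconditions --- that the components adjacent to $V_4$ (respectively $V_3$) have total size exceeding $|V_i| + \tfrac{7}{24}|V_4|$ (respectively $|V_j| + \tfrac{11}{24}|V_4|$) --- to rule out such premature termination and secure one of the two useful update cases.
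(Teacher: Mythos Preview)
Your argument for feasibility and the size estimates is correct and matches the paper's proof, which is equally brief: it invokes Eq.~(\ref{eq01}) to get $|V_1| \le |V_2| < \tfrac{1}{6}|V_4|$, uses sub-collection minimality to obtain $|V'_3| \le 2|V_i|$ and $|V'_4| \le 2|V_j|$ (note: non-strict, not strict as you wrote, though this does not affect the conclusion), bounds the bridge by $|V'_3 \cup V'_4| < \tfrac{1}{2}|V_4|$, and checks that the remaining two new parts are each strictly smaller than $|V_4|$.

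Your anticipated obstacle in the final paragraph is a non-issue. Reread the lemma's hypothesis: it explicitly assumes that ``a Bridge-3($V_3, V_4$) operation \emph{updates} the tetrapartition.'' So you are already conditioned on being in one of the two update branches of Definition~\ref{def07}; you do not need to rule out the stalling case here, and the preconditions on the total sizes exceeding $|V_i| + \tfrac{7}{24}|V_4|$ and $|V_j| + \tfrac{11}{24}|V_4|$ play no role in this lemma's proof. (Those thresholds matter for Theorem~\ref{thm07} and Lemma~\ref{lemma13}, not here.) The paper's proof accordingly makes no use of them.
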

\begin{proof}
Recall from Eq.~(\ref{eq01}) that $|V_1| \le |V_2| < \frac 16 |V_4|$.
When the Bridge-3($V_3, V_4$) operation achieves a pair $(V'_3, V'_4)$,
by the minimality of the sub-collection we know that $|V'_3| \le 2 |V_i|$ and $|V'_4| \le 2 |V_j|$,
suggesting $|V'_3 \cup V'_4| < 3 \times \frac 16 |V_4| = \frac 12 |V_4|$.
Also, if $|V'_3| > |V_i|$, then for the other two new parts,
$|V_4 \setminus V'_4| < |V_4|$,
and $|(V_3 \setminus V'_3) \cup V_i| < |V_3|$;
if $|V'_4| > |V_j|$, then for the other two new parts,
$|(V_4 \setminus V'_4) \cup V_j| < |V_4|$,
and $|V_3 \setminus V'_3| < |V_3|$.
That is, the size of its largest part reduces by at least $1$.
This proves the lemma.
%\qed
\end{proof}

The following lemma states the same property as in Lemmas~\ref{lemma09} and \ref{lemma11}, and we conclude the same Theorem~\ref{thm08}.

\begin{lemma}
\label{lemma13}
Assume $V_i \in \{V_1, V_2\}$ is adjacent to $V_3$ at the vertex $v \in V_3$ while $V_j$ ($j = 3-i$) is adjacent to the vertex $u \in V_4$ in Case 3.
When there are multiple components of $G[(V_3 \cup V_i) \setminus \{v\}]$ adjacent to $V_4$ with their total size greater than $|V_i| + \frac 7{24} |V_4|$,
there are multiple components of $G[(V_4 \cup V_j) \setminus \{u\}]$ adjacent to $V_3$ with their total size greater than $|V_j| + \frac {11}{24} |V_4|$,
and no Bridge-3 operation is applicable,
every connected component of $G[V_4 \cup V_3 \setminus \{u, v\}]$ has size at most $|V_1| + |V_2|$.
\end{lemma}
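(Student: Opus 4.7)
The plan is to follow the same template used in the proofs of Lemmas~\ref{lemma09} and~\ref{lemma11}, now adapted to the Case 3 configuration where $V_i$ is merged into $V_3$ and $V_j$ into $V_4$ inside the Bridge-3 operation. The key observation I would lean on is that the iterative body of Bridge-3 is essentially a breadth-first traversal of a connected component of $G[(V_3 \cup V_4) \setminus \{u, v\}]$ through the bipartite-like incidence between the components of $G[V_3 \setminus \{v\}]$ and those of $G[V_4 \setminus \{u\}]$, alternately enlarging $V'_3$ and $V'_4$.

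First I would apply Lemma~\ref{lemma07} items 3) and 4) (whose hypotheses are met in Case 3, since by Eq.~(\ref{eq01}) we have $|V_1| \le |V_2| < \frac{1}{6}|V_4| < \frac{1}{3}|V_4|$) to conclude that both $G[V_3 \setminus \{v\}]$ and $G[V_4 \setminus \{u\}]$ are disconnected. Consequently every connected component $C$ of $G[(V_3 \cup V_4) \setminus \{u, v\}]$ splits canonically as $C = V'_3 \cup V'_4$, where $V'_3$ is a union of components of $G[V_3 \setminus \{v\}]$ and $V'_4$ is a union of components of $G[V_4 \setminus \{u\}]$, with either side allowed to be empty.

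Next I would fix any such component $C$ with both sides nonempty, pick an adjacent pair $(V^u_{4x}, V^v_{3y})$ inside it as the seed, and run Bridge-3 from this seed. Under the assumption that no Bridge-3 operation is applicable, neither threshold $|V'_3| > |V_i|$ nor $|V'_4| > |V_j|$ can ever be crossed during the iteration, since otherwise Lemma~\ref{lemma12} would force an update, a contradiction. Hence the iteration can only terminate via the branch in which both ${\cal C}_3$ and ${\cal C}_4$ become empty; at that moment $V'_3 \cup V'_4$ has already absorbed the entire component $C$, giving $|V'_3| \le |V_i|$ and $|V'_4| \le |V_j|$, and therefore $|C| \le |V_i| + |V_j| = |V_1| + |V_2|$. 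For the corner cases where $V'_3 = \emptyset$ or $V'_4 = \emptyset$, Lemma~\ref{lemma07} items 3) and 4) supply the bound $|C| \le |V_j|$ or $|C| \le |V_i|$ directly.

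The hard part will be to verify rigorously that the alternating enlargement in Bridge-3 really does exhaust the whole connected component $C$ when it terminates without an update, rather than halting inside a strict sub-component because of some missed adjacency. Because the iteration only enlarges $V'_3, V'_4$ and stops only when no remaining component on either side is adjacent to the current $V'_4, V'_3$ respectively, this reduces to a standard BFS-closure argument on the bipartite incidence graph whose vertices are the components of $G[V_3 \setminus \{v\}]$ and $G[V_4 \setminus \{u\}]$ and whose edges record adjacency in $G$; I would make this explicit to close the argument.
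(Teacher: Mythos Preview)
Your proposal is correct and follows exactly the approach the paper intends: the paper gives no separate proof for Lemma~\ref{lemma13} but simply points back to Lemmas~\ref{lemma09} and~\ref{lemma11}, and your write-up is precisely the Case~3 adaptation of the Lemma~\ref{lemma09} argument, with the thresholds $|V'_3| \le |V_i|$ and $|V'_4| \le |V_j|$ replacing $2|V_1|$ and $|V_1|$. Your explicit BFS-closure justification that the terminated $(V'_3,V'_4)$ exhausts the whole component of $G[(V_3\cup V_4)\setminus\{u,v\}]$, and your separate treatment of the one-sided corner cases via Lemma~\ref{lemma07} items~3) and~4), are more careful than what the paper spells out but entirely in line with it.
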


\begin{theorem}
\label{thm08}
In Case 3, assume $V_i \in \{V_1, V_2\}$ is adjacent to $V_3$ at the vertex $v \in V_3$ while $V_j$ ($j = 3-i$) is adjacent to the vertex $u \in V_4$,
and no {\em Bridge-3} operation is applicable,
then the partition $\{V_1, V_2, V_3, V_4\}$ is a $\frac 43$-approximation.
\end{theorem}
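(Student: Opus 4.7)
The plan is to combine the structural property from Lemma~\ref{lemma13} (no component of $G[V_3 \cup V_4 \setminus \{u, v\}]$ exceeds $|V_1|+|V_2|$) with a counting argument against an optimal tetrapartition, using Corollary~\ref{coro01} and Eq.~(\ref{eq01}) as the numerical inputs. Throughout I am assuming we are in the case complementary to Theorem~\ref{thm07}, so that the preconditions of Bridge-3 hold and Lemma~\ref{lemma13} applies.

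First I would promote Lemma~\ref{lemma13} from a bound on components of $G[V_3 \cup V_4 \setminus \{u,v\}]$ to one on components of the full graph $G[V \setminus \{u,v\}]$. Invoking items~3, 4, and 6 of Lemma~\ref{lemma07}, every edge from $V_i \cup V_j$ to $V_3 \cup V_4$ must go through $u$ or $v$, and item~1 precludes any edge between $V_1$ and $V_2$; so $V_i$ and $V_j$ each sit as their own components of $G[V \setminus \{u,v\}]$, of size at most $|V_2| \le |V_1|+|V_2|$. Combined with Lemma~\ref{lemma13}, every connected component of $G[V \setminus \{u,v\}]$ has size at most $|V_1|+|V_2|$.

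Next, fix an optimal partition $\{V^*_1, V^*_2, V^*_3, V^*_4\}$ and let $V^*_a \ni u$, $V^*_b \ni v$. Any part disjoint from $\{u,v\}$ lies inside a single component of $G[V \setminus \{u,v\}]$ and hence has size at most $|V_1|+|V_2|$. Split on whether $a=b$. If $a=b$, then three parts avoid $\{u,v\}$, so $|V^*_a| \ge n - 3(|V_1|+|V_2|)$; applying $|V_3| > n/3$ and $|V_1|+|V_2| < n/6$ from Eq.~(\ref{eq01}) gives $|V^*_a| > |V_4|$, contradicting $\OPT \le |V_4|$. Therefore $a \ne b$, only two parts avoid $\{u,v\}$, and $\max\{|V^*_a|,|V^*_b|\} \ge n/2 - (|V_1|+|V_2|)$. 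Inserting the lower bound $n \ge 2|V_4|+|V_2|$ (from Corollary~\ref{coro01}, since $|V_3| \ge |V_4|-|V_1|$) together with $|V_1|\le|V_2| < |V_4|/6$ from Eq.~(\ref{eq01}) yields $\OPT > \tfrac{3}{4}|V_4|$, i.e., $|V_4|/\OPT < 4/3$.

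The main obstacle I anticipate is the bookkeeping in the first step: verifying that $V_i$ and $V_j$ really sit as isolated components of $G[V \setminus \{u,v\}]$ rather than merging with pieces of $V_3$ or $V_4$ through edges not yet accounted for. This requires a careful simultaneous use of item~3 (localizing the adjacency from $V_j$ to $V_4$ at $u$), item~4 (localizing the adjacency from $V_i$ to $V_3$ at $v$, applicable because $|V_i| < |V_4|/6 < |V_4|/3$), item~6 (covering the sub-case where both $V_1$ and $V_2$ touch the same $V_3$ or the same $V_4$), and item~1 (no edge between $V_1$ and $V_2$). Once this structural picture is pinned down, the remainder is a short arithmetic computation that just cashes in Eq.~(\ref{eq01}) and Corollary~\ref{coro01}.
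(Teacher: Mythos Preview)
Your proof is correct and follows the same approach the paper intends: promote Lemma~\ref{lemma13} from $G[V_3\cup V_4\setminus\{u,v\}]$ to all of $G[V\setminus\{u,v\}]$, observe that at least two parts of an optimal tetrapartition avoid $\{u,v\}$ and are therefore bounded by $|V_1|+|V_2|$, and finish with the arithmetic $\OPT \ge \tfrac12\bigl(n-2(|V_1|+|V_2|)\bigr) > \tfrac34|V_4|$ via Corollary~\ref{coro01} and Eq.~(\ref{eq01}). Your explicit treatment of the $a=b$ sub-case and of why $V_i,V_j$ sit as isolated components (using items~1,~3,~4,~6 of Lemma~\ref{lemma07}) is more detailed than the paper, which leaves the proof of Theorem~\ref{thm08} implicit by analogy with Theorem~\ref{thm04}.
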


Combining all the three cases,
we can design the following algorithm {\sc Approx-$4$} as depicted in Figure~\ref{fig09} for the {\sc $4$-BGP} problem,
which is iterative in nature and in every iteration it applies one of the Merge and the Pull and the three Bridge operations.
And we have the following final conclusion for the {\sc $4$-BGP} problem:

\begin{figure}[ht]
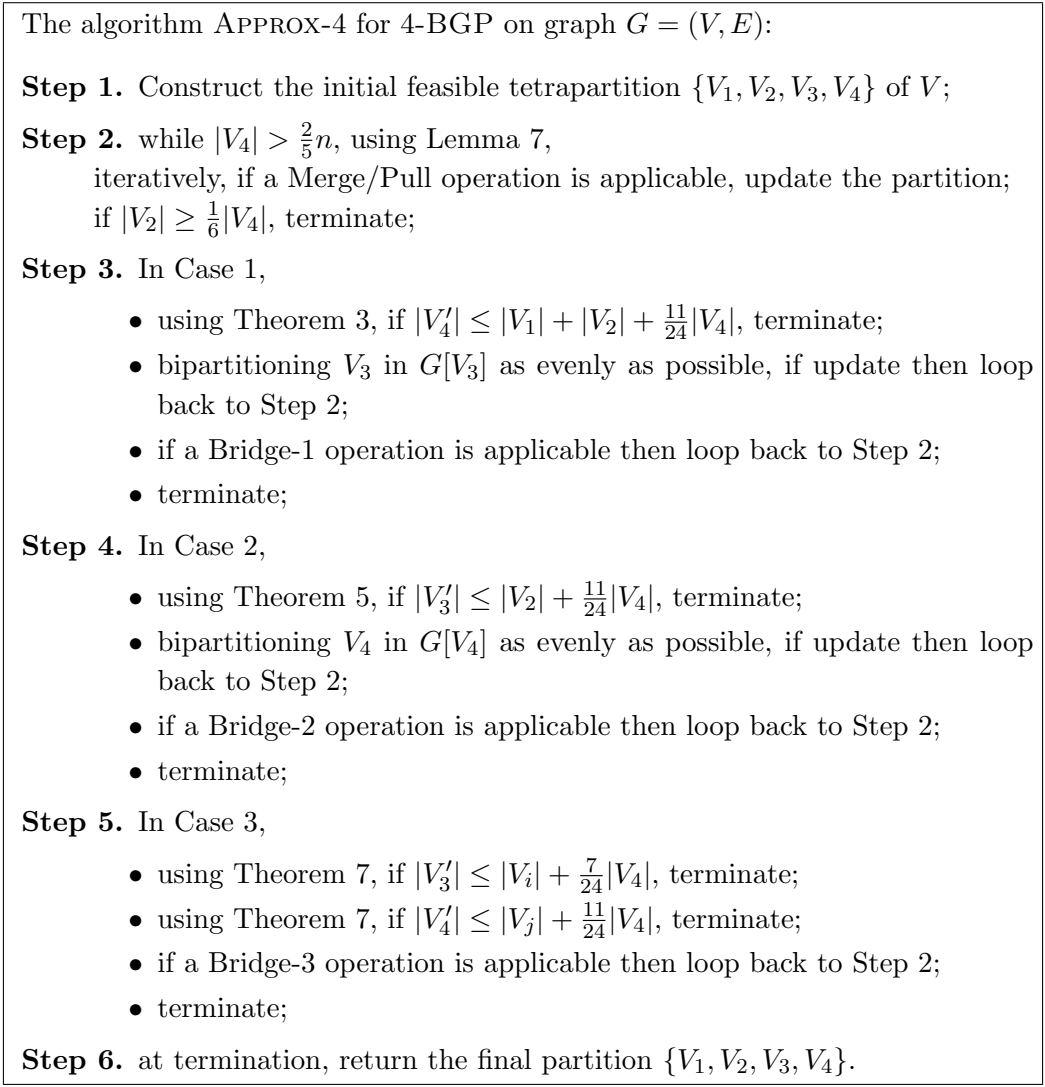

\begin{center}
\framebox{
\begin{minipage}{5.3in}
	The algorithm {\sc Approx-$4$} for {\sc $4$-BGP} on graph $G = (V, E)$:
	\begin{description}%[{Step} 1.]
	\parskip=0pt
	\item[Step 1.]
		Construct the initial feasible tetrapartition $\{V_1, V_2, V_3, V_4\}$ of $V$;
	\item[Step 2.]
		while $|V_4| > \frac 25 n$, using Lemma~\ref{lemma07},\\
		\hspace{1in} iteratively, if a Merge/Pull operation is applicable, update the partition;\\
		\hspace{1in} if $|V_2| \ge \frac 16 |V_4|$, terminate;
	\item[Step 3.]
		In Case 1,
		\begin{itemize}
		\parskip=0pt
		\item
			using Theorem~\ref{thm03}, if $|V'_4| \le |V_1| + |V_2| + \frac {11}{24} |V_4|$, terminate;
		\item
			bipartitioning $V_3$ in $G[V_3]$ as evenly as possible, if update then loop back to Step 2;
		\item
			if a Bridge-1 operation is applicable then loop back to Step 2;
		\item
			terminate;
		\end{itemize}
	\item[Step 4.]
		In Case 2,
		\begin{itemize}
		\parskip=0pt
		\item
			using Theorem~\ref{thm05}, if $|V'_3| \le |V_2| + \frac {11}{24} |V_4|$, terminate;
		\item
			bipartitioning $V_4$ in $G[V_4]$ as evenly as possible, if update then loop back to Step 2;
		\item
			if a Bridge-2 operation is applicable then loop back to Step 2;
		\item
			terminate;
		\end{itemize}
	\item[Step 5.]
		In Case 3,
		\begin{itemize}
		\parskip=0pt
		\item
			using Theorem~\ref{thm07}, if $|V'_3| \le |V_i| + \frac 7{24} |V_4|$, terminate;
		\item
			using Theorem~\ref{thm07}, if $|V'_4| \le |V_j| + \frac {11}{24} |V_4|$, terminate;
		\item
			if a Bridge-3 operation is applicable then loop back to Step 2;
		\item
			terminate;
		\end{itemize}
	\item[Step 6.]
		at termination, return the final partition $\{V_1, V_2, V_3, V_4\}$.
	\end{description}
\end{minipage}}
\end{center}
\caption{A high-level description of the algorithm {\sc Approx-$4$} for {\sc $4$-BGP}.\label{fig09}}
\end{figure}

\begin{theorem}
\label{thm09}
The algorithm {\sc Approx-$4$} is an $O(|V|^2 |E|)$-time $\frac {24}{13}$-approximation for the {\sc $4$-BGP} problem.
\end{theorem}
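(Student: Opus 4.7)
The plan is to establish correctness (approximation ratio $\tfrac{24}{13}$) and the running time bound separately, both organized around the termination conditions of {\sc Approx-$4$} as depicted in Figure~\ref{fig09}.

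For the approximation ratio, I will exhaustively analyze how the algorithm can exit. If {\sc Approx-$4$} leaves Step~2 with $|V_4| \le \tfrac{2}{5}n$, Lemma~\ref{lemma01} immediately gives $|V_4|/\OPT \le \tfrac{8}{5} < \tfrac{24}{13}$. Otherwise $|V_4| > \tfrac{2}{5}n$ and no Merge or Pull operation is applicable, so Lemma~\ref{lemma07}~item~5) either certifies the $\tfrac{24}{13}$-approximation directly (when $|V_2| \ge \tfrac{1}{6}|V_4|$) or establishes Eq.~(\ref{eq01}). In the latter subcase Proposition~\ref{prop01} splits the state space into three exhaustive cases handled by Steps~3, 4, and 5, respectively. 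Within each case, the terminating branches invoke one of Theorems~\ref{thm03}--\ref{thm08}, which guarantee ratios $\tfrac{24}{13}$, $\tfrac{12}{7}$, $\tfrac{24}{13}$, $\tfrac{12}{7}$, $\tfrac{24}{13}$, and $\tfrac{4}{3}$, respectively. Since $\tfrac{4}{3}$, $\tfrac{12}{7}$, and $\tfrac{8}{5}$ are all smaller than $\tfrac{24}{13}$, the worst-case ratio across all termination points is $\tfrac{24}{13}$. I also need to verify that whenever none of these ``terminating'' theorems applies, Steps~3--5 instead execute an improving bipartitioning or Bridge operation and loop back to Step~2; this is precisely what Lemmas~\ref{lemma08}, \ref{lemma10}, and~\ref{lemma12} (together with the explicit even bipartition arguments preceding Definitions~\ref{def05} and~\ref{def06}) guarantee.

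For the running time, the key invariant is that every Merge, Pull, Bridge-$i$ ($i \in \{1,2,3\}$), and bipartitioning update produces a strictly better partition in the lexicographic order on $(|V_4|, |V_3|)$, as proved in Lemmas~\ref{lemma05}, \ref{lemma06}, \ref{lemma08}, \ref{lemma10}, and~\ref{lemma12}. Because both coordinates lie in $\{1,\dots,n\}$, at most $O(|V|^2)$ improving updates can occur before the algorithm must halt. Each update, including the applicability tests (checking adjacency and the connectivity of candidate components via BFS/DFS on $G[V_3\setminus\{v\}]$ or $G[V_4\setminus\{u\}]$ and scanning edge lists), can be implemented in $O(|V|+|E|)$ time; the inner ``iterative'' loops of the Bridge operations collapse into a single BFS that alternately grows $V'_3$ and $V'_4$ component by component, examining each edge only $O(1)$ times, still $O(|V|+|E|)$ per Bridge. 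Multiplying the number of operations by the per-operation cost yields $O(|V|^2(|V|+|E|)) = O(|V|^2|E|)$, since $G$ is connected and hence $|E| \ge |V|-1$.

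The main obstacle is verifying the exhaustiveness of the case split and the flow control of Steps~3--5: in every configuration compatible with Eq.~(\ref{eq01}) and Lemma~\ref{lemma07}, either one of Theorems~\ref{thm03}--\ref{thm08} applies (yielding an explicit approximation bound) or one of Bridge-1, Bridge-2, Bridge-3, or a bipartition-then-Pull procedure is applicable (forcing another strict improvement). Tracing this dichotomy cleanly at the interfaces where the Bridge operations interact with the star structures around $u \in V_4$ and $v \in V_3$ guaranteed by items~3) and~4) of Lemma~\ref{lemma07} is the delicate part; however, all the needed structural lemmas have already been established in Section~4, so the remaining work is essentially case bookkeeping matched against the branches of the algorithm.
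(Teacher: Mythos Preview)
Your proposal is correct and follows essentially the same approach as the paper's own proof: bound the number of updates by $O(|V|^2)$ via the lexicographic ``better'' potential, bound each update by $O(|V|+|E|)$, and take the approximation ratio as the maximum over all terminating branches. Your case analysis is actually more explicit than the paper's (which simply lists the set $\{\tfrac{24}{13}, \tfrac{12}{7}, \tfrac{3}{2}, \tfrac{4}{3}\}$ and takes the max); note that you omitted the $\tfrac{3}{2}$ sub-case arising in Case~2 when $V_3$ is adjacent to $V_4$ only at $u$, but since $\tfrac{3}{2} < \tfrac{24}{13}$ this does not affect the conclusion.
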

\begin{proof}
Note that every local improvement operation can be executed in $O(|V|+|E|)$ time, via a graph traversal.
Since each operation achieves a better partition, that is, either the size of the largest part is reduced by at least $1$,
or the largest part is unchanged by the second largest part is reduced by at least $1$,
the total number of executed operations is in $O(|V|^2)$.
We conclude that the running time of the algorithm {\sc Approx-$4$} is in $O(|V|^2 |E|)$.
The performance ratio is taken as the maximum among $\{\frac {24}{13}, \frac {12}7, \frac 32, \frac 43\}$, which is $\frac {24}{13}$.
%\qed
\end{proof}

\section{Conclusions}
%==================================================================================================
We studied the {\sc $k$-BGP} problem to partition the vertex set of a given simple connected graph $G = (V, E)$ into $k$ parts,
such that the subgraph induced by each part is connected and the maximum cardinality of these $k$ parts is minimized.
The problem is NP-hard, and approximation algorithms were proposed for only $k = 2, 3$.
We focus on $k \ge 4$,
and present a $k/2$-approximation algorithm for {\sc $k$-BGP}, for any fixed $k \ge 3$, and an improved $24/13$-approximation for {\sc $4$-BGP}.
Along the way, we have designed several intuitive and interesting local improvement operations.

There is no any non-trivial lower bound on the approximation ratio for the {\sc $k$-BGP} problem,
except $6/5$ for the problem when $k$ is part of the input.
We feel that it could be challenging to design better approximation algorithms for {\sc $2$-BGP} and {\sc $3$-BGP};
but for {\sc $4$-BGP} we believe the parameters in the three Bridge operations can be adjusted better, though non-trivially,
leading to an $8/5$-approximation.
We leave it open on whether or not {\sc $k$-BGP} admits an $o(k)$-approximation.

\subsubsection*{Acknowledgements.}
%--------------------------------------------------------------------------------------------------
CY and AZ are supported by the NSFC Grants 11971139, 11771114 and 11571252;
they are also supported by the CSC Grants 201508330054 and 201908330090, respectively.
ZZC is supported by in part by the Grant-in-Aid for Scientific Research of the Ministry of Education, Science, Sports and Culture of Japan, under Grant No. 18K11183.
GL is supported by the NSERC Canada.

%==================================================================================================
%\bibliography{../BiBTeX/graph,../BiBTex/mypapers,../BiBTeX/general}
%==================================================================================================

%==================================================================================================
\end{document}